\newtheorem{theorem}{Theorem}[section]
\newtheorem{corollary}[theorem]{Corollary}
\newtheorem{lemma}[theorem]{Lemma}
\newtheorem{definition}[theorem]{Definition}
\newcommand{\dsone}{\mathds{1}}
\newcommand{\PREP}{\textrm{PREP}}
\newcommand{\SORT}{\textrm{SORT}}
\newcommand{\BE}{\textrm{BE}}
\newcommand{\SEL}{\textrm{SEL}}
\begin{document}
\title{Optimized quantum algorithms for simulating the Schwinger effect}
\author{Angus Kan}
\author{Jessica Lemieux}\thanks{jlemieux@psiquantum.com}
\author{Olga Okrut}
\author{Burak \c{S}ahino\u{g}lu}
\affiliation{PsiQuantum, 700 Hansen Way, Palo Alto, California 94304, USA}

\date{\today}

\begin{abstract}
The Schwinger model, which describes lattice quantum electrodynamics in $1+1$ space-time dimensions, provides a valuable framework to investigate fundamental aspects of quantum field theory, and a stepping stone towards non-Abelian gauge theories. Specifically, it enables the study of physically relevant dynamical processes, such as the nonperturbative particle-antiparticle pair production, known as the Schwinger effect.
In this work, we analyze the quantum computational resource requirements associated with simulating the Schwinger effect under two distinct scenarios: (1) a quench process, where the initial state is a simple product state of a non-interacting theory and then interactions are turned on at time $t=0$, and (2) a splitting (or scattering) process where two Gaussian states, peaked at given initial momenta, are shot away from (or towards) each other.
We explore different physical regimes in which the Schwinger effect is expected to be observable. These regimes are characterized by initial momenta and coupling strengths, as well as simulation parameters such as lattice size and electric-field cutoffs. 
Leveraging known rigorous bounds for electric-field cutoffs, we find that a reliable simulation of the Schwinger effect is provably possible at high cutoff scales.
Furthermore, we provide optimized circuit implementations of both the second-order Trotter formula and an interaction-picture algorithm based on the Dyson series to implement the time evolution. Our detailed resource estimates show the regimes in which the interaction-picture approach outperforms the Trotter approach, and vice versa.
The improved theoretical error bounds, optimized quantum circuit designs, and explicitly compiled subroutines developed in this study are broadly applicable to simulations of other lattice models in high-energy physics and beyond.

\end{abstract}

\maketitle

\section{Introduction}

In Schwinger's seminal work on (1+1)-dimensional quantum electrodynamics (QED)~\cite{schwinger1962gauge}, known as the Schwinger model, it was shown that the vector boson could be massive, which is dubbed dynamical mass generation.
Over the years, this model has been studied extensively with classical methods, both analytically~\cite{coleman1975charge, coleman1976more, banks1976strong, hamer1982massive, iso1990hamiltonian} and numerically~\cite{crewther1980eigenvalues, grady1987numerical, hamer1997series, kroger1998massive, hebenstreit2013simulating, banuls2013mass, banuls2016chiral, buyens2016confinement, buyens2017real, papaefstathiou2021density, dempsey2022discrete}. These studies have shown that it shares various properties with quantum chromodynamics (QCD), e.g. confinement and instantons.

Yet, non-perturbative behaviors, particularly dynamical behaviors, of the Schwinger model are hard to study using classical simulations, due to, e.g., the sign problem~\cite{PhysRevLett.94.170201,doi:10.1142/S0217751X16430077}.
It is widely expected that quantum simulation will unlock dynamical studies of the Schwinger model, as evidenced by the numerous proposals for quantum simulation of this model~\cite{hauke2013quantum, martinez2016real, muschik2017u, zache2018quantum, kharzeev2020real, shaw2020quantum, kan2022simulating, sakamoto2023endtoendcomplexitysimulatingschwinger, farrell2024quantum,kan2022lattice,rhodes2024exponential}; some of them involve small-scale, proof-of-concept simulations on error-prone, near-term quantum hardware, while a recent handful, such as Refs.~\cite{shaw2020quantum, kan2022simulating, sakamoto2023endtoendcomplexitysimulatingschwinger, kan2022lattice,rhodes2024exponential}, 
focus on algorithms with provable performance that could be implemented on a universal fault-tolerant quantum computer.
Similar to these recent works, we devise algorithms to study the real-time dynamics of the Schwinger model and develop optimized fault-tolerant circuit implementations.

Quantum algorithms for real-time quantum dynamics have been studied since the late 90's starting with the seminal work by Lloyd~\cite{lloyd1996universal}, which substantiated the proposal by Feynman~\cite{feynman1982simulating}.
This first method was based on a specific product formula called the Suzuki-Trotter formula~\cite{suzuki1990fractal}; product-formula algorithms have since been improved~\cite{berry2007efficient, childs2021theory, morales2022greatly}. Alternative to this approach are methods based on various polynomial series approximations~\cite{childs2012hamiltonian, berry2015hamiltonian, berry2015simulating, low2017optimal, low2019hamiltonian, low2018hamiltonian, kieferova2019simulating}. Each method has its own advantages and potential drawbacks.

While the best (time-independent) Hamiltonian simulation method achieves an additive gate complexity of $\mathcal{O}(\log(1/\epsilon))$~\cite{low2017optimal}, quantum simulations based on a $k$th order product formula have a multiplicative gate complexity scaling as $\mathcal{O}\left((1/\epsilon)^{2k}\right)$~\cite{berry2007efficient}. 
With respect to simulation time $t$, both methods can, in principle, achieve linear scaling in complexity. For product formulas, however, achieving this requires a large $k$, which increases the complexity of the corresponding quantum circuit. 
This, in turn, motivates the use of a higher-order (rather than lower-order) product formula for large  $t$ or $1/\epsilon$.
Other motivations include the fact that product formulas lend themselves more readily to additional physics-informed improvements, such as low-energy subspace~\cite{csahinouglu2021hamiltonian} or symmetry-based~\cite{tran2021faster} methods, and that error bounds based on the commutators of the Hamiltonian terms can be significantly lower than the worst-case analytical bounds~\cite{childs2019nearly, chen2024average}.

From the family of series-approximation-based algorithms, another approach that exploits the physics of the problem is based on the Dyson series~\cite{low2018hamiltonian}.
In particular, when a Hamiltonian is divided into a sum of two terms, such as $H= H_0 + V$ where $\|V\| \ll \|H_0\|$, $H_0$ is fast-forwardable~\cite{atia2017fast, gu2021fast}, and the quantum simulation is performed in the interaction picture of $H_0$, it has gate complexity $\tilde{\mathcal{O}}(\|V t\|)$ rather than $\tilde{\mathcal{O}}(\|H t\|)$ (where $\tilde{\mathcal{O}}$ is the complexity up to hidden logarithmic factors). In this work, we apply this interaction-picture based quantum algorithm and the second-order Suzuki-Trotter formula to analyze the complexity of simulating the Schwinger effect.

Choosing the more gate-efficient algorithm for a given, finite problem instance, however, is hard because in addition to analyzing the asymptotic scaling, it also requires a detailed analysis of constant factors, which is typically accomplished by explicit compilations of the candidate algorithms.
Ref.~\cite{shaw2020quantum} argued employing product formula rather than qubitization is likely to be advantageous due to the linear instead of quadratic scaling in terms of the electric field cutoff. Based on our analysis, we expect the electric field term to be much stronger than other terms when the Schwinger effect is to be observed. Therefore, we need a high electric field cutoff to accurately simulate the phenomenon~\cite{tong2021Provably}\footnote{Note that there are previous~\cite{jordan2012quantum} and more recent~\cite{ciavarella2025truncation} works that argue for lower field cutoffs.
These works rely on energy based truncation and Hilbert space fragmentation arguments, respectively.
While these heuristics may turn out to be approximately correct for certain observables, we base our assumption for the field cutoff on the rigorous bounds proved in Ref.~\cite{tong2021Provably}.}. 
Employing the interaction-picture based algorithm, we can asymptotically achieve an exponential improvement, i.e., a complexity scaling $O((\log \Lambda)^2)$, in terms of the electric field cutoff $\Lambda$.~\footnote{We note that in Ref.~\cite{rhodes2024exponential}, implementations of the interaction-picture based algorithms from~\cite{tong2021Provably} applied to U(1), SU(2) and SU(3) lattice gauge theories, beyond one spatial dimension, were reported. Some of the techniques we develop in this work can be used to improve the implementations from Ref.~\cite{rhodes2024exponential}, which we discuss in Section~\ref{sec:Conclusion}.}

Our main goal in this work is to detail this study; we provide optimized compilations of product-formula and interaction-picture algorithms for various instances over a range of parameters, and compare their performances in terms of T-gate and qubit counts. 
The considered range of parameters are determined such that the Schwinger effect, i.e., pair-production, can be reliably simulated. More specifically, if one simulates various instances of the Schwinger model over the range of parameters, an accurate simulation, in which the Schwinger effect are expected to be observed, is guaranteed. Furthermore, we optimize the cost of the interaction-picture algorithm mainly via improved and novel bounds on the discretization error of the Dyson series and improved compilation techniques, e.g., phase-gradient addition. 

The manuscript is organized as follows.
In Section~\ref{sec:Setting}, we describe the Hamiltonian and the experiments to be simulated. We also define the conditions for observing the Schwinger effect arising from lattice approximations and accuracy considerations.
Then, in Sections~\ref{sec:QSimProductFormula} and~\ref{sec:QSimIP} we summarize the quantum circuits and lay out their resource costs for product formula and interaction-picture based algorithms, respectively.
In Section~\ref{sec:Results} we give the detailed resource costs of each method.
In Section~\ref{subsec:Comparison}, in particular, we compare the asymptotic and numerical resource costs of two methods, and determine the method that has a lower resource cost for a wide range of parameter regimes.
Finally, Section~\ref{sec:Conclusion} concludes with a future outlook for improving the resource costs even further and extending the application of these methods to more complicated models in particle physics.
Appendices are reserved for deriving the final form of the lattice Hamiltonian that is used in Appendix~\ref{sec:MappingHamiltonianToQubits}, and further technical and resource cost analysis of the two methods are in Appendices~\ref{subsec:OverviewIP} and~\ref{app:trotter}.

\section{Schwinger model: The Hamiltonian, parameters, and observables of interest}\label{sec:Setting}
The Hamiltonian for the Schwinger model is given by~\cite{banks1976strong}
\begin{align}
\nonumber H_{\textrm{nat}}= \frac{g}{2\sqrt{x}} \sum^{N}_{r=1} \Big[& (E_r + \alpha)^2 + (-1)^r \frac{2\sqrt{x}m}{g} \psi^\dagger_r \psi_{r} \\
&  + x (\psi^\dagger_r U_r \psi_{r+1} + \textrm{h.c.} ) \Big],
\end{align}
in natural units ($\hbar= c= 1$).
There are a few user specified parameters in this Hamiltonian:
$g$, $m$, and $\alpha$ are respectively the bare coupling constant, bare mass, and the background electric field strength, independent of the specifications of the simulated lattice; $a$ and $N$ are the lattice spacing and the number of lattice sites, specified by the simulation lattice. 
We define the unitless quantities
\begin{align}\label{eq:ModelParametersFromPhysicalParamaters}
x= \frac{1}{g^2 a^2} \quad \textrm{and} \quad \mu= \frac{2\sqrt{x} m}{g},
\end{align}
which determine the strength of the mass and the kinetic term with respect to the electric field term. 
$H_{\textrm{nat}}$ acts on fermionic (electron/positron) and spin (electric field) degrees of freedom placed on the sites and links of the graph that defines the lattice.
To be more precise, $E_r$ and $U_r$ are operators that act on the electric field degrees of freedom placed on the link labeled by index $r$ that is placed right to the vertex labeled by $r$, and $\psi_r$ and $\psi^\dagger_r$ are annihilation and creation operators on the fermion degree of freedom on the lattice site labeled by $r$.
Below, we give definitions of these operators, after a slight massaging of the Hamiltonian.
As is standard in numerical simulations, we render the Hamiltonian $H$ and simulation time $t$ both unitless, i.e., 
\begin{align}\label{eq:SimulationHamiltonian}
\nonumber H= \sum^{N}_{r=1} \Big[&(E_r + \alpha)^2 + (-1)^r \mu \psi^\dagger_r \psi_{r}\\
&+ x (\psi^\dagger_r U_r \psi_{r+1} + \textrm{h.c.} )\Big],
\end{align}
where going back to quantities with units is straightforward, i.e., $E^{\textrm{nat}}= \sqrt{\frac{g}{2\sqrt{x}}} E$ and $t^{\textrm{nat}}= \frac{2\sqrt{x}}{g} t$.
For the sake of completeness, and for future sanity checks, we provide the units of scalars and operators:
\begin{align}
[g]= [m]= [a]^{-1}= [E^\textrm{nat}]^{2}= [\psi^\textrm{nat}]^2.
\end{align}
Note that, ultimately, the continuum limit needs to be taken, i.e., the limit $a \rightarrow 0$ or equivalently $N \rightarrow \infty$, which is done via extrapolation.
As $a$ is swept towards smaller values, $x$ takes higher values hence the model is driven towards the strong-coupling limit, and the physical result is obtained by extrapolating to $a \rightarrow 0$.
More generally, the parameters, $N$, $x$, $a$, $m/g$, and the cutoff $\Lambda$ of the electric field together with the constant external electric field $\alpha$, are parameters that need to be chosen such that the simulation provides reliable results for the desired phenomena. 
During this process, the model is simulated under different parameter settings, and the more resource-efficient simulation algorithm may change depending on the parameter regime.

Note that the lattice spacing $a$ does not appear directly in the Hamiltonian, yet appears in the definition of $x$. In fact, we can redefine the bare coupling constant $g$ and the bare mass $m$ such that they are rescaled by $a$, i.e., $\tilde{g}= ga$, $\tilde{m}= ma$. In this way, all quantities, including $\tilde{g}$ and $\tilde{m}$ are dimensionless, in addition to $x$. 
The Hamiltonian stays the same (for the sake of future convenience $\tilde{g}$ and $\tilde{m}$ are shown as $g$ and $m$ below) but with $x= 1/g^2$. 
Hence, for a set of end-to-end simulations, the parameters we need to specify are: $N, x, m/g, \alpha, \Lambda$, the total evolution time $t$, and the parameters used to prepare the initial state (such as particle momenta, position, etc.; see Secs.~\ref{subsec:cond_schwinger_effet},~\ref{subsec:cond_truncation} for details). 

Following the naming convention made in Ref.~\cite{shaw2020quantum}, we partition the Hamiltonian into three terms as
\begin{align}
H= H_{E} +  H_{\textrm{M}} + H_{I}
\end{align}
where
\begin{align}
H_E&= \sum^N_{r=1} (E_r + \alpha)^2 \\
H_{M} & = \mu \sum_r (-1)^r \psi^\dagger_r \psi_{r}\\
H_I & = x \sum_{r} (U_r \psi^\dagger_r \psi_{r+1} - U^\dagger_r\psi_r\psi_{r+1}^\dagger)
\end{align}
are what we call the electric field, mass, and the interaction (or kinetic energy) term, respectively.
A canonical choice of basis in the literature is as follows:
\begin{align}
E_r &= \sum^{\Lambda-1}_{\varepsilon_r=-\Lambda} \varepsilon_r \ketbra{\varepsilon_r}_r,\\
U_r& = \sum^{\Lambda-1}_{\varepsilon_r= -\Lambda}|\varepsilon+1  \mod [-\Lambda, \Lambda-1] \rangle \langle \varepsilon|_r,\\
\psi^\dagger_r & = \frac{X_r-iY_r}{2} \prod_{j=1}^{r-1} Z_j,
\end{align}
where a Jordan-Wigner transformation~\cite{jw} is assumed to transform the fermionic operators. 
See Appendix~\ref{sec:MappingHamiltonianToQubits} for details.

In what follows, we first determine bounds on $\alpha, \Lambda, N$ such that the Schwinger effect is expected to be observable. 
Then, we set up two experiments: a quench experiment and a scattering experiment, both of which aim to probe the dynamics of pair production. 
Next, we determine how to choose the lattice parameters so that spurious lattice effects are manageable.
Finally, we obtain the constraints over the remaining simulation parameters such that the simulation is expected to describe the continuum accurately.

\subsection{Conditions for observing the Schwinger effect}\label{subsec:cond_schwinger_effet}

The Schwinger effect refers to the phenomenon of electron-positron pair creation. For this to be observable, there needs to be enough energy exchange between the electric field term and the mass term, i.e., the energy that can be stored in the electric field should be high enough to create many particles when that energy is transferred to mass. This corresponds to the following condition:
\begin{align}
(\Lambda + \alpha)^2 - \alpha^2 \gg \mu ,
\end{align}
where the left hand side is the maximum electric field energy that can be stored in one link on the lattice, whereas the right hand side is the minimum energy of an electron-positron pair that are only one link apart from each other on the lattice (hence they are at the same position up to a lattice spacing $a$). 
This is equivalent to
\begin{align}
\Lambda(\Lambda + 2\alpha) \gg \mu.
\end{align}
For our experiments, we can carry the effect of nonzero $\alpha$ to the electric field values in the initial state.
Then, this is possible if $\Lambda$ is chosen such that
\begin{align}
\Lambda \gg \sqrt{\mu}.
\end{align}
Hence, we sweep through the parameter $\Lambda$ in the interval given as
\begin{align}\label{eq:SweepingLambda}
[100 \mu & \geq \Lambda^2 \geq 0.01\mu].
\end{align}
This determines the range of $\Lambda$ that we are interested in for the simulation of the Schwinger model.
We remark that smaller values ($< \sqrt{\mu}$) of $\Lambda$ may still give important information in order to determine the value of $\Lambda$ at which the Schwinger effect starts to emerge.

We propose two experiments, where we probe the number of electron-positron pairs as a function of time.
In both of these experiments the final observable and the time-evolution are the same, and they differ only in terms of the initial states.
In the first experiment, we start from an initial state that contains no particles but nonzero electric field values.
In the second experiment, we imagine a dressed electron-positron pair ($e^- - e^+$ pair) is created with particles distance $l$ apart, and initiated as moving away from each other with momentum $p_1$ and $p_2$, respectively (as depicted in Fig.~\ref{fig:experiment-2}).
In both cases, we let the time evolve with the full Hamiltonian and monitor the dynamics of the particle density.

\begin{figure}
    \centering
     \includegraphics[scale=0.45]{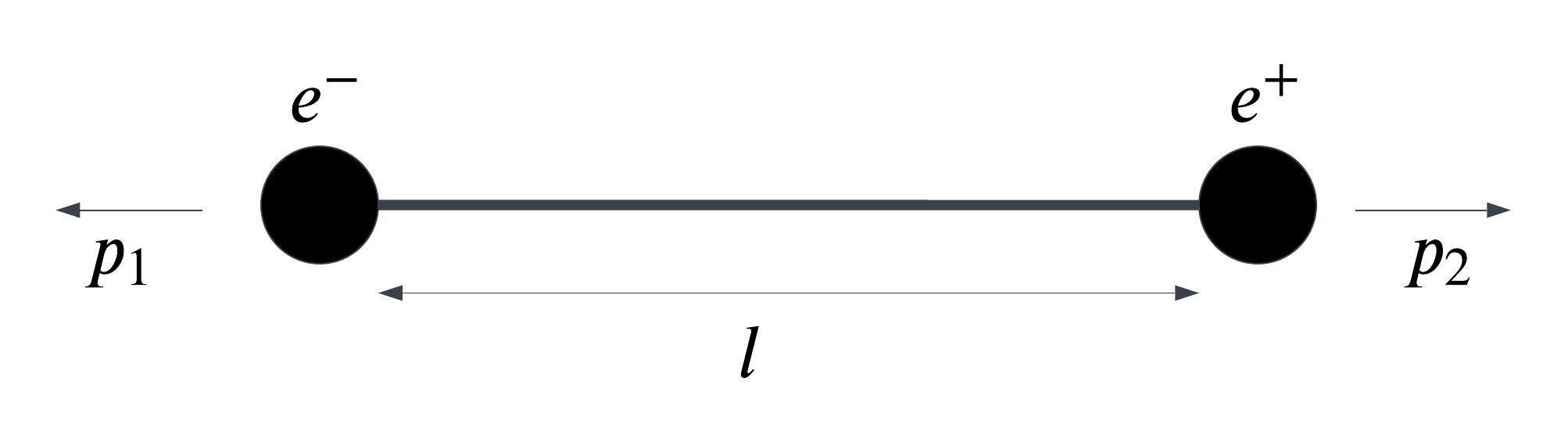}
    \caption{A schematic representation of the initial state for the second experiment: an electron-positron pair ($e^- - e^+$) is created that are $l$ apart, and moving away from each other with momentum $p_1$ and $p_2$, respectively.}
    \label{fig:experiment-2}
\end{figure}

\noindent \textbf{The first experiment:} In the first experiment, we start with the following initial state:
\begin{align}
\nonumber \ket{\psi(t=0)}= &\ket{0}_0  \ket{\gamma}_{0,1} \ket{1}_1 \ket{\gamma}_{1,2} \ket{0}_{2} \ket{\gamma}_{2,3} \ket{1}_{3}\\
&\ket{\gamma}_{3,4} \ket{0}_{4} \ldots \ket{1}_{N-2} \ket{\gamma}_{N-2,N-1} \ket{0}_{N-1}
\end{align}
where $\gamma$ is a fixed electric field value that is uniform in the whole system, and a subscript $(\cdot)_r$ denotes the site $r$ while a subscript $(\cdot)_{r,r+1}$ denotes the link that connects the sites $r$ and $r+1$. By definition, $|\gamma| < \Lambda$. Note that from now on, we will also assume that the background field $\alpha = 0$.
The value of $\gamma$ can be connected to the maximum particle density that is kinematically possible, which can be at most $1$. 

We wish to be able to have an initial state energy that can capture a pair production of about average (over whole system) particle density, say $\rho < 1$.
This would imply a choice of $\gamma$ such that $\gamma^2 \geq \rho \mu$.
Furthermore, by definition, $\gamma$ can be at most as big as the cutoff $\Lambda$, hence we obtain
\begin{align}
\rho \dfrac{\sqrt{x}m}{g}\leq \gamma^2 < \Lambda^2.
\end{align}
Note that this is only a rough estimate; we still sweep through the values of $x$ and $\frac{m}{g}$ as independent parameters.

Then, we evolve the system with the full Hamiltonian for time $t$ by applying the operator $e^{-iHt}$ to the initial state $\ket{\psi(t=0)}$. 
The minimum time we want to carry out this time evolution is $t_{\min}= \rho / x$, which is the average minimum time (i.e., fastest possible) that $N \rho$ many ($e^{-} - e^{+}$) pairs can be generated.
This is obtained by considering the time evolution with the term $H_I$.
In the small time limit and when this term is perturbative it can be argued, similar to Fermi's golden rule, that the rate at which particles are generated is proportional to the transition amplitude of $H_I$ between initial and intermediate states which generate a particle-antiparticle pair while decreasing the electric field. 
This overlap is given by the strength of the Hamiltonian term, $x$.
Ideally, we observe the system for times $t \gg t_{\min}$, and settle with a maximum evolution time $t_{\max}$ when the observable profile seemingly starts to repeat itself in time. 
Note that this quantity, by itself, is an interesting observable and worth measuring as a result of the quantum simulation.
To be more precise, we run the time-evolution for times
\begin{align}
\frac{\rho}{x}= t_{\min}< t < t_{\max}
\end{align}
where $t_{\max}$ is determined empirically as a result of the experiment.
Furthermore, we need to choose the system large enough, i.e., $N$ high enough, so that the boundary effects are negligible.
This can be studied rigorously using Lieb-Robinson bounds, such as given in Ref.~\cite{tong2018gauge} and Ref.~\cite{haah2021quantum} (see Lemma 5).
These results imply that for any observable (such as a defect, or any kind of operator that was not supposed to be there) on the boundary to affect considerably any other observable that is of distance $l$ from the boundary, a certain time needs to pass.
In other words, for a given time evolution $t$, we can bound a sufficient minimum system size, above which the boundary effects are only visible up to an error $\epsilon$.

Following the notation from Lemma 5 of Ref.~\cite{haah2021quantum}, where $\zeta_0$ sums over all $H_I(Z)$ in $H_I$ which acts nontrivially on lattice site $r$;
\begin{align}
    \zeta_0 = \max_r \sum_{Z: r \in Z} \|H_I(Z)\|= 4|x|,
\end{align}
 and given the extent of the initial state $N_0$, we want to solve for $l$ such that
\begin{align}
N_0 (8x|t|)^l/ l! \leq \epsilon. 
\end{align}

\noindent This results in
\begin{align}
l \geq \max\{\ln(N_0/\epsilon), e 8|x||t|\}.
\end{align}
Then for all
\begin{align}
N \geq N_0 + 2l,
\end{align}
the simulation on $N$ sites is accurate up to $\epsilon$ error.
$N_0$ is usually given as the extent of a few well-separated quasi-particles that the system is expected to support.
This can either be estimated via classical simulations or can also be calculated with various quantum simulations where $N_0$ is swept through a range of values.
The first observable we measure is the particle density given as
\begin{align}
O_{\textrm{dens}}:= \frac{1}{N} \sum^{N-1}_{r= 0} \frac{1}{2}(\dsone - (-1)^r \sigma_{Z}(r)).
\end{align}
This observable, measured at a given time, reveals the number of particles in the system, whether the time average of this observable reaches an equilibrium, and if so, the time in which it does. These observables are important to reveal the onset of \emph{thermalization} and other properties of the system.

The second observable measures the \emph{global} electric field polarization of the half-system, which is given as
\begin{align}
O_{\textrm{pol}}:= E_{\lfloor N/2\rfloor} - E_{0},
\end{align}
where the subindices $0$ and $\lfloor N/2 \rfloor$ refer to the first and middle links on the lattice.
This observable and its higher moments are interesting because they measure how many electrons versus positrons are on one half of the whole chain. 
This can be easily seen from Gauss' law, i.e., $(E_r - E_{r-1} - \rho_r) \ket{\psi}$ for any physical, i.e., \emph{gauge-invariant} state $\ket{\psi}$ where 
\begin{align}
    \rho_r = \frac{1 - (-1)^r}{2} -\psi^\dagger_r \psi_r
\end{align}
is the electric charge on site $r$~\footnote{For even sites, and occupied electron state gives rise to $\rho_r \ket{1}_r= +1 \ket{1}_r$ and for odd sites an occupied positron state gives rise to $\rho_r \ket{0}_r= -1 \ket{1}_r$, and for unoccupied state at even/odd sites $\rho_r$ results in $0$ eigenvalue.}. It is straightforward to see that the initial state fulfills Gauss' law and thus, is gauge-invariant.
Note that by construction of the initial state and the Hamiltonian, the number of electrons is always equal to the number of positrons across the whole system.
However, the positions of the electrons and positrons can fluctuate among the chain, which could lead to states with asymmetric electron/positron occupations with respect to the middle of the chain. 
This could be probed by $\langle O_{\textrm{pol}} \rangle$ or higher moments $\langle O^n_{\textrm{pol}} \rangle$ for $n\geq 2$, and gives information about the disparity of electron-positron distribution across the middle of the chain.

Note that we can also check the \emph{k-local} field polarization at any region centered at $k_0$ on the lattice for size $k$ ($k_0 > k/2$), which is given by 
\begin{align}
O_{\textrm{pol-k}}:=E_{k_0 + \frac{k}{2}} -  E_{k_0 - \frac{k}{2}}.
\end{align}
This would give information about the electron-positron disparity within $k$-local regions of the spatial lattice around the lattice point $k_0$.

\noindent \textbf{The second experiment:} In the second experiment, we start from the following initial state 
\begin{align}
\ket{\psi(t=0)}= O_{p_0}\ket{\psi_{\textrm{gr}}},
\end{align}
where $O_{p_0}$ is a unitary operator that initializes two gauge-invariant, quasi-particle wavepackets (a \emph{real} positron-electron pair) moving away from each other with momenta $p_0$, where $\ket{\psi_{\textrm{gr}}}$ is the ground state at the specified parameters of the given Hamiltonian.
Note that we do not know, a priori, what this operator looks like, but we rely on classical methods such as matrix product states~\cite{white1993density, byrnes2002density,hastings2007entropy, haegeman2013elementary, banuls2013mass, buyens2017real, abrahamsen2023entanglement, Davoudi2024scatteringwave} and we assume an efficient quantum circuit that can prepare this state~\cite{fomichev2024initial, Davoudi2024scatteringwave,lemieux2024quantum}.
The choice of $p_0$ can be made depending on $\rho$, the particle-antiparticle density that we are hoping to produce within the experiment.
Considering relativistic particles obeying Dirac equation, as momentum is proportional to the energy, $p_0 \sim \rho \mu N$ which corresponds to the mass of $\rho N$ many particle-antiparticle pairs.
Note that we should be careful that $p_0$ does not exceed a certain small fraction of $N$, because the lattice is only a good approximation of the continuum for $p \ll \pi/a \sim \pi N$ when the simulation cell is chosen to be $V \sim \mathcal{O}(1)$.
The minimum time $t_{\min}$ for which we want to carry out the time evolution is determined by the minimum time it takes for all energy in the initial momenta to be transferred into the electric field and additional particle-antiparticle pairs.
Using similar arguments to the first experiment, we can argue that the third term gives the rate of creating particle-antiparticle pairs, i.e., $x$.
Hence we pick $t_{\min}= \rho/x$.
We then run the time-evolution for times
\begin{align}
\frac{\rho}{x}= t_{\min}< t.
\end{align}
The observables are the same as in the first experiment, i.e., we measure the particle density and the global electric field polarization of the half-system.

We remark that more involved measurements following metrology could be done, such as those studied in Refs.~\cite{wu2020schwinger, wu2020optimal, fan2024quantum}
, if efficiently employed within the overall quantum algorithm.

\subsection{Conditions for accurate field truncation}\label{subsec:cond_truncation}

The previous sections give a bound on the initial cutoff $\Lambda_0$.
However, as time evolution takes place, the support of the quantum state might leak out of the subspace defined as the image of the projector $\Pi_{[-\Lambda_0, \Lambda_0]}$, which projects onto the electric field values between $[-\Lambda_0, \Lambda_0]$ at each link. 
Ref.~\cite{tong2021Provably} demonstrates how one can bound the leakage outside of the subspace, and proves rigorous expressions for choosing sufficiently high electric field cutoffs $\Lambda = \Lambda_0 +\Delta$.
$\Delta$ is chosen minimally such that the field cutoff error is bounded by $\epsilon_{\textrm{cutoff}}$.

Following Ref.~\cite{tong2021Provably}, we apply the long-time leakage bound (Theorem 3) to the Schwinger Hamiltonian. 
Our electric field term belongs to the  $r=0$ case, and we compute the quantity $\chi$ which appears in the norm of $H^{\nu}_I$ acting on $\nu$th site, in the subspace $\Pi_{[-\Lambda_0, \Lambda_0]}$, i.e.,
\begin{align}
    \|H^{(\nu)}_I \Pi_{[-\Lambda,\Lambda]}\| \leq \chi (\Lambda+1)^r \substack{(r=0) \\ =\mathrel{\mkern-3mu}=} \chi.
\end{align}
In more detail, 
\begin{align}
\|H^{(\nu)}_I \Pi_{[-\Lambda,\Lambda]}\| & \leq \sum_{\nu} \| x \left( U_{\nu} \sigma^+_{\nu}\sigma^-_{\nu+1} + U^\dagger_{\nu}\sigma^-_{\nu}\sigma^+_{\nu+1} \right)  \| \\
\nonumber     & = \frac{x}{4}\sum_{\nu} \|  \left( U_{\nu}+U^\dagger_{\nu} \right)\left(X_{\nu}X_{\nu+1}+Y_{\nu}Y_{\nu+1}\right)\\
&+i\left( U_{\nu}-U^\dagger_{\nu} \right)\left(X_{\nu}Y_{\nu+1}-Y_{\nu}X_{\nu+1}\right) \| \\
& \leq 2x,
\end{align}
which implies that $\chi= 2x$.
Using this together with Theorem 3 of Ref.~\cite{tong2021Provably}, we can now bound the leakage of the electric field to higher values as a result of the full time evolution as follows:
\begin{align}
\| \bar \Pi_{[-\Lambda(t),\Lambda(t)]} e^{-itH}\Pi_{[-\Lambda_0,\Lambda_0]}  \| & \leq \frac{\lceil 4xt \rceil }{2^{\Delta-1}\Delta!}\\ 
& < \frac{2\lceil 4xt \rceil}{\sqrt{2\pi \Delta}} \left(\frac{e}{2\Delta}\right)^\Delta.
\end{align}
Assuming that $\Delta > e$, we have :
\begin{align}
\| \bar \Pi_{[-\Lambda(t),\Lambda(t)]} e^{-itH}\Pi_{[-\Lambda_0,\Lambda_0]}  \|  \leq \frac{2\lceil 4xt\rceil}{\sqrt{2\pi e}} \left(\frac{1}{2}\right)^\Delta. 
\end{align}
To ensure that the above expression is smaller than $\epsilon_\textrm{cutoff}$, we require 
\begin{align}
\Delta \geq \max(3, \lceil \log_2(2\lceil 4xt \rceil /\epsilon_{\textrm{cutoff}}\sqrt{2\pi e}) \rceil). 
\end{align}
This implies that for all
\begin{align}
\Lambda(t) & = \Lambda_0 + \lceil 2\chi t \rceil (\Delta - 1) =   \Lambda_0 + \lceil 4xt \rceil (\Delta - 1) \\
&\geq \Lambda_0 +\lceil 4xt \rceil \max\left (2, \left \lceil \log_2\left(\frac{2\lceil 4xt \rceil}{\epsilon_{\textrm{cutoff}}\sqrt{2\pi e}}\right) \right \rceil-1\right )
\end{align}
the simulation of the truncated Hamiltonian is, within an $\epsilon_\textrm{cutoff}$ approximation, faithful to the original, infinite-dimensional Hamiltonian.

\begin{table*}
    \centering
    \begin{tabular}{|c|c|c|}
    \hline
    Symbol & Meaning & Expression \\
    \hline
    $\Lambda_0$ & Electric field cutoff at time $t=0$ & $100 \mu \geq \Lambda^2_0 \geq 0.01 \mu$ \\
    \hline
    $\Lambda(t)$ & Electric field cutoff at time $t$ & $\Lambda(t) \geq \Lambda_0 +\lceil 4xt \rceil \max \left(2,\left \lceil \log_2\left(\frac{\lceil 4xt \rceil}{\epsilon_{\textrm{cutoff}}\sqrt{2\pi e}}\right) \right\rceil -1\right)$\\
    \hline
    $\gamma$ & Initial electric field value in the 1st experiment & $\Lambda^2_0 > \gamma^2 \geq \rho\mu$ \\
    \hline
    $|p_0|$ & Initial momenta of wavepackets in the 2nd experiment & $(\pi N \gg) |p_0| \geq  0.1 \rho \mu N$ \\
    \hline
    $t_{\min}$ & Minimum time for observing Schwinger effect & $\rho/x$ \\
    \hline
    $N$ & Number of sites on the lattice & $N_0 + 2 \max\{\ln(N_0/\epsilon), e 8|x||t|\}$  \\
    \hline
    \end{tabular}
    \caption{The parameter choices for the quantum simulation.
    The free parameters that determine the interaction strengths in the model, namely $x$ and $\mu = 2\sqrt{x}\,m/g$, are swept from $0.1$ up to $10$.
    The parameter $\rho$, the density of electron-positron pairs expected to be produced at any time during the simulation, is kept free and implicitly depends on the evolution time $t$ and the properties of the initial state. 
    The parameter $\epsilon$ denotes the target precision.
    The parameter $N_0$ is the spatial extent of the initial state and depends on the type of experiment; it can be estimated as the extent of the support of a few quasi-particles.
    All other parameters, which are swept depending on $x$ and $\mu$, can be inferred from the above table. Notice that the provided expression for $N$ acts as a lower bound for reliable simulation. In practice, we often take the number of links ($N-1$ for non-periodic boundary conditions and $N$ for periodic boundary conditions) to be the next power of two for ease of implementation.}
    \label{tab:Parameters}
\end{table*}

In Table~\ref{tab:Parameters}, we provide the set of parameters that should be simulated to induce the Schwinger effect, and potentially access different phases of the Schwinger model.
Note that the model has two essential parameters: the ratio $\mu= 2\sqrt{x}m/g$ and the interaction term $x (= 1/g^2)$, appearing as in Eq.~\eqref{eq:SimulationHamiltonian}.
The simulation can be performed for the set of parameters that are swept through a range of values, such as $0.1 \leq \mu \leq 10$ and $0.1 \leq x \leq 10$, with increments $0.1$.
One also makes a choice for a density $\rho \leq 1$ that corresponds to electron-positron pairs that can exist in the quantum state at any point during the simulation.
Then, given a particular instance of $\mu$, $x$, $\rho$, and target accuracy $\epsilon$, other parameters, i.e., the cutoff $\Lambda$, the minimum time evolution $t_{\min}$, and the minimum system size required $N$, are deduced in terms of the interaction strengths $\mu$ and $x$, and the time evolution $t$.
We assume that these parameters are already chosen high enough such that the error of making them finite is negligible.
In fact, this is reflected in the choice of $\Lambda$ explicitly by using the results of Ref.~\cite{tong2021Provably}.

\section{Methods}

The two quantum simulation methods considered in this work are the second-order Suzuki–Trotter product formula ($\textrm{PF}_2$) and interaction-picture (IP)-based algorithms.
On the one hand, the $\textrm{PF}_2$ method allows one to exploit the Hamiltonian’s structure through the commutators of its individual terms, which has been shown to be particularly advantageous for local Hamiltonians~\cite{jordan2012quantum, childs2021theory}.
On the other hand, the performance of the IP-based method scales linearly with the strength of the interaction term. Depending on the specific Hamiltonian, the scaling with respect to its parameters can favor either method.

There are two key differences between the methods that depend on the parameters $t$ and $1/\epsilon$.
Asymptotically, the IP-based method scales as $\tilde{\mathcal{O}}\left(t\log_2(t/\epsilon)\right)$, whereas the second-order Suzuki–Trotter method scales as $\mathcal{O}\left(t^{3/2}/\epsilon^{1/2}\right)$.
Thus, for large $t$ and small $\epsilon$, the IP-based method is expected to outperform the $\textrm{PF}_2$ method. However, the overall picture remains nuanced once all parameters are considered. We examine the corresponding resource costs in detail in later sections.

In this section, we provide an overview of these two methods.

\subsection{Quantum simulation of the Schwinger model with Suzuki-Trotter method}\label{sec:QSimProductFormula}

The first fault-tolerant quantum algorithm for simulating the Schwinger model~\cite{shaw2020quantum} is based on the second-order Suzuki-Trotter formula~\cite{suzuki1990fractal}. Subsequently, the second-order Suzuki-Trotter formula was applied to simulate higher-dimensional lattice QED~\cite{kan2022lattice} and non-Abelian lattice gauge theories including QCD~\cite{kan2022simulating}. In what follows, we briefly review the second-order Suzuki-Trotter formula, and then we compare the algorithms in~\cite{shaw2020quantum} and~\cite{kan2022lattice}.

The symmetric, second-order Suzuki-Trotter formula~\cite{suzuki1990fractal} admits an ordered decomposition of a given Hamiltonian $H$: $H = \sum_{l=1}^K H_l$, where every summand $H_l$ is a Hermitian operator, and approximates the time-evolution operator according to
\begin{align}\label{eq:PF2_def}
    e^{-iH t} & \approx S(t) \equiv U(t/\mathsf{r})^\mathsf{r}  \\
    &\equiv \Bigg[\left ( \prod_{l=1}^{K-1} e^{-i H_l \frac{t}{2\mathsf{r}}}\right) e^{-i H_K \frac{t}{\mathsf{r}}}\left ( \prod_{l=K-1}^{1} e^{-i H_l \frac{t}{2\mathsf{r}}} \right)\Bigg]^\mathsf{r},
\end{align}
where $\prod_{l=1}^{N} A_l = A_1 A_2... A_N$, $\prod_{l=N}^{1} A_l = A_N A_{N-1}... A_1$, and $\mathsf{r}$ is the number of Trotter steps. The approximation error $\epsilon$, typically known as Trotter error, is given by~\cite{childs2021theory}
\begin{align}\label{eq:trot_err}
    \epsilon &= \left\| e^{-iHt} - S(t) \right\| \\ 
    &\leq \frac{t^3}{12\mathsf{r}^2}\sum_{i=1}^K \Bigg\| \Bigg[ \Bigg[H_i, \sum_{j=i+1}^K H_j\Bigg], \sum_{k=i+1}^K H_k\Bigg] \Bigg\| \nonumber \\ 
    & \quad + \frac{t^3}{24\mathsf{r}^2}\sum_{i=1}^K \Bigg\| \Bigg[ \Bigg[H_i,\sum_{j=i+1}^K H_j\Bigg], H_i\Bigg] \Bigg\| \equiv \frac{t^3 \rho}{\mathsf{r}^2},
\end{align}
where $||\cdot ||$ is the spectral norm. Then, to reach a target accuracy $\epsilon$, we choose the number of Trotter steps 
\begin{equation}
    \mathsf{r} = \left \lceil \sqrt{\frac{\rho t^3}{\epsilon}} \right\rceil.
\end{equation}
In practice, we implement the operation $W(t, \epsilon_{\rm{rot}})$, which is equivalent to $S(t)$ up to rotation synthesis error $\epsilon_{\rm{rot}}$. Furthermore, the gate cost of a simulation based on Suzuki-Trotter formula is roughly the gate cost of $U(t)$ multiplied by $\mathsf{r}$.

Note that the ordered decomposition adopted in~\cite{shaw2020quantum} and~\cite{kan2022lattice} are different. We choose to use the one in~\cite{kan2022lattice} because it incurs lower gate costs per Trotter step and a smaller Trotter error. Specifically, we implement the commuting $e^{-i H_E t}$ and $e^{-iH_M t}$ first, then we divide $H_I$ into four parts, i.e., $H_I = H_{1,e}+H_{1,o}+H_{2,e}+H_{2,o}$, before implementing their evolutions sequentially. The detailed resource cost of the resulting subroutines are explained Section~\ref{subsec:ResourceCostTrotter} and given in Table~\ref{tab:ResourceCostTrotter}.

\subsection{Quantum simulation of the Schwinger model with the interaction picture method}\label{sec:QSimIP}

A Hamiltonian simulation method based on the interaction picture has been established in Ref.~\cite{low2018hamiltonian}.
In this section, we give a brief overview of the algorithm with a more detailed account given in Appendix.~\ref{subsec:OverviewIP} and its detailed implementation in Appendix.~\ref{subsec:QuantumAlgorithmInteractionPicture}.

Given $H= H_0 + V$, the basic idea of the method relies on operating in the rotating frame defined by $e^{-iH_0 t}$, i.e., we make sure that at every stage, denoted by a time $s$, in the algorithm, the quantum state in the interaction picture is obtained by applying $e^{-iH_0 s}$ on the quantum state in the Schrödinger picture.
To be more precise, as straightforwardly shown in Lemma~\ref{lem:SchrodingerInteractionPictureTimeEvolutionRelation} or in a standard text~\cite{sakurai2020modern}:
\begin{align}
\ket{\psi(t)}& = e^{-iHt} \ket{\psi(t=0)}\\
 & = e^{-iH_0t}U_I(t) \ket{\psi(t=0)} ,
\end{align}
where
\begin{align}\label{eq:IPunitary}
U_I&(t) = \mathcal{T} \Big[ e^{-i \int^t_{0} V(s) ds} \Big]\\
 &= \sum_{n=0}^\infty \frac{(-i\hbar^{-1})^n}{n!} \int_{0}^{t}dt_1 ... dt_n \mathcal{T}\left[ \prod_{k=1}^{n} V(t_k) \right]
\end{align}

Beyond product formulas, quantum algorithms for Hamiltonian simulation use block-encodings of the Hamiltonian, which is a unitary matrix that contains $H/\alpha_H$ -- $H$ rescaled by a factor $\alpha_H \in \mathbb{R}$ -- in a chosen, accessible block; here, we follow what is typically done in the literature: the upper-left block, i.e., the subspace where the ancilla is in the 0 state. In general, the smaller the rescaling factor $\alpha_H$, the more efficient the algorithm.
Indeed, the algorithm based on the interaction picture, which we hereafter synonymously refer to as the IP-based algorithm, gives significant improvements when the block-encoding rescaling factors of $H_0$ and $V$ satisfy $\alpha_{H_0} \gg \alpha_V$ and $H_0$ is fast-forwardable.
Note that this is a common scenario, where $V$ is a slight perturbation of an easily diagonalizable Hamiltonian $H_0$.
For the Schwinger model, this is true when the electric field is strong.
In particular, we denote $H_0= H_E + H_M$ and $V= H_I$, and the complexity of the algorithm will scale only mildly (polylogarithmically) with the rescaling factor of $H_E$ and $H_M$. 

The IP quantum algorithm approximates the operator $U_I(t)$ given in Eq.~\eqref{eq:IPunitary}. 
In order to approximate this operator with a given precision $\epsilon$ in operator norm, we first expand the time-ordered integral via a Dyson series, and define two types of truncated discretized versions of it.
These are given in Definition~\ref{def:TruncDiscDysonSeries} and Definition~\ref{def:TruncDiscDysonSeriesWithCol}, in which \emph{collisions} are disregarded and included, respectively; the collisions arise when the time integral is approximated as a discretized Dyson series~\cite{low2018hamiltonian}.
$K$ denotes the truncation degree of the Dyson series, while $1/M$ denotes the discretization coarseness in the time integrals.
We analyze these approximations in Theorem~\ref{thm:ApproximatingInteractionPictureTimeEvolWithTruncatedDiscretizedDysonSeries} and in Theorem~\ref{thm:ApproximatingInteractionPictureTimeEvolWithTruncatedDiscretizedDysonSeriesWithCollision} and study the bounds for the truncation and discretization parameters $K$ and $M$, in terms of $\|V\|$, $\|H_0\|$, $t$ and $\epsilon$.
Corollary~\ref{cor:TimeEvolutionSchrPicWithTruncDiscDysonSeries} then combines the analysis for the final bounds for $K$ and $M$ that are used in the design of the IP-based algorithm. While our analysis for the algorithm without collisions leads to only minor improvements over previous analysis in~\cite{low2018hamiltonian}, our analysis for the algorithm with collisions is novel and indicates a similar performance compared to the one without collisions. Indeed, our new error analysis reveals that including collisions in the block-encoding leads to negligibly bigger approximation error compared. In addition, it is more expensive to exclude collisions than to include them, as excluding collisions requires additional subroutines to either flag the collisions out of the success subspace of the block-encoding, or to prepare a state that ensures no collisions will occur. Therefore, we find including collisions to be the more practical and economical choice.
\\

The IP-based quantum algorithm closely follows the LCU approach combined with oblivious amplitude amplification (OAA), as described in Refs.~\cite{berry2015simulating, kieferova2019simulating, low2018hamiltonian}. We divide the total time evolution $t$ into smaller pieces $t_0/\alpha_V$, and implement each piece sequentially.
The time step $t_0$ defines the rescaling factor of the block-encoding of the Dyson series~\cite{berry2015simulating}, which dictates the number of rounds of OAA one needs to perform per time step. In practice, $t_0$ is optimized such that each time slice requires only a single round of OAA. We construct a quantum circuit $W_{(K,M)}\left(\frac{t_0}{\alpha_V}, \frac{\epsilon'}{\mathsf{r}}\right)$ that approximates $U_{I}\left(\frac{t_0}{\alpha_V}\right)$ with an error that is at most $\frac{\epsilon'}{\mathsf{r}}$, where $\mathsf{r} = \left\lceil \frac{\alpha_V t}{t_0} \right\rceil$ is the number of time steps required to evolve for a total time $t$.

To enforce time ordering of the Dyson series, a natural choice is to implement coherently a sorting algorithm of $K$ time registers of size $\lceil \log_2 M \rceil$. This introduces a qubit overhead that is polynomial in both $K$ and $\lceil \log_2 M \rceil$. As an alternative, Ref.~\cite{low2018hamiltonian} proposes using only two time registers. This reduces the number of required qubits, but it increases the LCU rescaling factor, which in turn decreases the optimal choice of $t_0$. Specifically, with $K$ time registers and sorting, the optimal value $t_0 = \ln 2$, whereas the two-register implementation yields $t_0 = 0.5$. This corresponds to approximately a $39\%$ increase in the total number of time steps $\mathsf{r}$. We find that this increase leads to a higher overall T-count than the additional cost incurred by the sorting algorithm. Combined with the binary encoding of the truncation index, Ref.\cite{low2018hamiltonian} report a qubit scaling in $\lceil \log_2 K \rceil$ instead of $K$. In our case, since $K$ time registers are already present, the marginal qubit cost of using a unary encoding for the truncation index register is justified by the resulting reduction in T-count.

This quantum circuit is given in Lemma~\ref{lem:CircuitImplementationOfDysonSeries} which combines the block-encodings of $V$ and fast-forwarded implementations of $e^{-iH_0 s}$.
To be more precise, we execute the $W_{(K,M)}(t_0/\alpha_V, \epsilon'/\mathsf{r})$ subroutine $\mathsf{r} - 1 = \lfloor \alpha_V t / t_0 \rfloor$ times, and execute $W_{(K,M)}(t', \epsilon'/\mathsf{r})$ once for $t'= t - t_0 \lfloor \alpha_V t /t_0  \rfloor /\alpha_V$.
We finally apply $e^{-iH_0 t}$.
This implementation is given in detail in Corollary~\ref{thm:Main}.

We note that the piece-wise implementation of the LCU-based approach, leads to a the multiplicative $\ln(1/\epsilon)$ scaling rather than an additive scaling. The quantum computational resource cost of implementing the algorithm is given in detail in Table~\ref{tab:ResourceCostIP} and explained in Section~\ref{subsec:ResourceCostIP}. %

\section{Results}\label{sec:Results}

In this section, we summarize the resource cost analysis of implementing the $\textrm{PF}_2$ and IP-based algorithms. Both algorithms divide the full time evolution into smaller time steps that are repeated sequentially. These smaller time evolution operators are further decompose into subroutines, for which their costs are listed in Table~\ref{tab:ResourceCostTrotter} and Table~\ref{tab:ResourceCostIP}, respectively.
These cost expressions are then run for specific parameter regimes to give explicit T-gate and qubit counts for the performances of the algorithms.
We focus on the $\textrm{PF}_2$ approach in Sec.~\ref{subsec:ResourceCostTrotter}, and on interaction picture in Sec.~\ref{subsec:ResourceCostIP}. In Sec.~\ref{subsec:Comparison}, we give the asymptotic and numeric quantum resource estimates and  compare the results. Further details on the resource estimates can be found in Appendix~\ref{app:trotter} and ~\ref{subsec:CompilationAndResourceCount}.

\subsection{Resource costs for Suzuki-Trotter method}\label{subsec:ResourceCostTrotter}

As shown in Eq.~\eqref{eq:PF2_def}, the 2nd order Trotter formula approximates the time evolution operator as a product that involves the sub-evolutions $e^{-iH_E\tau}$, $e^{-iH_M\tau}$, and $e^{-iH_{1/2, e/o}\tau}$. Whenever possible, we merge contiguous sub-evolutions. Thus, the value of $\tau$ is either $t/2\mathsf{r}$ or $t/\mathsf{r}$. Below, we briefly explain our quantum circuit implementation of the sub-evolutions and provide their associated cost, which we summarize in Table~\ref{tab:ResourceCostTrotter}. To simplify notation, let's define $\eta= \log_2 2\Lambda$. We provide further details in Appendix~\ref{app:trotter}.

\begin{enumerate}
    \item $e^{-iH_E\tau}:$ We start from the implementation in~\cite{shaw2020quantum}: $e^{-iH_E\tau}$ is first expressed as a product of $e^{-iE^2 \tau}$ acting on every bosonic register. Then, each $e^{-iE^2 \tau}$ is decomposed into $\eta-2$ layers -- interleaved with layers of CNOT gates -- of at most $\eta$ $R_z$ rotations of the form $\otimes_i R_z(2^i \tau)$, as depicted in Fig.4 of Ref.~\cite{shaw2020quantum}. Instead of synthesizing every $R_z$ gate individually as in~\cite{shaw2020quantum}, we effect each $\otimes_i R_z(2^i \tau)$ using the phase catalysis circuit from~\cite{Wang2024optionpricingunder} shown in Fig.~(168), which is more gate-efficient, mainly because one only needs to synthesize a reusable catalyst state $\otimes_{i=0}^{2\eta-3} R_z(2^i \tau)\ket{+}$ once for the entire time-evolution circuit.
    \item $e^{-iH_M\tau}:$ Following~\cite{kan2022lattice}, we reduce $e^{-iH_M\tau}$ into $N$ same-angle $R_z$ rotations by applying 2 NOT gates each to half of the fermionic registers. Then, we implement the same-angle $R_z$ gates using catalyzed Hamming-weight phasing~\cite{kan2024resourceoptimized}, which improves upon the circuit in Fig.~(168) from Ref.~\cite{Wang2024optionpricingunder}. Note that~\cite{kan2022lattice} uses (uncatalyzed) Hamming-weight phasing to implement $e^{-iH_M\tau}$.
    \item $e^{-iH_{1/2, e/o}\tau}:$ Once again, we follow~\cite{kan2022lattice}, differing mainly in choosing catalyzed Hamming-weight phasing over (uncatalyzed) Hamming-weight phasing. More specifically, $e^{-iH_{1, e/o}\tau}$ can be expressed as a product of operators $e^{-i \tau x \sigma^+\sigma^+\sigma^- + h.c.}$, where $\sigma^\pm$ are the Pauli raising and lowering operators, and $e^{-iH_{2, e/o}\tau}$ is a product of $U e^{-i \tau x \sigma^+\sigma^+\sigma^- + h.c.}U^\dagger$, where $U$ is an adder circuit. We then compile every $e^{-i \tau x \sigma^+\sigma^+\sigma^- + h.c.}$ into a layer of two same-angle $R_z$ gates conjugated by temporary ANDs~\cite{gidney2018halving}, Hadamards and NOT gates~\cite{Wang2021resourceoptimized}. The resulting layer of $N$ same-angle $R_z$ gates in $e^{-iH_{1, e/o}\tau}$ is then implemented using catalyzed Hamming-weight phasing. $e^{-iH_{2, e/o}\tau}$ is implemented similarly, with the necessary, additional applications of adder circuits.
\end{enumerate}

\begin{table*}
    \centering
    \begin{tabular}{|c||c|c|c|c|}
    \hline
    Subroutine & $\#$ T-gates  & $\#$ rotations  & $\#$ ancilla &$\#$ total number of times \\
    \hline
    \hline
    $e^{-iH_M s/2}, e^{-iH_M s}$ & $4N-4 + 4\lfloor \log_2N \rfloor  $ & $1$ & $N + \lfloor \log_2N \rfloor + 1$ & $2, \mathsf{r} - 1$\\ \hline 
    $e^{-iH_E s/2},e^{-iH_E s}$ &  $2(N-1) (\eta^2 +\eta - 2)$ & $(N-1) \eta$ & $\eta$ & $2, \mathsf{r} - 1$\\ 
    \hline 
    $e^{-iH_{1, e/o} s/2}$ & $6N-4 + 4\lfloor \log_2N \rfloor $ & $1$ & $3N/2 + \lfloor \log_2N \rfloor$ & $2\mathsf{r}$\\ \hline 
    $e^{-iH_{2, e} s/2}, e^{-iH_{2, o} s}$ & $6N-4 + 4\lfloor \log_2N \rfloor + 8N\eta-8N$& $1$ & $\max(3N/2 + \lfloor \log_2N \rfloor, \eta)$ & $2\mathsf{r}, \mathsf{r}$\\  
    \hline 
    \end{tabular}
    \caption{The resource costs of each subroutine (in terms of the number of T-gates, the number of rotations, and the number of additional ancilla qubits) and the number of times they are used in the full quantum circuit. 
    The unit Trotter evolution time $s = t / \mathsf{r}$, where $r \in \mathbb{N}^+$ is predetermined such that $\|W_{s}(\epsilon/\mathsf{r}) - e^{-iHs}\| \leq \epsilon/\mathsf{r}$.
    Note that every operation is exact except the rotation syntheses, which are implemented precise enough to meet the total desired error~\cite{kliuchnikov2023shorter}.}
    \label{tab:ResourceCostTrotter}
\end{table*}

\subsection{Resource costs for the interaction picture method}\label{subsec:ResourceCostIP}

As explained in Section~\ref{sec:QSimIP}, the quantum circuit calls $W_{(K,M)}(t_0/\alpha, \epsilon'/\mathsf{r})$, $\mathsf{r}$ times, where $\mathsf{r}=\lceil t \alpha/t_0 \rceil$.
As we find, $\alpha= 2Nx$ which is the coefficient $1$-norm of the LCU expansion given in Eq.~\eqref{eq:LCUHint}.
In this section we study the gate cost of implementing $W_{(K,M)}(t_0/\alpha, \epsilon'/\mathsf{r})$, which is reported in Table~\ref{tab:ResourceCostIP}.
The quantum circuit employed to realize the operator $W_{(K,M)}(t_0/\alpha, \epsilon'/\mathsf{r})$ is given in Fig.~\ref{fig:IP}.
Its correctness is proven in Appendix~\ref{subsec:OverviewIP}, more precisely refer to Lemma~\ref{lem:CircuitImplementationOfDysonSeries} and Corollary~\ref{thm:Main}. 
More details of the circuit implementation can be found in Section~\ref{subsec:CompilationAndResourceCount}.
The quantum circuit given in Fig.~\ref{fig:IP} plays a key role:
it is used in OAA in order to implement $W_{(K,M)}(t_0/\alpha, \epsilon'/\mathsf{r})$.
We study the cost of the subroutines used in this quantum circuit, which are given Table~\ref{tab:ResourceCostIP}.

\begin{enumerate} 
    \item {\bf{$\PREP_{\sqrt{t^k_0/k!}}$:}} The first subroutine in PREP of the Dyson series is implemented by preparing a $k$-hot state:
    \begin{align}\label{eq:WOutputOfPREP}
     \text{PREP}_{\sqrt{t_0^k/k!}} \ket{0}^{\otimes K}  = \frac{1}{\beta}\sum_{k = 0}^{K-1} \sqrt{\frac{t_0^k}{k!}}\ket{1}^{\otimes k}\ket{0}^{\otimes K-k}.
    \end{align}

This can be obtained by the circuit in Fig.~\ref{fig:prep_k_hot}. 
The output register encodes the truncation index $k$, in $k$-hot unary representation.
It requires at most one rotation and $K-1$ controlled-rotations (which can be implemented with two rotations each).
It requires no additional ancilla and is called $6$ times, three times due to oblivious amplitude amplification (OAA), and twice within each OAA, as $\PREP$ and $\PREP^\dag$.

    \item{\bf{Additional $\PREP$:}} We now prepare the discretized time registers. 
    There are $K$ registers of size $\lceil \log_2M \rceil$ each.
    Assuming $M$ is an integer power of $2$, one can use controlled-Hadamards to prepare a uniform superposition of states on these time discretization registers. 
    In total, this costs $2 K \log_2 M $ T-gates.
    
    \item {\bf{$\SORT$:}} We then apply a bitonic sort~\cite{batcher1968sorting,muller1975bounds} which arrange the integer values in the $K$ time registers, each of size $\log_2 M$, in increasing order.
    This takes at most $\lfloor K/2 \rfloor (\lceil \log_2 K\rceil + 1)\left\lceil \log_2K\right\rceil$ comparators (with equality if $K$ is an integer power of $2$), and the same amount of C-SWAPs.
    Both the comparator and C-SWAPs require $\lceil \log_2 M \rceil$ Toffolis each. 
    This allocates at most $\lfloor K/2 \rfloor (\lceil \log_2 K\rceil + 1)\left\lceil \log_2K\right\rceil/2$ record qubits that will be deallocated at uncomputation. 
    An extra $\lceil \log_2 M \rceil$ temporary ancillae are used for the comparator. 
    The uncomputation can be done with a measurement-based procedure cheaper on average than the computation~\cite{luongo2024measurement}, whereas we assume calling this subroutine $6$ times: three times due to OAA, and twice within each OAA, as $\SORT$ and $\SORT^\dag$, since the uncomputation is at worst the same cost as the computation.
    
    \item {\bf{$\BE_{V/\alpha}$:}} The $k$th $\BE_{V/\alpha}$ that appears in the circuit, is singly controlled on the $k$th qubit of the state given in Eq.~\eqref{eq:WOutputOfPREP}.
    Note that this $k$ is related to the $k$th application of $V$ in the truncated-discretized Dyson series.
    We implement each $\BE_{V/\alpha}$ following the implementation of~\cite{rajput2022Hybridized},  which takes a linear combination of local terms that are of equal weight and implements $\PREP_{\BE}-\SEL_{\BE}-\PREP^\dag_{\BE}$.
    Importantly, the control of the block-encoding only controls certain parts of $\PREP_{\BE}$, and some operations of $\SEL_{\BE}$ (see Appendix~\ref{subsec:CompilationAndResourceCount} for more details).
    $\PREP_{\BE}$ prepares a $W$ state of $N-1$ qubits in a tensor product with $H^{\otimes 3}$. 
    $\SEL_{\BE}$ is implemented following Ref.~\cite{rajput2022Hybridized} (see Section~5.2, Fig.~4).
    By choosing the number of links ($N-1$ for open boundary condition, $N$ for periodic) as an integer power of two, the cost of $\BE_{V/\alpha}$ is $8N + 4(N - 1)(\eta - 1) - 1$ T-gates and $\eta - 1$ ancillae. 
    This subroutine is called $3K$ times in OAA. 
    
    \item {\bf{$e^{-iH_M (\cdot)}$:}} This operation is conditioned on the discretized time registers, $m_1$, $m_2$, \ldots, $m_K$.
    The implementation combines consecutive forward/backward time-evolutions and first computes $m'_k= m_k-m_{k-1}$ for $k \in [2,K]$, which are counted as additional gates below.
    Then controlled on $m'_k$, we implement $e^{-i H_M m'_k/(\alpha M)}$. We compare two implementations that we call PGA and Mult.
    In the PGA approach, one first computes the Hamming weight of the fermions, $n_f$. 
    Controlled on $m'_k$, we then perform a phase gradient addition with $n_f$.
    The cost of this is $N - 1 + \lceil\log_2M\rceil( \lfloor \log_2N\rfloor+1)$ Toffolis and $\lceil \log_2M \rceil$ rotations,
    $N + \lfloor \log_2(N)\rfloor + 1$ ancilla qubits. 
    In the Mult approach, we perform a multiplication of $m'_k$ and $n_f$ and than perform a phase gradient addition to add the result to the phase. 
    The cost of this approach is $N + 2\log_2M \lfloor\log_2 N\rfloor + 7\log_2M + 5\lfloor\log_2 N\rfloor + 4$ Toffolis, $1$ rotation and $N+ 2\log_2M + 2\lfloor\log_2 N\rfloor + 1$ ancillae.
    In both cases, the subroutines are called in total $3(K+1)$ many times per application of $W_{(K,M)}$.

    \item {\bf{$e^{-iH_E (\cdot)}$:}} The phase applied in this operation also depends on the discretized time registers, $m_1$, $m_2'$, \ldots, $m_{K-1}'$, $m_K$. Similar to the mass term, we also have two implementations, a PGA and a Mult approach. In the PGA case, each controlled time evolution is performed by a phase-gradient~\cite{sanders2020compilation} version of the implementation in Ref.~\cite{shaw2020quantum} (see their Section~3.2).
    The cost of this subroutine is
    $(N-1)\lceil\log_2M \rceil (\eta^2 +\eta - 2)/2$  Toffolis, and $(N-1)\lceil\log_2M \rceil\eta$ rotations. 
    An additional $\eta$ ancillae are used (excluding the catalyst state).
    In the Mult approach, the squared electric fields are added in a binary tree-fashion to reduce additional qubit costs.
    Then, the total electric term is multiplied by the time register. The result is then kicked-back as a phase via phase-gradient addition.
    This costs $N + 2\log_2M \lfloor\log_2 N\rfloor + 7\log_2M + 5\lfloor\log_2 N\rfloor + 4$ Toffolis, $1$ rotation and $N+ 2\log_2M + 2\lfloor\log_2 N\rfloor + 1$ ancillae.
    In both cases, the subroutines are called $K+1$ times in the circuit in Fig.~\ref{fig:IP}, which is repeated $3$ times via OAA, hence in total called $3(K+1)$ times.\\
    
    \item {\bf{Additional-$\SEL$:}} We count the assisting elementary gates that appear in Fig.~\ref{fig:IP} for the in-place subtraction/addition and the compression gadget used to reduce the ancillary cost of multiplying block-encodings. 
    For the latter, we used a unary implementation of the method discussed in~\cite{fang2023time}.
    There are $2(K-1)$ many subtractions/additions each of which costs $(\lceil \log_2 M\rceil - 1)$ Toffolis and $(\lceil \log_2 M\rceil - 1)$ temporary ancillae~\cite{gidney2018halving}.
    In parallel, the counter register is acted on via multi-controlled-NOTs, in total $K$ times, and each of them costs $N+1$ Toffolis.
    All of these are called $3$ times of OAA.\\
    
    \item {\bf{Reflection:}}
    Furthermore, there is an additional set of Toffolis that is used in OAA for reflections on three registers: the one that holds the value of $K$; the ones that hold the values of $m_1$, $m_2$, \ldots, $m_K$; and the counter register that is used for multiplying the block-encodings $\BE_{V/\alpha}$.
    In total, there are $K(2 +  \lceil \log_2 M \rceil )$ qubits required for the reflection.
    This costs $(2K + K \lceil \log_2 M \rceil -1)$  Toffolis and additional temporary ancillae.
    This subroutine is called twice for OAA.
\end{enumerate}

\begin{table*}
    \centering
    \resizebox{\textwidth}{!}{
    \begin{tabular}{|c|c|c|c|c|}
    \hline
    Subroutine & $\#$ T-gates  & $\#$ rotations  & $\#$ ancilla &$\#$ times \\
    \hline
    \hline
    $\PREP_{\sqrt{t^k_0/k!}}$ & - & $2K-1$& - & $6$ \\
    \hline
    Additional - $\PREP$ & $2 K \lceil \log_2M \rceil$ & $-$ & - & $6$ \\
 \hline
    $\SORT$ &  $4\lfloor K/2\rfloor(\lceil \log_2K\rceil + 1)\lceil \log_2 M \rceil$& - & $\lceil \log_2 M\rceil +\left \lfloor \frac{K}{2} \right \rfloor (\lceil\log_2K\rceil + 1)\frac{\left\lceil \log_2K\right\rceil}{2}$ & $6$\\
    \hline
    $\BE_{V/\alpha}$ & $8N + 4(N-1)(\eta-1) - 1$ & - & $\eta - 1$ & $3K$\\
    \hline
     $e^{-iH_M (\cdot)}$ (PGA)
     & $4N - 4 + 4\lceil\log_2M\rceil (\lfloor \log_2N\rfloor + 1)$ & $\lceil\log_2M\rceil$ &  $N + \lfloor \log_2N\rfloor + 1$  & \multirow{2}{*}{$3(K+1)$} \\
     $e^{-iH_M (\cdot)}$ (Mult)
     & $4[N + 2\log_2M \lfloor\log_2 N\rfloor + 7\log_2M + 5\lfloor\log_2 N\rfloor + 4]$ & 1 &  $N+ 2\log_2M + 2\lfloor\log_2 N\rfloor + 1$  & \\
    \hline
    $e^{-iH_E (\cdot)}$ (PGA) & $2(N-1)\lceil\log_2M \rceil (\eta^2 +\eta - 2)$ & $(N-1)\lceil\log_2M \rceil \eta$ & $\eta$ & \multirow{2}{*}{$3(K+1)$}\\
    $e^{-iH_E (\cdot)}$ (Mult)
     & $4N[4\eta^2 + 4\eta ]+ 4\log_2 M [4 \eta + 5 + 2 \lceil \log_2 N \rceil] + 20 \lceil \log_2 N \rceil - 8\eta^2 + 48\eta$ & 1 & $8 \eta  + 3\lceil \log_2N \rceil + 2\log_2M$ & \\
    \hline
 Additional - $\SEL$ & $ 8(\lceil \log_2M \rceil - 1)(K-1) + 4 K (N + 1) $ & $-$ & $\max(N+1, \lceil \log_2M\rceil - 1)$ & $3$ \\
 \hline
  Reflection & $8K+4K\lceil\log_2M\rceil-4$ & $-$ & $2K+K\lceil \log_2M\rceil - 1$ & $2$ \\
    \hline
    \end{tabular}
    }
    \caption{The resource costs of each subroutine (in terms of the number of T-gates, the number of rotations and the number of additional ancilla qubits) and the number of times they are used for constructing the quantum circuit $W_{(K,M)}(t_0/\alpha_V, \epsilon'/\mathsf{r})$ ($\approx e^{-iHt_0/\alpha}$). 
    To obtain $W(t,\epsilon')$ ($\approx e^{-iHt}$), one needs to repeat $W_{(K,M)}(t_0/\alpha, \epsilon'/\mathsf{r})$,  $\mathsf{r} = \left\lceil t \frac{\alpha}{t_0} \right\rceil$ many times. Notice that we assumed an implementation of each Toffoli as described in~\cite{gidney2018halving}, using 4 T-gates and one ancilla per Toffoli.
    Note that every operation is exact except the synthesis of rotation gates, which are implemented to our desired precision~\cite{kliuchnikov2023shorter}.}
    \label{tab:ResourceCostIP}
\end{table*}

\subsection{Asymptotic and numerical QREs and comparison}\label{subsec:Comparison}

In this section, we first give a comparison of the $\textrm{PF}_2$ method with the $\textrm{IP}$-based quantum algorithm in terms of the asymptotic scaling depending on the model and simulation parameters, i.e., $x$, $\mu$, $N$, $\Lambda$, $t$, and $\epsilon$.
Then, we give instantiations of the quantum resource estimates for parameter regimes for which one expects to observe Schwinger effect, see Table~\ref{tab:Parameters}.
Fig.~\ref{fig:trotter_vs_IP}, show the performance of the two methods, and can help with choosing the most efficient method to use.

$\textrm{PF}_2$ calls a unit Trotter evolution, $W_{s}(\epsilon/\mathsf{r})$, 
\begin{align}
    \mathsf{r}= \mathcal{O}(t^{3/2}\rho^{1/2}/\epsilon^{1/2})
\end{align}
many times, where $\rho$ is an upper bound on the norm of the third-order commutator.
When we employ the asymptotic expression for $\rho$, given as in Eq.~\eqref{eq:rho_asymptotic}, and observing that each $W_{t/\mathsf{r}}(\epsilon/\mathsf{r})$ costs $\tilde{\mathcal{O}}(N)$ (see Table~\ref{tab:ResourceCostTrotter}), we find the resulting cost given in Table~\ref{tab:ComparisonOfAsymptoticCosts} for $\textrm{PF}_2$.
On the other hand $\textrm{IP}$ calls a unit time evolution $W_{(K,M)}(t_0/\alpha,\epsilon'/\mathsf{r})$, $\mathsf{r} = \left\lceil \frac{t\alpha }{t_0} \right\rceil$ many times, where the rescaling factor 
\begin{align}
    \alpha &= \mathcal{O}(Nx), \\
    t_0 &= \mathcal{O}(1), \\
    \epsilon' &=\mathcal{O}(\epsilon).
\end{align}
The most expensive part of implementing $W_{t/\mathsf{r}}(\epsilon/\mathsf{r})$ is for implementing the time evolution of the electric term $e^{-iH_E(\cdot)}$ which is scales as $\mathcal{O}(N(\log_2M) \eta^2)$, as seen in Table~\ref{tab:ResourceCostIP}. 
$M$ can be found in Eq.~\eqref{thm:TruncDiscDysonApproximationErrorConditionM} which for $t = t_0/\alpha$ gives $\mathcal{O}(\log_2M) = \tilde{\mathcal{O}}(1)$.
Finally, recalling that $\eta= \log_2 (2\Lambda)$, we obtain the resulting cost given in Table~\ref{tab:ComparisonOfAsymptoticCosts} for $\textrm{IP}$.
This analysis shows that for larger time evolutions (large $t$), more precise simulation (large $1/\epsilon$), or high electric field cutoff $\Lambda$, $\textrm{IP}$ should be the method of choice, assuming that one is not limited by qubits.
On the other hand $\textrm{PF}_2$ performs better in terms of scaling in $N$, e.g. $\mathcal{O}(N^{3/2})$ vs. $\mathcal{O}(N^{2})$, with minimal ancilla overhead.
In terms of scaling with $x$, which can be a dominant factor in the continuum limit, $\textrm{PF}_2$ has a quadratic improvement.

\begin{table}
    \centering
    \begin{tabular}{|c||c|}
    \hline
     & Asymptotic gate complexity\\
    \hline
    $\textrm{PF}_2$ &  $ \tilde{\mathcal{O}}\left(\Lambda{\left(\frac{N^3 t^3 x}{\epsilon}\right)^{1/2}\left(1+ \frac{\mu^2}{\Lambda^2} + \frac{x}{\Lambda^2} + \frac{\mu}{\Lambda} + \frac{x \mu}{\Lambda^2} + \frac{x}{ \Lambda} \right)^{1/2}} \right) $ \\
    \hline
    IP & $ \mathcal{\tilde O} \left( N^2 x t (\log_2 \Lambda)^2 \log_2(1/\epsilon) \right)$ \\
    \hline
    \end{tabular}
    \caption{Asymptotic gate complexity for Schwinger model: 2nd order product formula ($\textrm{PF}_2$) vs. Interaction Picture (IP) of us in terms of $\Lambda, t, N, 1/\epsilon$, $\mu$ and $x$. Note that $\eta= \log_2 2\Lambda$.}
    \label{tab:ComparisonOfAsymptoticCosts}
\end{table}

Now, we compare the performance of different implementations of the IP-based algorithm, and the most-performant IP-based algorithm with the most performant $\textrm{PF}_2$-based algorithm.
Numerical results in Fig.~\ref{fig:with_vs_without_sort} and~\ref{fig:pga_vs_mult}  give the resource costs in terms of T count (y-axis) for varying time parameter $t$ (x-axis), and for $\epsilon= 0.1, 0.01, 0.001$ precision in the implementation of the time evolution operator.

First, in Fig.~\ref{fig:with_vs_without_sort} the choice of $t_0= 0.5$ (which does not use SORT) vs. $t_0= \ln 2$ (which uses SORT) are compared.
We find out that the choice of $t_0= \ln 2$ is more efficient, as expected, and the fact that it has the additional subroutine SORT does not drastically deteriorate the expected gain of $\sim 1.4 \times$.
This is, however, at the expense of hundred(s) of additional qubits.

\begin{figure}
    \centering
    \includegraphics[width=0.8\linewidth]{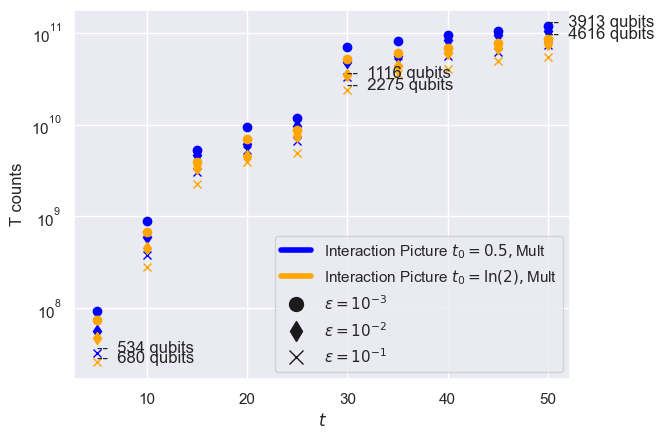}
    \caption{Resource cost comparison of two different implementations of IP-based algorithms.
    The blue data points correspond to the implementation without SORT, using $t_0= 0.5$, as described in Ref.~\cite{low2018hamiltonian}.
    The orange data points correspond to the implementation using the optimal $t_0= \ln(2)$ with SORT, which requires more qubits but fewer T-gates.
    The comparison is shown in terms of T counts and qubit counts for time parameters $t \in (t_\text{min}, 10t_\text{min})$, and accuracies $\epsilon \in \{0.001, 0.01, 0.1\}$. We fix $N_0 = 8$, $x = 0.1$, $\mu = 1$, $\Lambda_0 = \sqrt{10}\,\mu = \sqrt{10}$, and $t_\text{min} = 0.5/x$. The parameters $N$ and $\Lambda$ are chosen according to Tab.~\ref{tab:Parameters}.}
    \label{fig:with_vs_without_sort}
\end{figure}

Second, in Fig.~\ref{fig:pga_vs_mult}, we compare, for $t_0= \ln 2$, the performance of two different implementations that we called PGA and Mult.
Note that similar alternatives (PGA vs. Mult) are also available for the $\textrm{PF}_2$ based method.
However,  the best choice here is clearly the PGA approach, due to the fact that in the Trotter implementation, the time is a classically known value, and not a quantum one like in IP.
In the IP-based algorithm, this control-structure on PGA leads to a $\log_2 M$ multiplicative factor.
This turns out to be more expensive than the Mult approach which has additional arithmetic for computing the sum of the squares of electric fields.
In the $\textrm{PF}_2$ approach, the absence of control of the PGA makes it far more efficient.
In fact, our results, as seen in Fig.~\ref{fig:pga_vs_mult}, show that the Mult approach is an order of magnitude better in terms of T count at the expense of less than a hundred of additional qubits.

\begin{figure}
    \centering
    \includegraphics[width=0.8\linewidth]{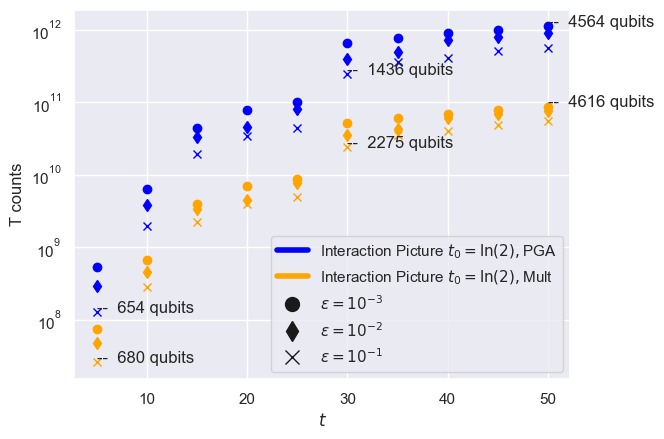}
    \caption{Resource cost comparison of two different compilations of the IP-based algorithm with $t_0 = \ln(2)$.
    The blue data points correspond to the compilations that use the PGA, and the orange data points correspond to the Mult approach.
    The comparison is shown in terms of T counts and qubit counts for time parameters $t \in (t_\text{min}, 10t_\text{min})$, and accuracies $\epsilon \in \{0.001, 0.01, 0.1\}$. We fix $N_0 = 8$, $x = 0.1$, $\mu = 1$, $\Lambda_0 = \sqrt{10}\,\mu = \sqrt{10}$, and $t_\text{min} = 0.5/x$. The parameters $N$ and $\Lambda$ are chosen according to Tab.~\ref{tab:Parameters}.}
    \label{fig:pga_vs_mult}
\end{figure}

Third, in Fig.~\ref{fig:trotter_vs_IP}, we compare the $\textrm{PF}_2$ vs. IP-based algorithm with the optimal implementation choices, e.g., for $t_0=\ln 2$ and the Mult approach.
We perform this comparison for different coupling strengths $x= \{0.1, 1, 10, 100\}$ and for $\epsilon = \{10^{-3},10^{-2},10^{-1}\}$, over $\{t_{\min}, 2t_{\min}, \ldots, 10 t_{\min}\}$ where $t_{\min}$ is an optimistic estimate for the shortest time for which we expect to observe the Schwinger effect.
Remark that the parameter regimes for different values of $x$, such as $t_{\min}$, are different, and tailored to optimistic estimates for observing the Schwinger effect.
For example, for large $x$ we expect a shorter time to observe the Schwinger effect given that is the strength of the particle-antiparticle creation term, hence a shorter $t_{\min}$ is chosen, e.g., as in Table~\ref{tab:Parameters}.
It is however not guaranteed that $t_{\min}$, or even $10 t_{\min}$, is sufficient to observe the phenomenon and one may need to carry out the simulation for longer times.
With this remark in mind, we recover the expected performances from the asymptotic formula given in Table~\ref{tab:ComparisonOfAsymptoticCosts}.
The IP-based algorithm is more advantageous for smaller $x$, longer times, and high accuracy simulations, compared to the $\textrm{PF}_2$ approach.

\begin{figure*}
    \centering
    \begin{subfigure}[t]{0.5\linewidth}
        \centering
        \includegraphics[width=.9\linewidth]{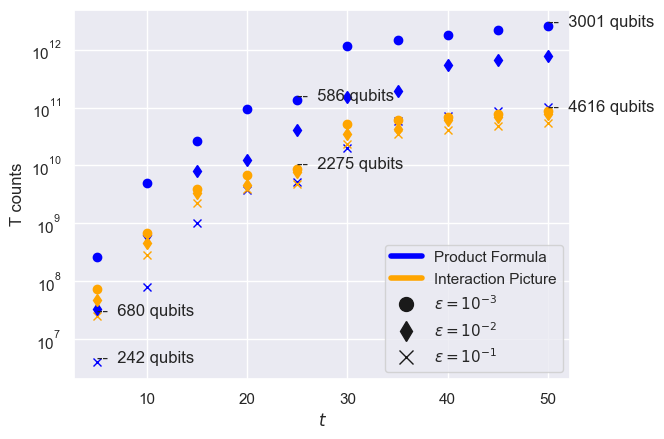}
        \caption{For $x=0.1$.}
    \end{subfigure}%
    ~ 
    \begin{subfigure}[t]{0.5\linewidth}
        \centering
        \includegraphics[width=.9\linewidth]{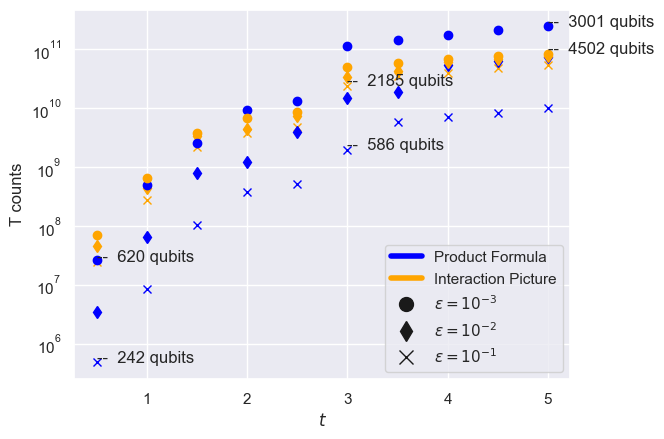}
        \caption{For $x=1$.}
    \end{subfigure}\\
    \centering
    \begin{subfigure}[t]{0.5\linewidth}
        \centering
        \includegraphics[width=.9\linewidth]{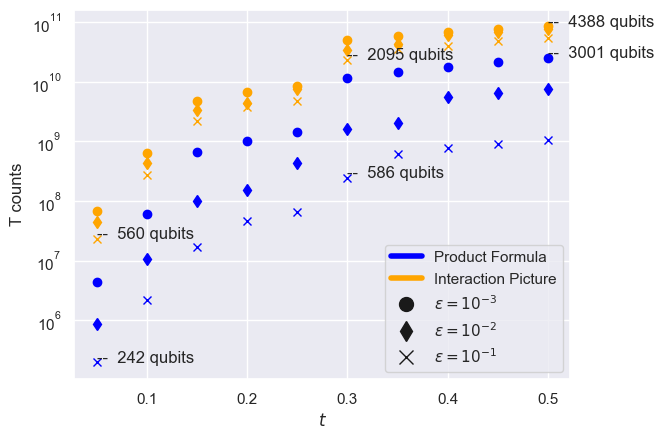}
        \caption{For $x=10$.}
    \end{subfigure}%
    ~ 
    \begin{subfigure}[t]{0.5\linewidth}
        \centering
        \includegraphics[width=.9\linewidth]{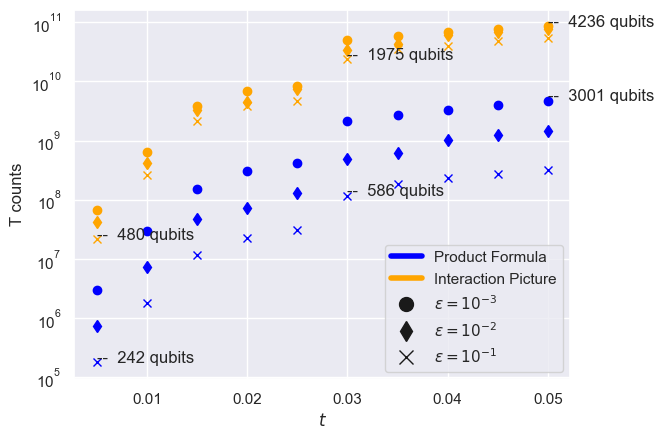}
        \caption{For $x=100$.}
    \end{subfigure}
    \caption{Resource cost comparison between the best implementations of IP-based algorithm and $\textrm{PF}_2$ approach. The comparison is given in terms of T-counts and qubit counts, for different time parameters $t \in (t_\text{min}, 10t_\text{min})$, and for accuracy $\epsilon= \{0.001, 0.01, 0.1\}$. We fix $N_0=8$, $\mu =1$, $\Lambda_0 = \sqrt{10}\mu = \sqrt{10}$, and $t_\text{min} = 0.5/x$. The parameters $N$ and $\Lambda$ are chosen according to Tab.~\ref{tab:Parameters}.}
    \label{fig:trotter_vs_IP}
\end{figure*}

It is clear that the $\textrm{PF}_2$ approach performs better for short simulation times, and when high accuracy is not required. 
While the simulations in this regime might not detect the Schwinger effect, they are the starting point for a set of numerical experiments.
In fact, for an early fault-tolerant quantum computer which is limited in not only the number of operations but also the number of qubits, the $\textrm{PF}_2$ approach is clearly favorable. Thus far, we have considered rigorous cutoffs that scale with the simulation time $t$, as we have explained in Sec.~\ref{sec:Setting}. However, we do not preclude the emergence of evidence, e.g., Ref.~\cite{ciavarella2025truncation}, which indicates that small cutoffs, possibly independent of $t$, may suffice for measuring certain observables. Motivated by this, we plot the resource costs for small, time-independent cutoffs in Fig.~\ref{fig:trotter_t_vs_qbts}.
These optimistic quantum resource estimates assume a simulation time $t = t_{\min} \propto 1/x$ for which the Schwinger effect may be observed though is not guaranteed.
This leads to quantum resource estimates that are higher in the weakly interacting regime than in the strongly interacting one, as shown in Fig.~\ref{fig:trotter_t_vs_qbts}.
Note that this will not be the case for a fixed time $t$, and will likely not be the case if one runs these simulations with varying time for probing the Schwinger effect or extracting another observable.

\begin{figure}
    \centering
    \includegraphics[width=0.8\linewidth]{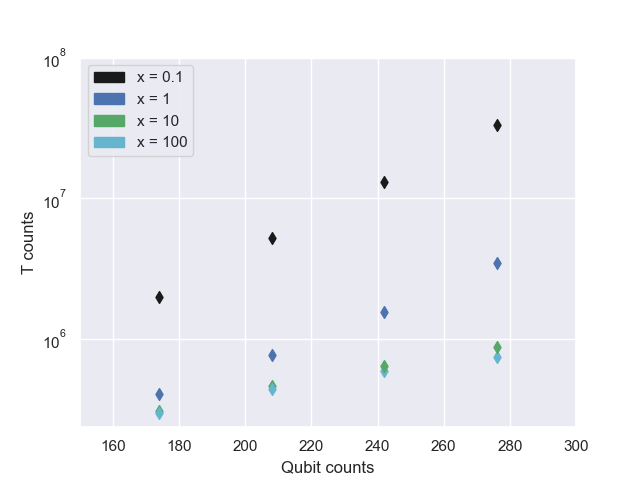}
    \caption{Resource cost of simulating the time evolution of the Schwinger model with a second-order product-formula method, for $x \in \{0.1, 1, 10, 100\}$ (indicated by the colors) and time-independent $\eta \in \{2, 3, 4, 5\}$ (from the smallest to the largest number of qubits).  We fix $\mu=1$, $N_0=8$ and $t = t_\text{min}= 0.5/x$. The number of fermions $N$ is chosen according to Tab.~\ref{tab:Parameters}.}
    \label{fig:trotter_t_vs_qbts}
\end{figure}

\section{Conclusions and Outlook}\label{sec:Conclusion}

In this work, we studied two algorithms and their various implementations for simulating the dynamics in Schwinger model.
Our findings demonstrate that while asymptotically the interaction picture based algorithm gives rise to the best performance, product formulas perform better for parameter regimes that lead to smaller electric field cutoffs and less stringent accuracy requirements.
We also studied the inter-dependency of the simulation parameters (such as $N$, $\Lambda$, $t_{\min}$, and optimal algorithm parameters) to the fundamental parameters of the model (such as the mass $\mu$ and the interaction strength $x$).

There are various directions to follow, both for increasing the performance of the algorithms and for extending the applications to more complicated models in the same or higher dimensions.
Recent works proved that quantum time evolution can be simulated more efficiently with product formula~\cite{csahinouglu2021hamiltonian, hejazi2024better, mizuta2025trotterization, tran2021faster} and quantum signal processing~\cite{zlokapa2024hamiltonian} when certain subspaces are considered, such as when the initial state is guaranteed to be in the low-energy subspace (with respect to the spectral norm of the Hamiltonian).
Product formulae approaches have been observed to perform better than the worst-case bounds~\cite{childs2018toward,chen2024average}, hence one may in fact hope for a better performance from those methods.
Furthermore, one can consider transforming the Hamiltonian to an equivalent one where the gauge degrees of freedom are eliminated.
This reduces the number of qubits, at the expense of more complicated, non-local Hamiltonian terms, yet it may still be more efficient to implement for early fault-tolerant quantum computers or quantum simulators~\cite{martinez2016real, nguyen2022digital, sakamoto2023endtoendcomplexitysimulatingschwinger, farrell2024quantum, cochran2024visualizing}. 
However, this strategy is less extensible beyond one spatial dimensions, where the number of constraints, i.e., Gauss' law, is strictly less than the number of gauge degrees of freedom and thus the gauge fields cannot be completely eliminated. Similar to our work, an interaction-picture algorithm based on Dyson series was applied to $U(1), SU(2)$ and $SU(3)$ lattice gauge theories in beyond $1+1$ spacetime dimensions in Ref.~\cite{rhodes2024exponential}. Compared to our algorithm, the LCU-based algorithm for U(1) lattice gauge theories from~\cite{rhodes2024exponential}, if applied to the Schwinger model, will lead a different resource estimates mainly because: (i) $H_M$ is not fast-forwarded, though it is ostensibly diagonal, and instead, it is block-encoded, thereby introducing an additional $\mu$-dependence in the gate complexity. (ii) The fermions are encoded using mappings that introduce an additive $\mathcal{O}(N)$ ancilla overhead compared to our choice of Jordan-Wigner mapping. (iii) In~\cite{rhodes2024exponential}, the authors optionally apply the HHKL algorithm~\cite{haah2021quantum} that provides an asymptotic advantage, which in practice, only manifests at large system sizes; the precise break-even point in system size for the Schwinger model, while interesting is beyond the scope of this work. Importantly, the improvements we made in the interaction-picture algorithm here, which are model-agnostic, e.g., better error bounds and more optimized circuit implementation, can be readily applied to improve the algorithms in Ref.~\cite{rhodes2024exponential}. In the future, it will be interesting to quantify such improvements in the context of non-Abelian theories, particularly QCD.

\subsection*{Author contributions and acknowledgments}

Authors and contributions are listed alphabetically. 
AK, JL and B\c{S} performed the technical analysis, conducted resource counts, and drafted the manuscript. 
JL and OO implemented the scripts that validate parts of the algorithm implementations. 
All authors contributed to discussions and code review. 
Authors acknowledge the support of Sukin Sim and William Pol with OO's internship and resource counts validation.
Additionally, the authors thank Kaiwen Gui, Mark Steudtner, and Sam Pallister for reviewing the manuscript, and are thankful for collaboration with our colleagues at PsiQuantum on other related topics.

\bibliographystyle{unsrt}
\bibliography{arxiv}

\begin{thebibliography}{10}

\bibitem{schwinger1962gauge}
Julian Schwinger.
\newblock Gauge invariance and mass. ii.
\newblock {\em Physical Review}, 128(5):2425, 1962.

\bibitem{coleman1975charge}
Sidney Coleman, R~Jackiw, and Leonard Susskind.
\newblock {Charge shielding and quark confinement in the massive Schwinger model}.
\newblock {\em Annals of Physics}, 93(1-2):267--275, 1975.

\bibitem{coleman1976more}
Sidney Coleman.
\newblock {More about the massive Schwinger model}.
\newblock {\em Annals of Physics}, 101(1):239--267, 1976.

\bibitem{banks1976strong}
T.~Banks, Leonard Susskind, and John Kogut.
\newblock Strong-coupling calculations of lattice gauge theories: (1 + 1)-dimensional exercises.
\newblock {\em Phys. Rev. D}, 13:1043--1053, Feb 1976.

\bibitem{hamer1982massive}
CJ~Hamer, J~Kogut, DP~Crewther, and MM~Mazzolini.
\newblock {The massive Schwinger model on a lattice: Background field, chiral symmetry and the string tension}.
\newblock {\em Nuclear Physics B}, 208(3):413--438, 1982.

\bibitem{iso1990hamiltonian}
Satoshi Iso and Hitoshi Murayama.
\newblock {Hamiltonian formulation of the Schwinger model: non-confinement and screening of the charge}.
\newblock {\em Progress of Theoretical Physics}, 84(1):142--163, 1990.

\bibitem{crewther1980eigenvalues}
DP~Crewther and CJ~Hamer.
\newblock {Eigenvalues for the massive Schwinger model from a finite-lattice Hamiltonian approach}.
\newblock {\em Nuclear Physics B}, 170(2):353--368, 1980.

\bibitem{grady1987numerical}
M~Grady.
\newblock Numerical study of the lattice massive schwinger model using a fast fermion monte carlo algorithm.
\newblock {\em Physical Review D}, 35(6):1961, 1987.

\bibitem{hamer1997series}
CJ~Hamer, Zheng Weihong, and J~Oitmaa.
\newblock {Series expansions for the massive Schwinger model in Hamiltonian lattice theory}.
\newblock {\em Physical Review D}, 56(1):55, 1997.

\bibitem{kroger1998massive}
Helmut Kr{\"o}ger and Norbert Scheu.
\newblock {The massive Schwinger model-a Hamiltonian lattice study in a fast moving frame}.
\newblock {\em Physics Letters B}, 429(1-2):58--63, 1998.

\bibitem{hebenstreit2013simulating}
Florian Hebenstreit, J{\"u}rgen Berges, and Daniil Gelfand.
\newblock Simulating fermion production in 1+1 dimensional qed.
\newblock {\em Physical Review D—Particles, Fields, Gravitation, and Cosmology}, 87(10):105006, 2013.

\bibitem{banuls2013mass}
Mari~Carmen Ba{\~n}uls, K~Cichy, J~Ignacio Cirac, and Karl Jansen.
\newblock {The mass spectrum of the Schwinger model with matrix product states}.
\newblock {\em Journal of High Energy Physics}, 2013(11):1--21, 2013.

\bibitem{banuls2016chiral}
Mari~Carmen Ba{\~n}uls, Krzysztof Cichy, Karl Jansen, and Hana Saito.
\newblock {Chiral condensate in the Schwinger model with matrix product operators}.
\newblock {\em Physical Review D}, 93(9):094512, 2016.

\bibitem{buyens2016confinement}
Boye Buyens, Jutho Haegeman, Henri Verschelde, Frank Verstraete, and Karel Van~Acoleyen.
\newblock Confinement and string breaking for qed 2 in the hamiltonian picture.
\newblock {\em Physical Review X}, 6(4):041040, 2016.

\bibitem{buyens2017real}
Boye Buyens, Jutho Haegeman, Florian Hebenstreit, Frank Verstraete, and Karel Van~Acoleyen.
\newblock {Real-time simulation of the Schwinger effect with matrix product states}.
\newblock {\em Physical Review D}, 96(11):114501, 2017.

\bibitem{papaefstathiou2021density}
Irene Papaefstathiou, Daniel Robaina, J~Ignacio Cirac, and Mari~Carmen Banuls.
\newblock {Density of states of the lattice Schwinger model}.
\newblock {\em Physical Review D}, 104(1):014514, 2021.

\bibitem{dempsey2022discrete}
Ross Dempsey, Igor~R Klebanov, Silviu~S Pufu, and Bernardo Zan.
\newblock {Discrete chiral symmetry and mass shift in the lattice Hamiltonian approach to the Schwinger model}.
\newblock {\em Physical Review Research}, 4(4):043133, 2022.

\bibitem{PhysRevLett.94.170201}
Matthias Troyer and Uwe-Jens Wiese.
\newblock {Computational Complexity and Fundamental Limitations to Fermionic Quantum Monte Carlo Simulations}.
\newblock {\em Phys. Rev. Lett.}, 94:170201, May 2005.

\bibitem{doi:10.1142/S0217751X16430077}
Christof Gattringer and Kurt Langfeld.
\newblock Approaches to the sign problem in lattice field theory.
\newblock {\em International Journal of Modern Physics A}, 31(22):1643007, 2016.

\bibitem{hauke2013quantum}
Philipp Hauke, David Marcos, Marcello Dalmonte, and Peter Zoller.
\newblock {Quantum simulation of a lattice Schwinger model in a chain of trapped ions}.
\newblock {\em Physical Review X}, 3(4):041018, 2013.

\bibitem{martinez2016real}
Esteban~A Martinez, Christine~A Muschik, Philipp Schindler, Daniel Nigg, Alexander Erhard, Markus Heyl, Philipp Hauke, Marcello Dalmonte, Thomas Monz, Peter Zoller, et~al.
\newblock Real-time dynamics of lattice gauge theories with a few-qubit quantum computer.
\newblock {\em Nature}, 534(7608):516--519, 2016.

\bibitem{muschik2017u}
Christine Muschik, Markus Heyl, Esteban Martinez, Thomas Monz, Philipp Schindler, Berit Vogell, Marcello Dalmonte, Philipp Hauke, Rainer Blatt, and Peter Zoller.
\newblock U (1) wilson lattice gauge theories in digital quantum simulators.
\newblock {\em New Journal of Physics}, 19(10):103020, 2017.

\bibitem{zache2018quantum}
Torsten~Victor Zache, Florian Hebenstreit, Fred Jendrzejewski, Markus~K Oberthaler, J{\"u}rgen Berges, and Philipp Hauke.
\newblock Quantum simulation of lattice gauge theories using wilson fermions.
\newblock {\em Quantum science and technology}, 3(3):034010, 2018.

\bibitem{kharzeev2020real}
Dmitri~E Kharzeev and Yuta Kikuchi.
\newblock Real-time chiral dynamics from a digital quantum simulation.
\newblock {\em Physical Review Research}, 2(2):023342, 2020.

\bibitem{shaw2020quantum}
Alexander~F Shaw, Pavel Lougovski, Jesse~R Stryker, and Nathan Wiebe.
\newblock {Quantum algorithms for simulating the lattice Schwinger model}.
\newblock {\em Quantum}, 4:306, 2020.

\bibitem{kan2022simulating}
Angus Kan and Yunseong Nam.
\newblock Simulating lattice quantum electrodynamics on a quantum computer.
\newblock {\em Quantum Science and Technology}, 8(1):015008, 2022.

\bibitem{sakamoto2023endtoendcomplexitysimulatingschwinger}
Kazuki Sakamoto, Hayata Morisaki, Junichi Haruna, Etsuko Itou, Keisuke Fujii, and Kosuke Mitarai.
\newblock {End-to-end complexity for simulating the Schwinger model on quantum computers}, 2023.

\bibitem{farrell2024quantum}
Roland~C Farrell, Marc Illa, Anthony~N Ciavarella, and Martin~J Savage.
\newblock {Quantum simulations of hadron dynamics in the Schwinger model using 112 qubits}.
\newblock {\em Physical Review D}, 109(11):114510, 2024.

\bibitem{kan2022lattice}
Angus Kan and Yunseong Nam.
\newblock Lattice quantum chromodynamics and electrodynamics on a universal quantum computer, 2022.

\bibitem{rhodes2024exponential}
Mason Rhodes, Michael Kreshchuk, and Shivesh Pathak.
\newblock Exponential improvements in the simulation of lattice gauge theories using near-optimal techniques, 2024.

\bibitem{lloyd1996universal}
Seth Lloyd.
\newblock Universal quantum simulators.
\newblock {\em Science}, 273(5278):1073--1078, 1996.

\bibitem{feynman1982simulating}
Richard~P Feynman.
\newblock Simulating physics with computers.
\newblock {\em International Journal of Theoretical Physics}, 21(6/7), 1982.

\bibitem{suzuki1990fractal}
Masuo Suzuki.
\newblock Fractal decomposition of exponential operators with applications to many-body theories and monte carlo simulations.
\newblock {\em Physics Letters A}, 146(6):319--323, 1990.

\bibitem{berry2007efficient}
Dominic~W Berry, Graeme Ahokas, Richard Cleve, and Barry~C Sanders.
\newblock Efficient quantum algorithms for simulating sparse hamiltonians.
\newblock {\em Communications in Mathematical Physics}, 270:359--371, 2007.

\bibitem{childs2021theory}
Andrew~M Childs, Yuan Su, Minh~C Tran, Nathan Wiebe, and Shuchen Zhu.
\newblock Theory of trotter error with commutator scaling.
\newblock {\em Physical Review X}, 11(1):011020, 2021.

\bibitem{morales2022greatly}
Mauro~ES Morales, Pedro Costa, Daniel~K Burgarth, Yuval~R Sanders, and Dominic~W Berry.
\newblock Greatly improved higher-order product formulae for quantum simulation.
\newblock {\em arXiv preprint arXiv:2210.15817}, 2022.

\bibitem{childs2012hamiltonian}
Andrew~M Childs and Nathan Wiebe.
\newblock Hamiltonian simulation using linear combinations of unitary operations.
\newblock {\em arXiv preprint arXiv:1202.5822}, 2012.

\bibitem{berry2015hamiltonian}
Dominic~W Berry, Andrew~M Childs, and Robin Kothari.
\newblock Hamiltonian simulation with nearly optimal dependence on all parameters.
\newblock In {\em 2015 IEEE 56th annual symposium on foundations of computer science}, pages 792--809. IEEE, 2015.

\bibitem{berry2015simulating}
Dominic~W Berry, Andrew~M Childs, Richard Cleve, Robin Kothari, and Rolando~D Somma.
\newblock {Simulating Hamiltonian dynamics with a truncated Taylor series}.
\newblock {\em Physical review letters}, 114(9):090502, 2015.

\bibitem{low2017optimal}
Guang~Hao Low and Isaac~L Chuang.
\newblock Optimal hamiltonian simulation by quantum signal processing.
\newblock {\em Physical review letters}, 118(1):010501, 2017.

\bibitem{low2019hamiltonian}
Guang~Hao Low and Isaac~L Chuang.
\newblock Hamiltonian simulation by qubitization.
\newblock {\em Quantum}, 3:163, 2019.

\bibitem{low2018hamiltonian}
Guang~Hao Low and Nathan Wiebe.
\newblock Hamiltonian simulation in the interaction picture.
\newblock {\em arXiv preprint arXiv:1805.00675}, 2018.

\bibitem{kieferova2019simulating}
M{\'a}ria Kieferov{\'a}, Artur Scherer, and Dominic~W Berry.
\newblock Simulating the dynamics of time-dependent hamiltonians with a truncated dyson series.
\newblock {\em Physical Review A}, 99(4):042314, 2019.

\bibitem{csahinouglu2021hamiltonian}
Burak {\c{S}}ahino{\u{g}}lu and Rolando~D Somma.
\newblock Hamiltonian simulation in the low-energy subspace.
\newblock {\em npj Quantum Information}, 7(1):119, 2021.

\bibitem{tran2021faster}
Minh~C Tran, Yuan Su, Daniel Carney, and Jacob~M Taylor.
\newblock Faster digital quantum simulation by symmetry protection.
\newblock {\em PRX Quantum}, 2(1):010323, 2021.

\bibitem{childs2019nearly}
Andrew~M Childs and Yuan Su.
\newblock Nearly optimal lattice simulation by product formulas.
\newblock {\em Physical review letters}, 123(5):050503, 2019.

\bibitem{chen2024average}
Chi-Fang Chen and Fernando~GSL Brand{\~a}o.
\newblock Average-case speedup for product formulas.
\newblock {\em Communications in Mathematical Physics}, 405(2):32, 2024.

\bibitem{atia2017fast}
Yosi Atia and Dorit Aharonov.
\newblock Fast-forwarding of hamiltonians and exponentially precise measurements.
\newblock {\em Nature communications}, 8(1):1572, 2017.

\bibitem{gu2021fast}
Shouzhen Gu, Rolando~D Somma, and Burak {\c{S}}ahino{\u{g}}lu.
\newblock Fast-forwarding quantum evolution.
\newblock {\em Quantum}, 5:577, 2021.

\bibitem{tong2021Provably}
Yu~Tong, Victor~V Albert, Jarrod~R McClean, John Preskill, and Yuan Su.
\newblock {Provably accurate simulation of gauge theories and bosonic systems}.
\newblock {\em arXiv}, 2021.

\bibitem{jordan2012quantum}
Stephen~P Jordan, Keith~SM Lee, and John Preskill.
\newblock Quantum algorithms for quantum field theories.
\newblock {\em Science}, 336(6085):1130--1133, 2012.

\bibitem{ciavarella2025truncation}
Anthony~N Ciavarella, Siddharth Hariprakash, Jad~C Halimeh, and Christian~W Bauer.
\newblock Truncation uncertainties for accurate quantum simulations of lattice gauge theories.
\newblock {\em arXiv preprint arXiv:2508.00061}, 2025.

\bibitem{jw}
P.~Jordan and E.~Wigner.
\newblock Über das paulische Äquivalenzverbot.
\newblock {\em Zeitschrift für Physik}, 47:631--651, 1928.

\bibitem{tong2018gauge}
David Tong.
\newblock Gauge theory.
\newblock {\em Lecture notes, DAMTP Cambridge}, 2018.

\bibitem{haah2021quantum}
Jeongwan Haah, Matthew~B Hastings, Robin Kothari, and Guang~Hao Low.
\newblock Quantum algorithm for simulating real time evolution of lattice hamiltonians.
\newblock {\em SIAM Journal on Computing}, 52(6):FOCS18--250, 2021.

\bibitem{white1993density}
Steven~R White.
\newblock Density-matrix algorithms for quantum renormalization groups.
\newblock {\em Physical review b}, 48(14):10345, 1993.

\bibitem{byrnes2002density}
TMR Byrnes, Pranav Sriganesh, Robert~J Bursill, and Chris~J Hamer.
\newblock {Density matrix renormalization group approach to the massive Schwinger model}.
\newblock {\em Physical Review D}, 66(1):013002, 2002.

\bibitem{hastings2007entropy}
Matthew~B Hastings.
\newblock Entropy and entanglement in quantum ground states.
\newblock {\em Physical Review B—Condensed Matter and Materials Physics}, 76(3):035114, 2007.

\bibitem{haegeman2013elementary}
Jutho Haegeman, Spyridon Michalakis, Bruno Nachtergaele, Tobias~J Osborne, Norbert Schuch, and Frank Verstraete.
\newblock Elementary excitations in gapped quantum spin systems.
\newblock {\em Physical review letters}, 111(8):080401, 2013.

\bibitem{abrahamsen2023entanglement}
Nilin Abrahamsen, Yu~Tong, Ning Bao, Yuan Su, and Nathan Wiebe.
\newblock Entanglement area law for one-dimensional gauge theories and bosonic systems.
\newblock {\em Physical Review A}, 108(4):042422, 2023.

\bibitem{Davoudi2024scatteringwave}
Zohreh Davoudi, Chung-Chun Hsieh, and Saurabh~V. Kadam.
\newblock Scattering wave packets of hadrons in gauge theories: {P}reparation on a quantum computer.
\newblock {\em {Quantum}}, 8:1520, November 2024.

\bibitem{fomichev2024initial}
Stepan Fomichev, Kasra Hejazi, Modjtaba~Shokrian Zini, Matthew Kiser, Joana Fraxanet, Pablo Antonio~Moreno Casares, Alain Delgado, Joonsuk Huh, Arne-Christian Voigt, Jonathan~E Mueller, et~al.
\newblock Initial state preparation for quantum chemistry on quantum computers.
\newblock {\em PRX Quantum}, 5(4):040339, 2024.

\bibitem{lemieux2024quantum}
Jessica Lemieux, Matteo Lostaglio, Sam Pallister, William Pol, Karthik Seetharam, Sukin Sim, and Burak {\c{S}}ahino{\u{g}}lu.
\newblock Quantum sampling algorithms for quantum state preparation and matrix block-encoding.
\newblock {\em arXiv preprint arXiv:2405.11436}, 2024.

\bibitem{wu2020schwinger}
Shu-Min Wu and Hao-Sheng Zeng.
\newblock {Schwinger effect of Gaussian correlations in constant electric fields}.
\newblock {\em Classical and Quantum Gravity}, 37(11):115003, 2020.

\bibitem{wu2020optimal}
Shu-Min Wu and Hao-Sheng Zeng.
\newblock {Optimal estimation of parameters for Schwinger effect}.
\newblock {\em Quantum Information Processing}, 19(9):294, 2020.

\bibitem{fan2024quantum}
Tingting Fan, Qianqian Liu, Jiliang Jing, and Jieci Wang.
\newblock {Quantum metrology of Schwinger effect}.
\newblock {\em The European Physical Journal C}, 84(9):896, 2024.

\bibitem{sakurai2020modern}
Jim Napolitano.
\newblock {\em Modern quantum mechanics}.
\newblock Cambridge University Press, 2020.

\bibitem{Wang2024optionpricingunder}
Guoming Wang and Angus Kan.
\newblock Option pricing under stochastic volatility on a quantum computer.
\newblock {\em {Quantum}}, 8:1504, October 2024.

\bibitem{kan2024resourceoptimized}
Angus Kan and Benjamin~CB Symons.
\newblock {Resource-optimized fault-tolerant simulation of the Fermi-Hubbard model and high-temperature superconductor models}.
\newblock {\em npj Quantum Information}, 11(1):138, 2025.

\bibitem{gidney2018halving}
Craig Gidney.
\newblock Halving the cost of quantum addition.
\newblock {\em Quantum}, 2:74, 2018.

\bibitem{Wang2021resourceoptimized}
Qingfeng Wang, Ming Li, Christopher Monroe, and Yunseong Nam.
\newblock Resource-{O}ptimized {F}ermionic {L}ocal-{H}amiltonian {S}imulation on a {Q}uantum {C}omputer for {Q}uantum {C}hemistry.
\newblock {\em {Quantum}}, 5:509, July 2021.

\bibitem{kliuchnikov2023shorter}
Vadym Kliuchnikov, Kristin Lauter, Romy Minko, Adam Paetznick, and Christophe Petit.
\newblock Shorter quantum circuits via single-qubit gate approximation.
\newblock {\em {Quantum}}, 7:1208, December 2023.

\bibitem{batcher1968sorting}
Kenneth~E Batcher.
\newblock Sorting networks and their applications.
\newblock In {\em Proceedings of the April 30--May 2, 1968, spring joint computer conference}, pages 307--314, 1968.

\bibitem{muller1975bounds}
David~E Muller and Franco~P Preparata.
\newblock Bounds to complexities of networks for sorting and for switching.
\newblock {\em Journal of the ACM (JACM)}, 22(2):195--201, 1975.

\bibitem{luongo2024measurement}
Alessandro Luongo, Antonio~Michele Miti, Varun Narasimhachar, and Adithya Sireesh.
\newblock Measurement-based uncomputation of quantum circuits for modular arithmetic.
\newblock {\em arXiv preprint arXiv:2407.20167}, 2024.

\bibitem{rajput2022Hybridized}
Abhishek Rajput, Alessandro Roggero, and Nathan Wiebe.
\newblock {Hybridized Methods for Quantum Simulation in the Interaction Picture}.
\newblock {\em Quantum}, 6:780, 2022.

\bibitem{sanders2020compilation}
Yuval~R Sanders, Dominic~W Berry, Pedro~CS Costa, Louis~W Tessler, Nathan Wiebe, Craig Gidney, Hartmut Neven, and Ryan Babbush.
\newblock Compilation of fault-tolerant quantum heuristics for combinatorial optimization.
\newblock {\em PRX quantum}, 1(2):020312, 2020.

\bibitem{fang2023time}
Di~Fang, Lin Lin, and Yu~Tong.
\newblock Time-marching based quantum solvers for time-dependent linear differential equations.
\newblock {\em Quantum}, 7:955, 2023.

\bibitem{hejazi2024better}
Kasra Hejazi, Modjtaba~Shokrian Zini, and Juan~Miguel Arrazola.
\newblock Better bounds for low-energy product formulas.
\newblock {\em arXiv preprint arXiv:2402.10362}, 2024.

\bibitem{mizuta2025trotterization}
Kaoru Mizuta and Tomotaka Kuwahara.
\newblock Trotterization is substantially efficient for low-energy states.
\newblock {\em arXiv preprint arXiv:2504.20746}, 2025.

\bibitem{zlokapa2024hamiltonian}
Alexander Zlokapa and Rolando~D Somma.
\newblock Hamiltonian simulation for low-energy states with optimal time dependence.
\newblock {\em Quantum}, 8:1449, 2024.

\bibitem{childs2018toward}
Andrew~M Childs, Dmitri Maslov, Yunseong Nam, Neil~J Ross, and Yuan Su.
\newblock Toward the first quantum simulation with quantum speedup.
\newblock {\em Proceedings of the National Academy of Sciences}, 115(38):9456--9461, 2018.

\bibitem{nguyen2022digital}
Nhung~H Nguyen, Minh~C Tran, Yingyue Zhu, Alaina~M Green, C~Huerta Alderete, Zohreh Davoudi, and Norbert~M Linke.
\newblock {Digital quantum simulation of the Schwinger model and symmetry protection with trapped ions}.
\newblock {\em PRX Quantum}, 3(2):020324, 2022.

\bibitem{cochran2024visualizing}
Tyler~A Cochran, Bernhard Jobst, Eliott Rosenberg, Yuri~D Lensky, Gaurav Gyawali, Norhan Eassa, Melissa Will, Dmitry Abanin, Rajeev Acharya, Laleh~Aghababaie Beni, et~al.
\newblock Visualizing dynamics of charges and strings in (2+ 1) d lattice gauge theories.
\newblock {\em arXiv preprint arXiv:2409.17142}, 2024.

\bibitem{kogut1975hamiltonian}
John Kogut and Leonard Susskind.
\newblock Hamiltonian formulation of wilson's lattice gauge theories.
\newblock {\em Physical Review D}, 11(2):395, 1975.

\bibitem{Dyson1949Matrix}
F.~J. Dyson.
\newblock {The $S$ Matrix in Quantum Electrodynamics}.
\newblock {\em Phys. Rev.}, 75:1736--1755, Jun 1949.

\bibitem{babbush2018encoding}
Ryan Babbush, Craig Gidney, Dominic~W Berry, Nathan Wiebe, Jarrod McClean, Alexandru Paler, Austin Fowler, and Hartmut Neven.
\newblock {Encoding electronic spectra in quantum circuits with linear T complexity}.
\newblock {\em Physical Review X}, 8(4):041015, 2018.

\bibitem{nam2019low}
Yunseong Nam and Dmitri Maslov.
\newblock Low-cost quantum circuits for classically intractable instances of the hamiltonian dynamics simulation problem.
\newblock {\em npj Quantum Information}, 5(1):44, 2019.

\bibitem{litinski2024quantum}
Daniel Litinski.
\newblock Quantum schoolbook multiplication with fewer toffoli gates.
\newblock {\em arXiv preprint arXiv:2410.00899}, 2024.

\bibitem{su2021fault}
Yuan Su, Dominic~W Berry, Nathan Wiebe, Nicholas Rubin, and Ryan Babbush.
\newblock Fault-tolerant quantum simulations of chemistry in first quantization.
\newblock {\em PRX Quantum}, 2(4):040332, 2021.

\bibitem{PhysRevA.87.022328}
Cody Jones.
\newblock Low-overhead constructions for the fault-tolerant toffoli gate.
\newblock {\em Phys. Rev. A}, 87:022328, Feb 2013.

\end{thebibliography}

\newpage

\onecolumngrid

\section{Mapping the Hamiltonian to Qubits: Hamiltonian as linear combinations of Pauli-products}\label{sec:MappingHamiltonianToQubits}

In this section, we introduce the Schwinger model again, and perform the explicit Jordan-Wigner mapping (fermion to qubit mapping), as could also be found in the literature~\cite{shaw2020quantum} which we closely follow here.
The Hamiltonian consists of three terms, the electric field term $H_E$, the gauge-matter interaction term $H_{I}$,and the mass (or matter) term $H_M$.  
The Hamiltonian is a sum of three terms, electric, interaction and mass terms, as
\begin{align}
H= H_{E} + H_{I} +  H_{M},
\end{align}
where
\begin{align}
H_E&= \sum^N_{r=1} E^2_r,\\
H_I & = x \sum_{r} (U_r \psi^\dagger_r \psi_{r+1} - U^\dagger_r\psi_r\psi_{r+1}^\dagger),\\
H_M & = \mu \sum_r (-1)^r \psi^\dagger_r \psi_{r}. 
\end{align}
\noindent Note that $r$ denotes the lattice site, $E_r$ and $U_r$ act on the $2\Lambda$ dimensional qudit placed on the link $(r, r+1)$. 
The Kogut-Susskind construction~\cite{kogut1975hamiltonian} is such that electrons live on even numbered sites, whereas positrons live on odd numbered sites.
On even sites $r$, $\psi_r, \psi^{\dagger}_r$ are annihilation/creation operators.
On odd sites $r$, $\psi_r, \psi^{\dagger}_r$ are creation/annihilation operators.
The electric term, $H_E$, acts like a potential term that energetically penalizes higher electric field values.
The interaction term, $H_I$, can create/annihilate electron-positron pairs on neighboring sites by incrementing/decrementing the electric field value.
The mass term, $H_M$, penalizes states with higher numbers of fermions, proportional to the bare mass $\mu$ of each particle.
$x$ and $\mu$ are the unitless coupling and mass parameters, given in terms of the physical parameters of the system as in Eq.~\eqref{eq:ModelParametersFromPhysicalParamaters}.

After truncating the Hilbert space, the electric field value is considered to be an element of $\mathbb{Z}_{2 \Lambda}$, i.e., it periodically wraps the electric field at a cutoff $\Lambda$, meaning the Hamiltonian term $H_I$ has off-diagonal elements at $\ketbra{\Lambda}{\Lambda - 1} + h.c.$: 
\begin{align}
H_E&= \sum^N_{r=1} \sum^{\Lambda-1}_{\varepsilon = -\Lambda} \varepsilon^2 \ketbra{\varepsilon}_r,\\
\nonumber H_I & = x \sum^N_{r=1} \bigg\{\ketbra{-\Lambda}{\Lambda-1}_r \otimes \psi^\dagger_r \psi_{r+1} - \ketbra{\Lambda-2}{ \Lambda-1}_r \otimes \psi_{r} \psi^\dagger_{r+1} + \ketbra{-\Lambda+1}{-\Lambda}_r \otimes \psi^\dagger_r \psi_{r+1} \\
& \hspace{2cm}- \ketbra{\Lambda-1}{-\Lambda}_r \otimes \psi_{r} \psi^\dagger_{r+1}  + \sum^{\Lambda-2}_{\varepsilon = -\Lambda+1} (\ketbra{\varepsilon+1}{\varepsilon}_r \otimes \psi^\dagger_r \psi_{r+1} - \ketbra{\varepsilon-1}{\varepsilon}  \otimes \psi_{r} \psi^\dagger_{r+1})\bigg\},\\
H_M & = \mu \sum^N_{r=1} (-1)^r \psi^\dagger_r \psi_{r}. 
\end{align}

We then use the Jordan-Wigner transformation to represent the fermionic operators. For even sites, $|0\rangle$ denotes the vacuum state (i.e., the absence of an electron), while $|1\rangle$ denotes the occupied state (i.e., the presence of an electron), while for odd sites the convention is reversed.
The Jordan-Wigner transform reads

\begin{align}\label{eq:JWdef}
\psi^\dagger_r = \frac{X_r-iY_r}{2} \prod_{j=1}^{r-1} Z_j
\end{align}
which leads to
\begin{align}\label{eq:JW1}
    \psi^\dagger_r\psi_{r+1} & =  \frac{1}{4}(-iY_rX_{r+1}+Y_rY_{r+1}+X_rX_{r+1}+iX_rY_{r+1})
\end{align}
and
\begin{align}\label{eq:JW2}
    \psi_r\psi^\dagger_{r+1} & =  \frac{1}{4}(-iY_rX_{r+1}-Y_rY_{r+1}-X_rX_{r+1}+iX_rY_{r+1}).
\end{align}
Hence, combining Eqs.~\eqref{eq:JW1} and~\eqref{eq:JW2} we obtain

\begin{align}
\label{eq:LCUHint}   H_I & = \frac{x}{4} \sum_{r=1}^{N} \bigg\{ \left(U_r+U_r^\dagger\right)\left(X_rX_{r+1}+Y_rY_{r+1}\right) + i\left(U_r - U^\dagger_r\right)\left(X_rY_{r+1}-Y_rX_{r+1}\right)\bigg\}\\
    \nonumber & = \frac{x}{4} \sum^N_{r=1} \bigg\{\left(|-\Lambda\rangle\langle \Lambda-1|_r  + |\Lambda-2\rangle\langle \Lambda-1|_r  +|-\Lambda+1\rangle\langle  \Lambda-1|_r + |\Lambda-1\rangle\langle -\Lambda|_r \right)\otimes \left(X_rX_{r+1}+Y_rY_{r+1}\right)   \\
    & \hspace{0.5cm} +i \left(|-\Lambda\rangle\langle \Lambda-1|_r  - |\Lambda-2\rangle\langle \Lambda-1|_r  +|-\Lambda+1\rangle\langle  \Lambda-1|_r - |\Lambda-1\rangle\langle -\Lambda|_r \right)\otimes \left(X_rY_{r+1}-Y_rX_{r+1}\right) \\
\nonumber & \hspace{0.5cm}+ \sum^{\Lambda-2}_{\varepsilon = -\Lambda+1} \bigg[ (|\varepsilon+1\rangle \langle \varepsilon|_r + |\varepsilon-1\rangle\langle\varepsilon|)   \otimes (X_rX_{r+1}+Y_rY_{r+1}) + i(|\varepsilon+1\rangle \langle \varepsilon|_r + |\varepsilon-1\rangle\langle\varepsilon|) \otimes (X_rY_{r+1}-Y_rX_{r+1})  \bigg ]\bigg\}.
\end{align}
For the mass term, we get

\begin{align}
    \psi^\dagger_r\psi_{r} & = \frac{1}{2}(\mathbb{I} - Z_r).
\end{align}
N.B. this differs from~\cite{shaw2020quantum} by the identity term. 

\begin{align}\label{eq:JW_HM}
    H_M & = \frac{\mu}{2} \sum^N_{r=1} (-1)^r (\mathbb{I} - Z_r).
\end{align}
This completes the fermion/boson to qubit mappings that we use in order to simulate the system on a quantum computer.

\section{Overview of interaction picture simulation}\label{subsec:OverviewIP}

In this section, we first show the equivalence of time evolution operators in the Schr\"odinger and interaction pictures, given as in Lemma~\ref{lem:SchrodingerInteractionPictureTimeEvolutionRelation}. 
The nontrivial part of the time evolution operator, $U_I(t)$, is the time-ordered exponential operator as defined in Eq.~\eqref{eq:InteractionPictureUnitary}.
We then give two operators $D_{(K,M)}(t)$ and $\tilde{D}_{(K,M)}(t)$ each of which approximates $U_I(t)$, by truncating and discretizing the series expansion of the time-ordered integral, i.e., the well-known Dyson series.
These two approximations are given in Definition~\ref{def:TruncDiscDysonSeries} and Definition~\ref{def:TruncDiscDysonSeriesWithCol}, respectively.
The former follows the operator in Ref.~\cite{low2018hamiltonian}, which eliminates the overlapping time intervals in the discretization of the time-ordered intergrals, while the latter does not.
Then, Theorem~\ref{thm:ApproximatingInteractionPictureTimeEvolWithTruncatedDiscretizedDysonSeries} and Theorem~\ref{thm:ApproximatingInteractionPictureTimeEvolWithTruncatedDiscretizedDysonSeriesWithCollision} provide, respectively, the sufficient minimal values of $K$ and $M$ such that the truncated-discretized Dyson series $D_{(K,M)}(t)$ and $\tilde{D}_{(K,M)}(t)$ result in an $\epsilon$-additive approximation of the time-ordered evolution operator $U_I(t)$.
Combined with Lemma~\ref{lem:SchrodingerInteractionPictureTimeEvolutionRelation}, this implies Corollary~\ref{cor:TimeEvolutionSchrPicWithTruncDiscDysonSeries}, which gives an approximation of the time evolution operator $e^{-iHt}$ in terms of a product of $e^{-iH_0 t}$ and the truncated-discretized Dyson series.
A similar result holds for $\tilde{D}_{(K,M)}(t)$.
Using these general results, we set up a time-evolution circuit given by $W(t, \epsilon)$ that approximates $e^{-iHt}$ up to an error $\epsilon$.
Lemma~\ref{lem:CircuitImplementationOfDysonSeries} and its proof study the detailed quantum circuit for an implementation of $U_I(t_0/\alpha)$ and states the resource cost estimates.
The final result is given in Corollary~\ref{thm:Main}.

The quantum algorithm includes a couple of approximations employed when implementing the subroutines.
While, in principle, we could distribute the overall error budget equally to the error sources, this is suboptimal compared to an optimized, uneven distribution of errors, which we will adopt in our implementation. 
Below we give a table of errors allocated to its sources.
The results are stated in terms of these errors.

\begin{table}
    \centering
    \begin{tabular}{|c|c|c|}
    \hline
    Symbol & Source of the error & Determines\\
    \hline
    $\epsilon_1$ & Truncation error of Dyson series for time $t_0/\alpha$  & $K$\\
    \hline
    $\epsilon_2$ & Discretization error of Dyson series $t_0/\alpha$ & $M$ \\
    \hline
    $\epsilon_3$ & Synthesizing single qubit rotations & $b_{\rm{rot}}$ (Bit precision of rotation angles)  \\
    \hline
    $\epsilon_{\textrm{cutoff}}$ & Electric field cutoff & $\Lambda$  \\
    \hline
    \end{tabular}
    \caption{Error symbols, their source, and what they determine.}
    \label{tab:ListOfTheErrors}
\end{table}

The total error is then given by
\begin{align}
\epsilon= \epsilon_{\textrm{cutoff}} +  \tilde{\epsilon}_1 + \tilde{\epsilon}_2 +\tilde{\epsilon}_3 = \epsilon_{\textrm{cutoff}} + \epsilon',
\end{align}
where
\begin{align}
\tilde{\epsilon}_1 = \mathsf{r} \epsilon_1, \: \tilde{\epsilon}_2= \mathsf{r} \epsilon_2, \: \tilde{\epsilon}_3= n_{\text{rot}} \epsilon_3.
\end{align}
Recall that $\mathsf{r}= \lceil t \alpha/t_0 \rceil$ is the number of times $U_I(t_0/\alpha))$ is called, and $n_{\text{rot}}$ is the number of single-qubit rotations.
Fixing these errors determines the parameters that feed into the description of the quantum algorithm, as summarized in Table~\ref{tab:ListOfTheErrors}. Dyson was the first to explicitly formalize the transformations connecting the Schr\"odinger and interaction pictures~\cite{Dyson1949Matrix}. The now-standard approach is summarized in Lemma~\ref{lem:SchrodingerInteractionPictureTimeEvolutionRelation}.

\begin{lemma}[Time evolution operator in the Schr\"odinger and in the interaction picture]\label{lem:SchrodingerInteractionPictureTimeEvolutionRelation}
Let $H=H_0 + V$ be a time-independent Hamiltonian with Hermitian $H_0$ and $V$. Then the time evolution operator in the Schr\"odinger picture, i.e.,  the operator that evolves $\ket{\psi(0)}$ to $\ket{\psi(t)}$, is 
\begin{align}
U(t)= e^{-iHt}
\end{align}
whereas the time evolution in the interaction picture, i.e., the operator that evolves $\ket{\psi(0)}_I$ to $\ket{\psi(t)}_I$, is
\begin{align}\label{eq:InteractionPictureUnitary}
U_I(t)= \mathcal{T}\left[ e^{-i\int^t_0 V(s) ds} \right]
\end{align}
where $\ket{\psi(t)}_I:= e^{iH_0t}\ket{\psi}$ for any given $\ket{\psi} \in \mathcal{H}$ and $t \in \mathbb{R}$, and $V(s):= e^{iH_0 s} V e^{-iH_0 s}$ for any $s \in \mathbb{R}$. $\mathcal{T}$ denotes the time-ordering operator.
The time evolution operator in the Schr\"odinger and interaction pictures are related as follows:

\begin{align}\label{eq:SchrodingerInteractionPictureTimeEvolutionRelation}
U(t)= e^{-iH_0 t} U_{I}(t),
\end{align}
in the sense that
\begin{align}
|\psi(t)\rangle= U(t) |\psi(t=0)\rangle= e^{-iH_0 t} U_{I}(t)|\psi(t=0)\rangle_I,
\end{align}
\end{lemma}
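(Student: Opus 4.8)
The plan is to establish the operator identity \eqref{eq:SchrodingerInteractionPictureTimeEvolutionRelation} directly, from which the state-level statement follows immediately, and then to identify the resulting object with the time-ordered exponential of Eq.~\eqref{eq:InteractionPictureUnitary}. First I would \emph{define} the candidate interaction-picture propagator by
\begin{align}
\tilde{U}_I(t) := e^{iH_0 t}\, U(t) = e^{iH_0 t}\, e^{-iHt},
\end{align}
so that $U(t) = e^{-iH_0 t}\tilde{U}_I(t)$ holds \emph{by construction}. The entire content of the lemma then reduces to proving that $\tilde{U}_I(t)$ equals the time-ordered exponential $U_I(t) = \mathcal{T}\big[e^{-i\int_0^t V(s)ds}\big]$.

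To this end, I would differentiate $\tilde{U}_I(t)$ in $t$. Using $H = H_0 + V$, the product rule, and the fact that $H_0$ commutes with $e^{iH_0 t}$,
\begin{align}
\frac{d}{dt}\tilde{U}_I(t) &= iH_0\, e^{iH_0 t}e^{-iHt} + e^{iH_0 t}(-iH)e^{-iHt} \nonumber \\
&= e^{iH_0 t}\big(iH_0 - iH\big)e^{-iHt} = -i\, e^{iH_0 t}\,V\, e^{-iHt}.
\end{align}
Inserting $\dsone = e^{-iH_0 t}e^{iH_0 t}$ between $V$ and $e^{-iHt}$, and recalling $V(t) = e^{iH_0 t}V e^{-iH_0 t}$, this rearranges into the linear operator equation of motion
\begin{align}
\frac{d}{dt}\tilde{U}_I(t) = -i\, V(t)\,\tilde{U}_I(t), \qquad \tilde{U}_I(0) = \dsone.
\end{align}

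The remaining and main step is to show that the time-ordered exponential solves precisely this initial value problem. I would expand $\mathcal{T}\big[e^{-i\int_0^t V(s)ds}\big]$ as the Dyson series of Eq.~\eqref{eq:IPunitary} and verify term by term that differentiating the $n$th nested, time-ordered integral in the upper limit reproduces $-iV(t)$ times the $(n-1)$th term, while the $n=0$ term supplies $\dsone$ at $t=0$. The time-ordering is essential here: it guarantees that the outermost integration variable is the largest and therefore equals $t$ upon differentiating the upper limit, which is exactly what produces the prefactor $V(t)$ on the left. Since $V(s)$ is a unitary conjugation of $V$ and hence $\|V(s)\| = \|V\|$ is bounded uniformly in $s$, the series converges in operator norm and term-by-term differentiation is justified.

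The hard part is exactly this last verification: because $V(s)$ at distinct times need not commute, one cannot naively write $\frac{d}{dt}e^{-i\int_0^t V(s)ds} = -iV(t)\,e^{-i\int_0^t V(s)ds}$, and the careful bookkeeping of how the time-ordering operator interacts with differentiation of the upper limit is where the genuine work lies. Once uniqueness of solutions to the linear operator ODE is invoked---guaranteed by the boundedness of $V(t)$ through a Picard--Lindel\"of / Gr\"onwall argument---I conclude $\tilde{U}_I(t) = U_I(t)$, and therefore $U(t) = e^{-iH_0 t} U_I(t)$. Finally, applying both sides to $\ket{\psi(0)}$ and using $\ket{\psi(0)}_I = e^{iH_0 \cdot 0}\ket{\psi(0)} = \ket{\psi(0)}$ recovers the stated state-level identity $\ket{\psi(t)} = e^{-iH_0 t} U_I(t)\ket{\psi(0)}_I$.
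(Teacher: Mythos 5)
Your proposal is correct and follows essentially the same route as the paper: both differentiate to obtain the interaction-picture equation of motion with generator $-i V(t)$ and then identify its solution with the time-ordered exponential $\mathcal{T}\big[e^{-i\int_0^t V(s)\,ds}\big]$. The only difference is presentational---you work at the operator level with $\tilde{U}_I(t)=e^{iH_0 t}e^{-iHt}$ and spell out the Dyson-series/uniqueness step that the paper's state-level derivation leaves implicit.
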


\begin{proof}
We start with the time-dependent Schr\"odinger equation:
\begin{align}
\frac{d}{dt} \ket{\psi(t)}= -iH \ket{\psi(t)}.
\end{align}
Note that it is immediate that the time-independent Hamiltonian leads to a time-evolution operator $U(t)= e^{-iHt}$.
For the rest of the claim, substitute $\ket{\psi(t)}= e^{-iH_0 t} \ket{\psi(t)}_I$.
We obtain
\begin{align}
\frac{d}{dt} \ket{\psi(t)}_I= -i e^{iH_0 t} V e^{-iH_0 t} \ket{\psi(t)}_I.
\end{align}
This implies that the time-ordered exponential $U_I(t):= \mathcal{T}\left[ e^{-i\int^t_0 V(s) ds} \right]$ is the time-evolution operator in the interaction picture.
Namely,
\begin{align}
e^{-iH_0 t}\ket{\psi(0)}= \ket{\psi(t)}= e^{-iH_0 t}U_{I}(t)\ket{\psi(0)}_I.
\end{align}
\end{proof}

Next, we define two Dyson series that are truncated and discretized, both of which approximate the operator $U_I(t)$ given as in Lemma~\ref{lem:SchrodingerInteractionPictureTimeEvolutionRelation}.
These are the operators that we use in the quantum algorithm. Following the notation introduced in~\cite{low2018hamiltonian}, we defined the truncated-discretized Dyson series without collisions in Definition~\ref{def:TruncDiscDysonSeries} and with collisions in Definition~\ref{def:TruncDiscDysonSeriesWithCol}:

\begin{definition}[$D_{(K,M)}(t)$: $(K,M)$-truncated-discretized Dyson series]\label{def:TruncDiscDysonSeries} Let $H= H_0 + V$ be a Hamiltonian of a pair of Hermitian operators $H_0, V$. 
Let $t \in \mathbb{R}$, $K \in \mathbb{N}^+$ be the truncation parameter and $M \in \mathbb{N}^+$ be the time discretization parameter.
The $(K,M)$-truncated-discretized Dyson series is defined as

\begin{align}\label{eq:TruncDiscDysonSeries1}
D_{(K,M)}(t):= \sum^K_{k=0} (-i\Delta)^k B_{(k,M)}(t)
\end{align}
and
\begin{align}\label{eq:TruncDiscDysonSeries2}
B_{(k,M)}(t):=\sum^{M-1}_{m_k=k-1} \ldots \sum^{m_3-1}_{m_2=1} \sum^{m_2-1}_{m_1=0} V(m_k \Delta) \ldots V(m_2 \Delta) V(m_1 \Delta),
\end{align}
where $\Delta:= t/M$, and $V(s):= e^{iH_0s} V e^{-iH_0s}$ for $s \in \mathbb{R}$.
\end{definition}

Note that above definition excludes the \emph{collisions}, i.e., the cases where there is at least one $i$ such that $m_i= m_{i+1}$ in the expansion that defines the operators $B_{(k,M)}(t)$. 
Alternatively, one could include the collisions, and use an alternative discretization defined as follows.
We will see that in terms of approximating the original operator, including/excluding the collision terms does not matter much.
This is related to the fact that there are overwhelmingly more cases where there is no collision compared to the number of cases with collision.

\begin{definition}[$\tilde D_{(K,M)}(t)$: $(K,M)$-truncated-discretized Dyson series with collisions]\label{def:TruncDiscDysonSeriesWithCol} Let $H= H_0 + V$ be a Hamiltonian of a pair of Hermitian operators $H_0, V$.
Let $t \in \mathbb{R}$, $K \in \mathbb{N}^+$ be the truncation parameter and $M \in \mathbb{N}^+$ be the time discretization parameter.
The $(K,M)$-truncated-discretized Dyson series is defined as

\begin{align}\label{eq:TruncDiscDysonSeries3}
\tilde D_{(K,M)}(t):= \sum^K_{k=0} (-i\Delta)^k \tilde B_{(k,M)}(t)
\end{align}
and
\begin{align}\label{eq:TruncDiscDysonSeries4}
\tilde B_{(k,M)}(t):=\sum^{M-1}_{m_k=0} \ldots \sum^{m_3}_{m_2=0} \sum^{m_2}_{m_1=0} V(m_k \Delta) \ldots V(m_2 \Delta) V(m_1 \Delta),
\end{align}
where $\Delta:= t/M$, and $V(s):= e^{iH_0s} V e^{-iH_0s}$ for $s \in \mathbb{R}$.
\end{definition}

We then employ the following Theorem~\ref{thm:ApproximatingInteractionPictureTimeEvolWithTruncatedDiscretizedDysonSeries} and Theorem~\ref{thm:ApproximatingInteractionPictureTimeEvolWithTruncatedDiscretizedDysonSeriesWithCollision} to find the minimal truncation and discretization parameters $K, M$, respectively, in order to guarantee an error $\epsilon$ in approximating the time evolution operator $U_I$ in the interaction picture.

\begin{theorem}[Approximating time evolution in the interaction picture with truncated-discretized Dyson series]\label{thm:ApproximatingInteractionPictureTimeEvolWithTruncatedDiscretizedDysonSeries} Let $H= H_0 + V$ be a Hamiltonian of a pair of Hermitian operators $H_0, V$. 
Let $D_{(K,M)}(t)$, $\tilde{D}_{(K,\tilde{M})}(t)$ be the $(K,M)$-truncated-discretized Dyson series without/with collisions defined as in Definition~\ref{def:TruncDiscDysonSeries} and Definition~\ref{def:TruncDiscDysonSeriesWithCol} respectively, where $t \in \mathbb{R}$, $K \in \mathbb{N}^+$ and $M \in \mathbb{N}^+$, and let $\epsilon, \epsilon_1, \epsilon_2 \in \mathds{R}^+$ be such that $\epsilon= \epsilon_1 + \epsilon_2$.
Let $\mathcal{T}\left[ e^{-i\int^t_0 V(s) ds} \right]$ be the time evolution in the interaction picture defined in Lemma~\ref{lem:SchrodingerInteractionPictureTimeEvolutionRelation}. Then,

\begin{align}\label{thm:TruncDiscDysonApproximationError1}
\left\|\mathcal{T}\left[ e^{-i\int^t_0 V(s) ds} \right] - D_{(K,M)}(t) \right\| \leq \epsilon, \qquad \left\|\mathcal{T}\left[ e^{-i\int^t_0 V(s) ds} \right] - \tilde{D}_{(K,\tilde{M})}(t) \right\| \leq \epsilon
\end{align}
for all
\begin{align}\label{thm:TruncDiscDysonApproximationErrorConditionK}
K \geq -1 + \frac{\ln 1/\epsilon_1}{W\left(\frac{\ln 1/\epsilon_1}{te\max_s \|H(s)\|}\right)}, \; \textrm{or more explicitly}, \; K\geq \max \left\{2 \|V\| t, e\|V\|t + \ln (1/\epsilon_1) \right\},
\end{align}
where $W(\cdot)$ is the Lambert W function, and for all
\begin{align}\label{thm:TruncDiscDysonApproximationErrorConditionM}
M \geq \max\left\{2t\|H_0\|, \frac{(K-1)^2}{\ln 2}, \frac{2t^2 \|V\|e^{t\|V\|}( \|H_0\|  + 2\|V\|)}{\epsilon_2} \right\}, \; 
\tilde{M} \geq \max \left\{ 2t\|H_0\|,\frac{(K-1)^2}{\ln 2}, \frac{6t^2 \|H_0\| \|V\|e^{t\|V\|}}{\epsilon_2}\right\},
\end{align}
of the Dyson series without and with collisions, respectively.
\end{theorem}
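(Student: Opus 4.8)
The plan is to split each error by the triangle inequality into a \emph{truncation} part, governed by $K$ and contributing at most $\epsilon_1$, and a \emph{discretization} part, governed by $M$ (resp.\ $\tilde M$) and contributing at most $\epsilon_2$. To do this I introduce the order-$K$ truncation of the \emph{continuous} Dyson series,
\begin{align}
U_I^{(K)}(t)= \sum_{k=0}^{K} (-i)^k \int_0^t dt_k \int_0^{t_k} dt_{k-1} \cdots \int_0^{t_2} dt_1\, V(t_k)\cdots V(t_1),
\end{align}
and write $\|U_I(t)-D_{(K,M)}(t)\| \leq \|U_I(t)-U_I^{(K)}(t)\| + \|U_I^{(K)}(t)-D_{(K,M)}(t)\|$, with the analogous split for $\tilde D_{(K,\tilde M)}(t)$. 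The first term is handled once for both definitions, since the truncation is compared against the same continuous series.

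For the truncation term, the key observation is that $V(s)=e^{iH_0 s}Ve^{-iH_0 s}$ is a unitary conjugate of $V$, so $\|V(s)\|=\|V\|$ for every $s$; the $k$-fold time-ordered integral runs over the simplex of volume $t^k/k!$, whence $\|U_I(t)-U_I^{(K)}(t)\| \leq \sum_{k=K+1}^{\infty} (t\|V\|)^k/k!$. I then bound the tail using $x^k/k! \leq (ex/k)^k$ together with a geometric-type resummation, which yields the Lambert-$W$ form of \eqref{thm:TruncDiscDysonApproximationErrorConditionK} upon setting the bound equal to $\epsilon_1$; the explicit $K\geq\max\{2\|V\|t,\,e\|V\|t+\log(1/\epsilon_1)\}$ follows by separately ensuring $et\|V\|/(K+1)\leq 1/2$, so the tail is dominated by a geometric series, and that the leading retained term falls below $\epsilon_1$.

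For the discretization term I compare, order by order, the continuous simplex integral against its Riemann sum on the grid $\{m\Delta\}_{m=0}^{M-1}$ with $\Delta=t/M$. The integrand $V(t_k)\cdots V(t_1)$ is Lipschitz in each argument with $\|\partial_{t_j}\bigl(V(t_k)\cdots V(t_1)\bigr)\| \leq 2\|H_0\|\,\|V\|^{k}$, using $\dot V(s)=i[H_0,V(s)]$ and $\|[H_0,V(s)]\|\leq 2\|H_0\|\|V\|$. Summing the per-cell errors over the $\sim M^k/k!$ ordered cells of edge $\Delta$ gives a per-order error of order $k\,\Delta\,\|H_0\|\,\|V\|^{k}\,t^{k}/k!$; resumming over $0\leq k\leq K$ via $\sum_k k(t\|V\|)^k/k! = t\|V\|e^{t\|V\|}$ produces the overall $t^2\|H_0\|\|V\|e^{t\|V\|}/M$ scaling, which bounded by $\epsilon_2$ reproduces the stated lower bound on $M$ in \eqref{thm:TruncDiscDysonApproximationErrorConditionM}. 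The two auxiliary constraints enter here: $M\geq 2t\|H_0\|$ guarantees $\|H_0\|\Delta\leq 1/2$ so the first-order Riemann estimate is valid, while $M\geq(K-1)^2/\ln 2$ controls the mismatch between the strictly-ordered sum (no collisions, Definition~\ref{def:TruncDiscDysonSeries}) and the weakly-ordered sum (with collisions, Definition~\ref{def:TruncDiscDysonSeriesWithCol}) by bounding the total weight of the $\binom{k}{2}$ diagonal terms $m_i=m_{i+1}$ that are dropped or added.

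The main obstacle I anticipate is the discretization bookkeeping of the last paragraph: first, tracking the combinatorial factors so that the sum over orders telescopes cleanly into $e^{t\|V\|}$ rather than an uncontrolled double sum; and second, the careful accounting of collision terms, where one must show that the difference between $D_{(K,M)}$ and $\tilde D_{(K,\tilde M)}$ is itself of order $K^2/M$ and hence absorbed once $M\geq (K-1)^2/\ln 2$. This is precisely what allows the collision version to carry the slightly different constant ($6\|H_0\|\|V\|$ versus $2\|V\|(\|H_0\|+2\|V\|)$) in \eqref{thm:TruncDiscDysonApproximationErrorConditionM}. Everything else reduces to routine applications of the triangle inequality and the norm invariance $\|V(s)\|=\|V\|$.
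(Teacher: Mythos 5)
Your overall architecture --- the triangle-inequality split into a truncation term controlled by $K$ and a discretization term controlled by $M$, the tail bound $\sum_{k>K}(t\|V\|)^k/k!$ using $\|V(s)\|=\|V\|$ with the Lambert-$W$ and explicit solutions for $K$, and a per-order comparison of the simplex integral with its Riemann sum via a Lipschitz-type estimate on $V(s)$ --- is the same as the paper's (the paper phrases the Lipschitz step through a BCH expansion of $V(s)-V$ under $\Delta\|H_0\|\le 1/2$, which is equivalent to your derivative bound $\|\dot V(s)\|\le 2\|H_0\|\|V\|$). The truncation half of your argument is complete.

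The gap is in the discretization bookkeeping, precisely the part you flag as the main obstacle. For the collision-free series $D_{(K,M)}$ of Definition~\ref{def:TruncDiscDysonSeries}, the strictly ordered discrete sum contains no terms matching the portions of the continuous simplex integral in which two or more time arguments land in the same grid cell ($m_i=m_j$); these regions are not a Lipschitz-type error and must be bounded separately by their volume times $\|V\|^k$, giving a per-order contribution of order $k(\Delta\|V\|)^k M^{k-1}/(k-1)!$ and, after resummation, $4t^2\|V\|^2 e^{t\|V\|}/M$. That is exactly the origin of the $+2\|V\|$ inside $(\|H_0\|+2\|V\|)$ in Eq.~\eqref{thm:TruncDiscDysonApproximationErrorConditionM}; your Riemann-sum estimate alone produces only the $\|H_0\|\|V\|$ part and so does not reproduce the stated bound on $M$. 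Relatedly, you assign the wrong role to $M\ge (K-1)^2/\ln 2$: it cannot ``absorb'' the $D_{(K,M)}$-versus-$\tilde D_{(K,\tilde M)}$ mismatch, since it does not involve $\epsilon_2$ and hence cannot control an error contribution. In the paper it is used only in the with-collision analysis (Lemma~\ref{lem:discretization_error_with_col}), where the diagonal cells \emph{are} matched by discrete terms and one must control the combinatorial sum $\sum_{q}\binom{k-1}{q}\binom{M}{k-q}\le \frac{M^k}{(k-1)!}e^{(k-1)^2/M}$; the condition forces $e^{(k-1)^2/M}\le 2$, which is how the clean constant $6\|H_0\|\|V\|$ (with no $\|V\|^2$ term) arises there. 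So keep your skeleton, but add an explicit volume bound for the unmatched collision cells in the collision-free case, and move the $(K-1)^2/\ln 2$ condition to its correct place in the with-collision estimate.
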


\textit{Proof of Theorem~\ref{thm:ApproximatingInteractionPictureTimeEvolWithTruncatedDiscretizedDysonSeries}.}

We prove the theorem in two parts by first approximating the series expansion of $\mathcal{T}\left[ e^{-i\int^t_0 V(s) ds}\right]$ with a finite series (hence the truncation parameter $K$ enters), and then discretizing the integrals in the series expansion with a finite sum (hence the discretization parameter $M$ enters).
More precisely, we bound the terms appearing in the RHS of

\begin{align}\label{eq:ErrorDysonInProof}
\left\|\mathcal{T}\left[ e^{-i\int^t_0 V(s) ds} \right] - D_{(K,M)}(t) \right\| & \leq \left\| \mathcal{T}\left[ e^{-i\int^t_0 V(s) ds} \right] - D_K(t) \right\| + \left\| D_K(t) - D_{(K,M)}(t)\right\|
\end{align}
where $D_K(t)$ is the truncated Dyson series defined as

\begin{align}\label{eq:DefinitionOfTruncatedDysonSeries}
D_K(t):= \sum^K_{k=0} (-i)^k B_k(t), \textrm{where} \:\; B_k(t):= \int^t_{t_k} dt_k \ldots \int^{t_3}_{t_2} dt_2 \int^{t_2}_0 dt_1 V(t_k) \ldots V(t_2) V(t_1).
\end{align}
To complete the proof, we first prove Lemma~\ref{lem:truncerror} that bounds the first term in Eq.~\eqref{eq:ErrorDysonInProof}, i.e., the truncation error.
Then we demonstrate Lemma~\ref{lem:discretization_error} and Lemma~\ref{lem:discretization_error_with_col}, without and with collisions, respectively, that bounds the second term in Eq.~\eqref{eq:ErrorDysonInProof}, i.e., the discretization error.

\textbf{Remark:} The derivation of the truncation bound $K$ in Lemma~\ref{lem:truncerror} closely follows the approach described in~\cite{low2018hamiltonian}. However, we provide an improved bound for the discretization parameter $M$ in the collision-free scenario in lemma~\ref{lem:discretization_error}. To the best of our knowledge, Lemma~\ref{lem:discretization_error_with_col} presents the first rigorous bound for the discretized Dyson series parameter 
$\tilde{M}$ that explicitly accounts for collisions.

\begin{lemma}[Approximation error of truncating the Dyson series]\label{lem:truncerror} The truncation error of the Dyson series at order $K$ satisfies the bound
\begin{align}
\left\| \mathcal{T}\left[ e^{-i\int^t_0 V(s) ds} \right] - D_K(t) \right\| \leq \epsilon_1
\end{align}
whenever
\begin{align}
K \geq \max \left\{2 \|V\| t,-1 + \frac{\ln 1/\epsilon_1}{W\left(\frac{\ln 1/\epsilon_1}{te \|V\|}\right)}\right\}, \; \textrm{or more explicitly}, \; K\geq \max \left\{2 \|V\| t, e\|V\|t + \ln (1/\epsilon_1) \right\}.
\end{align}
\end{lemma}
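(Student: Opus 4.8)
The plan is to expand the time-ordered exponential as its Dyson series and bound the norm of the discarded tail. Writing $\mathcal{T}\left[e^{-i\int_0^t V(s)\,ds}\right] = \sum_{k=0}^\infty (-i)^k B_k(t)$ with $B_k(t)$ as in Eq.~\eqref{eq:DefinitionOfTruncatedDysonSeries}, the definition of $D_K(t)$ gives $\mathcal{T}[\cdots] - D_K(t) = \sum_{k=K+1}^\infty (-i)^k B_k(t)$, so by the triangle inequality the truncation error is at most $\sum_{k=K+1}^\infty \|B_k(t)\|$. First I would bound each $\|B_k(t)\|$: since $V(s) = e^{iH_0 s} V e^{-iH_0 s}$ is a unitary conjugate of $V$, we have $\|V(s)\| = \|V\|$ for every $s$, and submultiplicativity of the norm together with the fact that $B_k(t)$ integrates a product of $k$ such factors over the ordered simplex $0 \le t_1 \le \dots \le t_k \le t$ of volume $t^k/k!$ yields $\|B_k(t)\| \le (\|V\|t)^k/k!$. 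The problem thus reduces to the familiar task of making the exponential tail $\sum_{k>K}(\|V\|t)^k/k!$ at most $\epsilon_1$.

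To control this tail I would exploit the two conditions that appear in the $\max$. Writing $x := \|V\|t$, the hypothesis $K \ge 2x$ forces the ratio of consecutive terms $x/(k+1) \le x/(K+2) < 1/2$ for all $k \ge K+1$, so the tail is dominated by a geometric series and $\sum_{k>K} x^k/k! \le 2\,x^{K+1}/(K+1)!$. Applying the Stirling-type bound $(K+1)! \ge ((K+1)/e)^{K+1}$ then gives $\sum_{k>K} x^k/k! \le 2\,(ex/(K+1))^{K+1}$, and it remains to choose $K$ so that this is at most $\epsilon_1$, i.e.\ so that $(ex/(K+1))^{K+1} \le \epsilon_1$; the harmless factor $2$ I would either absorb into $\epsilon_1$ at the cost of an additive $\ln 2$ or carry explicitly.

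Finally I would solve this transcendental inequality. Taking logarithms turns it into $(K+1)\ln\frac{K+1}{ex} \ge \ln(1/\epsilon_1)$, whose left side is strictly increasing in $K+1$ for $K+1 > ex$. Setting $L := \ln(1/\epsilon_1)$ and $w := W(L/(ex))$, the choice $K+1 = L/w$ satisfies $\frac{K+1}{ex} = e^{w}$ (using $w e^w = L/(ex)$), hence $(K+1)\ln\frac{K+1}{ex} = L$ with equality, and monotonicity extends the bound to all larger $K$, which is exactly the stated Lambert-$W$ condition. For the explicit relaxation I would instead verify directly that $K+1 = ex + \ln(1/\epsilon_1)$ suffices: with $a := ex$ and $b := \ln(1/\epsilon_1)$, the elementary inequality $\ln(1+b/a) \ge (b/a)/(1+b/a)$ gives $(a+b)\ln\frac{a+b}{a} \ge (a+b)\frac{b/a}{1+b/a} = b$, which is precisely the desired bound.

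I expect the main obstacle to be the clean conversion between the transcendental condition $(e\|V\|t/(K+1))^{K+1}\le\epsilon_1$ and its two closed forms --- verifying the Lambert-$W$ identity on the correct (principal) branch and establishing the explicit linear-in-$t$, logarithmic-in-$1/\epsilon_1$ relaxation --- while carefully tracking the two regime assumptions ($K \ge 2\|V\|t$ for geometric decay and $K+1 > e\|V\|t$ for monotonicity) that make each step valid.
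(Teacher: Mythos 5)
Your proposal is correct and follows essentially the same route as the paper's proof: term-wise bound $\|B_k(t)\|\le(\|V\|t)^k/k!$, a geometric tail estimate using $K\ge 2\|V\|t$, a Stirling bound leading to $\left(e\|V\|t/(K+1)\right)^{K+1}\le\epsilon_1$, and then the Lambert-$W$ solution plus the explicit linear-plus-logarithmic relaxation. The only detail to fix is the factor of $2$: to recover the lemma's constants exactly, do as the paper does and use the full Stirling bound $(K+1)!\ge\sqrt{2\pi(K+1)}\,((K+1)/e)^{K+1}$ so that $2/\sqrt{2\pi(K+1)}\le 1$ absorbs it, rather than shifting $\epsilon_1$ by $\ln 2$; your verification of the explicit bound via $\ln(1+u)\ge u/(1+u)$ is actually cleaner than the paper's, which imposes the extra condition $e\|V\|t>\log(1/\epsilon_1)$.
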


\textit{Proof of Lemma~\ref{lem:truncerror}.} Following the definition, 

\begin{align}\label{eq:proof_truncation}
%\nonumber 
\left|\left|  \mathcal{T}\left[ e^{-i\int_0^t V(s)ds} \right] - \sum_{k=0}^K(-i)^kB_k(t) \right|\right| & \leq \sum_{k=K+1}^{\infty}  \left|\left|  B_k(t) \right|\right| \\ 
%\nonumber 
& =\sum_{k=K+1}^{\infty} \frac{1}{k!}\left\|\int_0^t...\int_0^t \mathcal{T}\left[V(t_1)...V(t_k) \right]dt_1...dt_k \right \| \\
%\nonumber 
& \leq \sum_{k=K+1}^{\infty}  \frac{1}{k!}\left  \|\int_0^t...\int_0^t \prod_{j=1}^k \|V(t_j) \| dt_1...dt_k \right \| \\
%\nonumber 
& \leq \sum_{k=K+1}^{\infty} \frac{(t\max_s ||V(s)||)^k}{k!} \\
%\nonumber 
& = \frac{(t\max_s ||V(s)||)^{K+1}}{(K+1)!}\sum_{k=K+1}^{\infty} \frac{(t\max_s ||V(s)||)^{k-K-1}}{(K+2)...(k-1)k} \\
%\nonumber 
& \leq \frac{(t\max_s ||V(s)||)^{K+1}}{(K+1)!}\sum_{k=K+1}^{\infty} \left(\frac{t\max_s ||V(s)||}{K+2}\right)^{k-K-1} \\
%\nonumber 
& \leq \frac{(t\max_s ||V(s)||)^{K+1}}{(K+1)!}\sum_{k=K+1}^{\infty} \left(\frac{K}{2(K+2)}\right)^{k-K-1} \\
%\nonumber 
& = \frac{(t\max_s ||V(s)||)^{K+1}}{(K+1)!}\sum_{k=0}^{\infty} \left(\frac{K}{2(K+2)}\right)^{k} \\
%\nonumber 
& \leq \frac{(t\max_s ||V(s)||)^{K+1}}{(K+1)!}\sum_{k=0}^{\infty} \left(\frac{1}{2}\right)^{k} \\
%\nonumber 
&=2\frac{(t\max_s ||V(s)||)^{K+1}}{(K+1)!} \\
%\nonumber 
&\leq \left(\frac{te\max_s ||V(s)||}{K+1}\right)^{K+1}\frac{2}{\sqrt{2\pi(K+1)}e^{1/(12(K+1)+1)}}\\
 &\leq \left(\frac{te\max_s ||V(s)||}{K+1}\right)^{K+1}
\end{align}

Note that $\|V(s)\|= \|e^{-iH_0 s}Ve^{iH_0 s} \|= \|V\|$.
The exact solution to 
\begin{align}\label{eq:InequalityForK}
\left(\frac{e\|V\| t}{K+1} \right)^{K+1} \leq \epsilon_1
\end{align}
is given by the Lambert-W function, i.e., 
\begin{align}
K \geq  \frac{\ln{1/\epsilon_1}}{W\left(\frac{\ln 1/\epsilon_1}{te \|V\|}\right)}-1.
\end{align}

We can derive a bound on $K$ without using the Lambert W function as follows. 
Eq.\eqref{eq:InequalityForK} is satisfied if (note that we assume $t>0$ or otherwise use the absolute value of it, i.e., $|t|$)

\begin{align}
K \ln \left(\frac{K}{e\|V\|t}\right) \geq \ln (1/\epsilon_1).
\end{align}

Assume $K \geq e\|V\|t$ (such as choose $K= e\|V\|t + a\ln(1/\epsilon_1)$ where $e\|V\|t > a\ln(1/\epsilon_1)$). Replace $K= e\|V\|t + a\ln(1/\epsilon_1)$ to obtain

\begin{align}\label{eq:Bound0}
\left( e\|V\| t + a \ln(1/\epsilon_1) \right) \ln \left( 1 + \frac{a \ln(1/\epsilon_1)}{e\|V\|t}\right) \geq \ln(1/\epsilon_1).
\end{align}

The expansion $\log(1+x)= x - x^2/2 + x^3/3 - x^4/4 + \ldots$ indicates that $x> \log(1+x)$. Combined with the condition that $x < 1$ (equivalent to $e\|V\|t > a\ln (1/\epsilon_1)$), we observe that Eq.~\ref{eq:Bound0} implies

\begin{align}
\left( e\|V\|t + a\ln(1/\epsilon_1) \right) \frac{a \ln(1/\epsilon_1)}{e\|V\|t} \geq \ln(1/\epsilon_1),
\end{align}
which simplifies to
\begin{align}
a\ln(1/\epsilon_1) + \frac{a^2 (\ln(1/\epsilon_1))^2}{e\|V\|t} \geq \ln(1/\epsilon_1).
\end{align}

This is satisfied for any $a \geq 1$, hence $K= e\|V\|t + \ln(1/\epsilon_1)$ is one solution of the inequality with the condition that $e\|V\|t > \ln(1/\epsilon_1)$.

\qed

We now bound the second term of the RHS of Eq.~\eqref{eq:ErrorDysonInProof}, for the case without collisions.

\begin{lemma}[Approximation error of discretizing the truncated Dyson series]\label{lem:discretization_error}
Let $H=H_0 + V$ be a given Hamiltonian.
Let $D_K(t)$ and $D_{K,M}(t)$ be truncated and truncated-discretized Dyson series, with parameters $K$ and $(K,M)$, respectively.
Let $\epsilon_2 >0$. 
Then,
\begin{align}
\left\| D_K(t) - D_{(K,M)}(t)\right\| \leq \epsilon_2
\end{align}
is satisfied for all choices of
\begin{align}
M \geq \max\left\{2t\|H_0\|,\frac{(K-1)^2}{\ln 2}, \frac{2t^2 \|V\|( \|H_0\| e^{t\|V\|} + 2\|V\|e^{t\|V\|})}{\epsilon_2} \right\}.
\end{align}
\end{lemma}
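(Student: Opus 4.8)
The plan is to bound the discretization error term-by-term in the truncation index $k$. Since $|-i|=1$, writing $D_K(t)-D_{(K,M)}(t)=\sum_{k=0}^{K}(-i)^k\left(B_k(t)-\Delta^k B_{(k,M)}(t)\right)$ and applying the triangle inequality reduces the task to controlling, for each $k$, the difference between the continuous nested integral $B_k(t)$ over the ordered simplex $\{0\le t_1\le\cdots\le t_k\le t\}$ and the Riemann-type sum $\Delta^k B_{(k,M)}(t)$ over grid points with strictly increasing indices $m_1<\cdots<m_k$, each weighted by the cell volume $\Delta^k$. First I would record the geometric fact that the grid cell $\prod_{j}[m_j\Delta,(m_j+1)\Delta)$ lies entirely inside the ordered simplex exactly when $m_1<m_2<\cdots<m_k$, since then $t_j<(m_j+1)\Delta\le m_{j+1}\Delta\le t_{j+1}$. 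This cleanly separates the error into (i) a quadrature error over the strictly ordered cells, and (ii) a collision error from the part of the simplex lying in diagonal cells (those with $m_i=m_{i+1}$ for some $i$), which the discrete sum omits entirely.

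For the quadrature error (i), I would use $\tfrac{d}{ds}V(s)=i[H_0,V(s)]$, so that $\|\tfrac{d}{ds}V(s)\|\le 2\|H_0\|\|V\|$ while $\|V(s)\|=\|V\|$. Telescoping the difference of the matrix products $V(t_k)\cdots V(t_1)$ and $V(m_k\Delta)\cdots V(m_1\Delta)$ one factor at a time shows the integrand varies by at most $2k\Delta\|H_0\|\|V\|^k$ across a cell of side $\Delta$. Multiplying by the cell volume $\Delta^k$ and the number $\binom{M}{k}$ of strictly ordered cells, then using $M\Delta=t$ and $\binom{M}{k}\Delta^k\le t^k/k!$, gives a per-order bound of order $\tfrac{2\Delta\|H_0\|(t\|V\|)^k}{(k-1)!}$; summing over $k\ge 1$ produces $\tfrac{2t^2\|H_0\|\|V\|e^{t\|V\|}}{M}$, which is precisely the $\|H_0\|e^{t\|V\|}$ contribution to the stated $M$-bound.

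For the collision error (ii), the integrand is bounded in norm by $\|V\|^k$, so it suffices to estimate the volume of the omitted region, which is $\tfrac{t^k}{k!}-\binom{M}{k}\Delta^k=\tfrac{t^k}{k!}\bigl(1-\prod_{j=0}^{k-1}(1-j/M)\bigr)\le \tfrac{t^k}{k!}\cdot\tfrac{k(k-1)}{2M}$. Summing $\|V\|^k$ times this over $k\ge 2$ yields, up to a constant, $\tfrac{t^2\|V\|^2 e^{t\|V\|}}{M}$, matching the $2\|V\|e^{t\|V\|}$ contribution. Here the two auxiliary conditions in the $M$-bound earn their keep: $M\ge 2t\|H_0\|$ forces $\Delta\le 1/(2\|H_0\|)$, so that the first-order variation estimate genuinely controls $V(s)$ over a cell (the oscillation scale of $V(s)$ being $\|H_0\|^{-1}$), while $M\ge (K-1)^2/\ln 2$ keeps $\prod_{j=0}^{k-1}(1-j/M)$ bounded away from zero for all $k\le K$, validating the volume estimate uniformly in the truncation order.

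The main obstacle will be the bookkeeping that guarantees the strictly ordered cells tile exactly the bulk of the simplex while the remaining diagonal cells contribute only through their volume, so that the two error sources neither overlap nor leave a gap. A secondary difficulty is the telescoping variation estimate for the product $V(t_k)\cdots V(t_1)$: one must confirm that replacing each continuous argument $t_j$ by its grid value $m_j\Delta$ accumulates at most $k$ terms, each of size $\Delta\cdot 2\|H_0\|\|V\|\cdot\|V\|^{k-1}$, and that the factorial suppression inherited from $\binom{M}{k}\Delta^k\le t^k/k!$ survives the sum over $k$ to yield the claimed exponential $e^{t\|V\|}$ rather than a divergent series. Once both pieces are assembled, splitting $\epsilon_2$ across the two contributions and taking $M$ above the maximum of the three listed quantities gives the claim.
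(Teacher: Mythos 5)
Your proposal is correct and follows the same overall strategy as the paper's proof: bound $\|B_k(t)-\Delta^k B_{(k,M)}(t)\|$ order by order, split each order into a quadrature error over the strictly ordered grid cells (handled by replacing the factors $V(t_j)$ by $V(m_j\Delta)$ one at a time) plus a collision error from the diagonal cells, and recover the exponential $e^{t\|V\|}$ from $\binom{M}{k}\Delta^k\le t^k/k!$. You differ in two technical ingredients, both of which work and are, if anything, slightly cleaner or tighter. For the per-cell variation you use the derivative bound $\|\tfrac{d}{ds}V(s)\|=\|[H_0,V(s)]\|\le 2\|H_0\|\|V\|$, whereas the paper expands $V(s)-V$ in a commutator (BCH) series and needs $\Delta\|H_0\|\le 1/2$ (i.e.\ $M\ge 2t\|H_0\|$) to resum it; both give the same leading contribution $2t^2\|H_0\|\|V\|e^{t\|V\|}/M$, but your route never actually invokes that hypothesis. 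For the collision part you bound the omitted simplex volume directly by $\tfrac{t^k}{k!}\bigl(1-\prod_{j<k}(1-j/M)\bigr)\le \tfrac{t^k}{k!}\tfrac{k(k-1)}{2M}$, which sums to $t^2\|V\|^2e^{t\|V\|}/(2M)$ and is tighter than the paper's counting bound $2k(\Delta\|V\|)^k M^{k-1}/(k-1)!$ (taken from Low--Wiebe), which sums to $4t^2\|V\|^2e^{t\|V\|}/M$; either way the stated threshold on $M$ suffices. One small inaccuracy: the condition $M\ge (K-1)^2/\ln 2$ is not what validates your volume estimate---the inequality $1-\prod_{j<k}(1-j/M)\le\sum_{j<k}j/M$ holds unconditionally---so in your argument (as in the paper's collision-free proof, where it is only needed for the variant with collisions) that hypothesis is simply unused; since it only strengthens the lower bound on $M$, this is harmless.
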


\textit{Proof of Lemma~\ref{lem:discretization_error}.}
We use the definitions for $D_K(t)$ (Eq.~\eqref{eq:DefinitionOfTruncatedDysonSeries}) and $D_{(M,K)}(t)$ (Definition~\ref{def:TruncDiscDysonSeries}), and bound the approximation error term by term, i.e., 

\begin{align}
\left\| D_K(t) - D_{(K,M)}(t)\right\| & \leq \sum^K_{k=1}  \|B_k(t) - \Delta^k B_{(k,M)}(t)\|.
\end{align}

Since we calculate errors on the RHS term by term, denote $\Xi_k:= B_k(t) - \Delta^k B_{(k,M)}(t)$, for the sake of future notational ease. 
Below we bound $\|\Xi_k\|$.
To get an idea of how it works, let us start from the easiest case, $\Xi_1$, and proceed for higher values of $k$ from there.\\
{\bf{Bounding $\|\Xi_1\|$:}} Let's first write down the expressions explicitly:

\begin{align}
\Xi_1= \int^t_{0} dt_1 V(t_1) - \Delta \sum^{M-1}_{m_1=0} V(m_1 \Delta)
\end{align}
where $\Delta:= t/M$, and $V(s):= e^{+iH_0 s} V e^{-iH_0 s}$ for $s \in \mathbb{R}$.
We then partition the integral into $M$ segments, each of which is of equal length and obtain:
\begin{align}
%\nonumber 
\Xi_1 &= \sum^{M-1}_{m_1=0} \left[\int^{(m_1 + 1)\Delta}_{m_1 \Delta} dt_1 V(t_1) -  V(m_1 \Delta ) \Delta \right]\\
&= \sum^{M-1}_{m_1=0} \left[\int^{(m_1 + 1)\Delta}_{m_1 \Delta} dt_1 (V(t_1) -  V(m_1 \Delta )) \right]\\
%\nonumber 
&= \sum^{M-1}_{m_1=0} e^{iH_0 m_1 \Delta} \left[\int^{\Delta}_{0} dt_1 (V(t_1) -  V) \right]e^{-iH_0 m_1 \Delta}
\end{align}
where the second line follows straightforwardly, and the last line uses the definition of $V(s)$.

By the triangle inequality, we get the bound

\begin{align}\label{eq:x1_noCol}
%\nonumber 
\| \Xi_1 \| & \leq \sum^{M-1}_{m_1=0} \left\| \int^\Delta_{0} dt_1 (V(t_1) - V) \right\|\\
&= M \left\| \int^\Delta_{0} dt_1 (V(t_1) - V) \right\|.
\end{align}

Using the BCH formula we show that:

\begin{align}
V(s) - V = \sum^{\infty}_{n=1} \frac{s^n}{n!} [iH_0, [iH_0, [\ldots, [iH_0,V]\ldots]]].
\end{align}
For general $H_0$ and $V$, i.e., no assumptions are made on $H_0$ or $V$, the integral can be bounded by
\begin{align}\label{eq:CommBound}
\left\|\int^\Delta_{s=0} ds (V(s) - V) \right\|\leq \sum^\infty_{n=1} \frac{{|\Delta|}^{n+1}2^n}{(n+1)!} \|H_0\|^{n}\|V\|.
\end{align}
This bound could be improved considerably if we assumed local $H_0$ and $V$, by exploiting locality as in the Lieb-Robinson bounds, which can be interesting for a future work.
Assuming $\Delta \|H_0\|\leq 1/2$ (hence $M \geq 2t \|H_0\|$), we arrive at the following upper bound
\begin{align}
\left\|\int^\Delta_{s=0} ds (V(s) - V) \right\| \leq \Delta^2 \|H_0\| \|V\|  \sum^\infty_{n=1} (\Delta \|H_0\|)^{n-1} \leq 2 \Delta^2 \|H_0\| \|V\|.
\end{align}

{\bf{Bounding $\|\Xi_2\|$:}} This will be more complicated than the previous one, but it will be illuminating for bounding the higher order terms $\|\Xi_k\|$ for arbitrary $k > 2$.
Let's proceed in a similar way as we did for $\Xi_1$, and write down the expression for $\Xi_2$ explicitly:

\begin{align}\label{eq:X2_noCol}
\Xi_2= \int^t_{t_2} dt_2  V(t_2) \int^{t_2}_{t_1=0} dt_1 V(t_1) - \Delta^2 \sum^{M-1}_{m_2=1} V(m_2 \Delta) \sum^{m_2-1}_{m_1=0} V(m_1 \Delta).
\end{align}
Notice that partitioning the integral and then matching them with the corresponding discretized versions, we can rewrite $\Xi_2$ as follows:

\begin{align}
\nonumber \Xi_2&= \sum^{M-1}_{m_2=1} \int^\Delta_{0} dt_2  V(m_2 \Delta + t_2)
\sum^{m_2-1}_{m_1=0} \int^\Delta_{0} dt_1  V(m_1 \Delta + t_1) - \Delta^2 \sum^{M-1}_{m_2=1} V(m_2 \Delta) \sum^{m_2-1}_{m_1=0} V(m_1 \Delta)\\
& + \frac{1}{2}\sum^{M-1}_{m=0} \int^\Delta_{0} dt_1 \int^\Delta_{0} dt_2 V(m\Delta + t_1)V(m\Delta + t_2).
\end{align}
Observe that we can match the first and the second term on the RHS one by one, whereas the third term is not matched. More precisely,

\begin{align}
\nonumber \Xi_2&= \sum^{M-1}_{m_2=1}
\sum^{m_2-1}_{m_1=0} \left[ \int^{\Delta}_{0} dt_2 \int^{\Delta}_{0} dt_1 (V(m_2 \Delta + t_2)V(m_1 \Delta + t_1) - V(m_2 \Delta)V(m_1 \Delta))  \right]\\
& + \frac{1}{2}\sum^{M-1}_{m=0} \int^\Delta_{0} dt_1 \int^\Delta_{0} dt_2 V(m\Delta + t_1)V(m\Delta + t_2).
\end{align}
Using the fact that $(A_1 A_2 - B_1 B_2)= ((A_1 - B_1)B_2 + A_1 (A_2 - B_2))$ we get

\begin{align}
\nonumber \Xi_2&= \sum^{M-1}_{m_2=1}
\sum^{m_2-1}_{m_1=0} \left[ \int^{\Delta}_{0} dt_2 (V(m_2 \Delta + t_2) - V(m_2 \Delta )) \int^{\Delta}_{0} dt_1 V(m_1 \Delta)\right.\\ 
&\left. + \int^\Delta_{0} dt_2 V(m_2 \Delta + t_2) \int^\Delta_{0} dt_1 (V(m_1 \Delta + t_1) - V(m_1 \Delta ))  \right]\\
\nonumber & + \frac{1}{2} \sum^{M-1}_{m=0} \int^\Delta_{0} dt_1 \int^\Delta_{0} dt_2 V(m\Delta + t_1)V(m\Delta + t_2).
\end{align}

Taking the norm of above expression, applying triangle inequality and submultiplicative and unitary invariance property of the norm, we get

\begin{align}
%\nonumber
\|\Xi_2\| &\leq 2 {M \choose 2} \left\| \int^\Delta_{0} dt (V(t) - V) \right\| \Delta \|V\| + \frac{1}{2}\sum^{M-1}_{m=0} \int^\Delta_{0} dt_1 \int^\Delta_{0} dt_2 V(m\Delta + t_1)V(m\Delta + t_2)\\
&\leq 2 {M \choose 2} \left\| \int^\Delta_{0} dt (V(t) - V) \right\| \Delta \|V\| + \frac{1}{2} M (\Delta \|V\|)^2.
\end{align}

{\bf{Bounding $\|\Xi_k\|$:}} Following the same lines of steps for $\Xi_2$, we find the following bound for $\|\Xi_k\|$ for any arbitrary $k \in \mathbb{N}^+$:

\begin{align}
\|\Xi_k\| \leq k {M \choose k} \left\| \int^\Delta_{0} dt (V(t) - V) \right\| (\Delta \|V\|)^{k-1} +
\frac{2k (\Delta \|V\|)^k M^{k-1}}{(k-1)!}
\end{align}
where the second term comes from bounding the volume of the collision cases, i.e.,  the $k$-tuples where there is at least one collision such as $m_i = m_j$ for some $i \neq j$.
The same analysis has been performed in Ref.~\cite{low2018hamiltonian} (see Eq.(A16)).

The bound on the total error will then be

\begin{align}\label{eq:LemmaA2-1}
%\nonumber 
\|D_K(t) - D_{(K,M)}(t)\| &\leq \sum^{K}_{k=1} \|\Xi_k\|\\
& \leq \left\| \int^\Delta_{0} dt (V(t) - V) \right\| \sum^K_{k=1} k {M \choose k} (\Delta \|V\|)^{k-1} + \sum^K_{k=2} \frac{2k (\Delta \|V\|)^k M^{k-1}}{(k-1)!}.
\end{align}

The sum in the first term can be bounded by 

\begin{align}\label{eq:LemmaA2-2}
%\nonumber 
\sum^K_{k=1} k {M \choose k} (\Delta \|V\|)^{k} & =  \sum^{K}_{k=1} k \frac{M!}{k!(M-k)!}\left(\frac{t \|V\|}{M}\right)^{k} \\
%\nonumber 
& \leq  \sum^{K}_{k=1} k \frac{t^k \|V\|^k}{k!} \\
& = t\|V\| \sum^{K}_{k=1} \frac{(t \|V\|)^{k-1}}{(k-1)!}\\
%\nonumber 
& \leq\|V\| \sum^{\infty}_{k=0} \frac{(t \|V\|)^{k}}{k!}\\
%\nonumber 
& = t \|V\|e^{t \|V\|}
\end{align}

The second term can be bounded by :
\begin{align}\label{eq:LemmaA2-3} 
%\nonumber     
\sum^K_{k=2} \frac{2k (\Delta \|V\|)^k M^{k-1}}{(k-1)!} & =  2\Delta \|V\| \sum_{k=2}^K \frac{k (\Delta \|V\|)^{k-1} M^{k-1}}{(k-1)!} \\
%\nonumber    
&\leq 4\Delta \|V\| \sum_{k=2}^K \frac{(\Delta \|V\|)^{k-1} M^{k-1}}{(k-2)!} \\
& = 4\Delta^2 \|V\|^2 M \sum_{k=2}^K \frac{(\Delta \|V\| M)^{k-2}}{(k-2)!} \\
%\nonumber    
&\leq  4\Delta^2 \|V\|^2 M e^{\Delta \|V\| M} \\
%\nonumber   
& =  \frac{4t^2 \|V\|^2}{M} e^{t \|V\| }
\end{align}

We put Eq.~\eqref{eq:LemmaA2-2} and Eq.~\eqref{eq:LemmaA2-3} into the Eq.~\eqref{eq:LemmaA2-1} and find

\begin{align}
\|D_K(t) - D_{(K,M)}(t)\| \leq M e^{et\|V\| +1} \left\| \int^\Delta_{0} dt (V(t) - V) \right\| + \frac{4t^2\|V\|^2}{M}e^{t\left|\left|V\right|\right|}.
\end{align}

Furthermore, assuming $\Delta \|H_0\| \leq 1/2$ implies
\begin{align}
\left\| \int^\Delta_{0} dt (V(t) - V) \right\| \leq  2 \Delta^2 \|H_0\| \|V\|.
\end{align}

Using this, we obtain

\begin{align}
\frac{2t^2 \|H_0\| \|V\| e^{t\|V\|} + 4 t^2 \|V\|^2 e^{t \|V\|} }{M} \leq \epsilon_2,
\end{align}

and thus $\|D_K(t) - D_{(K,M)}(t)\| \leq \epsilon_2$ is satisfied for all

\begin{align}
M \geq \frac{2t^2 \|V\|( \|H_0\| e^{t\|V\|} + 2\|V\|e^{t\|V\|})}{\epsilon_2}.
\end{align}

Together with the condition $\Delta \|H_0\| \leq 1/2$, the proof is complete.

\qed

\begin{lemma}[Approximating time evolution in the interaction picture with the truncated-discretized Dyson series with collisions]\label{lem:discretization_error_with_col}
Let $H=H_0 + V$ be a given Hamiltonian.
Let $D_K(t)$ and $\tilde{D}_{K,\tilde{M}}(t)$ be the truncated Dyson series and the truncated-discretized Dyson series with collisions, with parameters $K$ and $(K,\tilde{M})$, respectively.
Let $\epsilon_2 >0$. 
Then,
\begin{align}
\left\| D_K(t) - \tilde D_{(K,\tilde{M})}(t)\right\| \leq \epsilon_2
\end{align}
for all
\begin{align}
\tilde{M} \geq \max \left\{ 2t\|H_0\|,\frac{(K-1)^2}{\ln 2}, \frac{6t^2 \|H_0\| \|V\|e^{t\|V\|}}{\epsilon_2}\right\}
\end{align}
\end{lemma}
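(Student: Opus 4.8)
The plan is to mirror the collision-free analysis of Lemma~\ref{lem:discretization_error}, isolating the one new contribution that arises because $\tilde D_{(K,\tilde M)}(t)$ from Definition~\ref{def:TruncDiscDysonSeriesWithCol} retains the weakly-ordered (diagonal) lattice points $m_1\le\cdots\le m_k$ rather than the strictly-ordered ones. First I would write the error term by term,
\begin{align}
\left\| D_K(t) - \tilde D_{(K,\tilde M)}(t)\right\| \le \sum_{k=1}^K \left\| \tilde\Xi_k\right\|, \qquad \tilde\Xi_k := B_k(t) - \Delta^k \tilde B_{(k,\tilde M)}(t),
\end{align}
with $B_k(t)$ from Eq.~\eqref{eq:DefinitionOfTruncatedDysonSeries} and $\Delta=t/\tilde M$. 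Partitioning $[0,t]^k$ into $\tilde M^k$ cells of side $\Delta$, the time-ordered simplex $0\le t_1\le\cdots\le t_k\le t$ fully contains the strictly-ordered cells $m_1<\cdots<m_k$ and only partially intersects the cells carrying at least one equality. Accordingly I would split $\tilde\Xi_k=\tilde\Xi_k^{\mathrm{off}}+\tilde\Xi_k^{\mathrm{col}}$ into an off-diagonal part (strictly-ordered cells) and a collision part (diagonal-touching cells).

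For the off-diagonal part, the strictly-ordered cells are exactly those appearing in the collision-free series, and on each of them the integrand is compared against $\Delta^k V(m_k\Delta)\cdots V(m_1\Delta)$ in the same way as in Lemma~\ref{lem:discretization_error}. Hence I would reuse that estimate verbatim,
\begin{align}
\sum_{k=1}^K \left\|\tilde\Xi_k^{\mathrm{off}}\right\| \le \left\|\int_0^\Delta dt\,(V(t)-V)\right\| \sum_{k=1}^K k\binom{\tilde M}{k}(\Delta\|V\|)^{k-1} \le \left\|\int_0^\Delta dt\,(V(t)-V)\right\| \tilde M e^{t\|V\|}.
\end{align}
Imposing $\Delta\|H_0\|\le 1/2$, i.e. $\tilde M\ge 2t\|H_0\|$, the BCH estimate Eq.~\eqref{eq:CommBound} gives $\|\int_0^\Delta dt\,(V(t)-V)\|\le 2\Delta^2\|H_0\|\|V\|$, so this contribution is at most $2t^2\|H_0\|\|V\|e^{t\|V\|}/\tilde M$.

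The collision part is the genuinely new ingredient. On each diagonal-touching cell both the simplex-restricted integral and the retained term $\Delta^k V(m_k\Delta)\cdots V(m_1\Delta)$ have norm at most $\Delta^k\|V\|^k$, so $\|\tilde\Xi_k^{\mathrm{col}}\|\le 2(\Delta\|V\|)^k\,\#\{\text{collision cells}\}$, where the number of weakly-ordered $k$-tuples with at least one equality is $\binom{\tilde M+k-1}{k}-\binom{\tilde M}{k}$. \emph{The main obstacle} is to show this ``collision volume'' is genuinely of lower order in $1/\tilde M$, so that the with-collision bound reproduces the same $1/\tilde M$ scaling as the collision-free one. I would bound the ratio $\binom{\tilde M+k-1}{k}/\binom{\tilde M}{k}=\prod_{j=0}^{k-1}\frac{\tilde M+j}{\tilde M-j}$ and use the hypothesis $\tilde M\ge (K-1)^2/\ln 2$ to force this ratio below $2$ for every $k\le K$, whence $\binom{\tilde M+k-1}{k}-\binom{\tilde M}{k}\le \tfrac{c\,k^2}{\tilde M}\binom{\tilde M}{k}$ for an absolute constant $c$. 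Summing with $\binom{\tilde M}{k}(\Delta\|V\|)^k\le (t\|V\|)^k/k!$ and $\sum_k k^2x^k/k!=x(1+x)e^x$ then yields $\sum_k\|\tilde\Xi_k^{\mathrm{col}}\|\le 4t^2\|V\|^2 e^{t\|V\|}/\tilde M$ up to the precise constant.

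Finally I would combine the two pieces. Using $\|V\|\le\|H_0\|$, which is precisely the interaction-picture regime for which this algorithm is intended, to replace $\|V\|^2$ by $\|H_0\|\|V\|$, the total discretization error is at most $6t^2\|H_0\|\|V\|e^{t\|V\|}/\tilde M$. Demanding this be $\le\epsilon_2$ gives $\tilde M\ge 6t^2\|H_0\|\|V\|e^{t\|V\|}/\epsilon_2$; together with the two structural requirements $\tilde M\ge 2t\|H_0\|$ (needed for the BCH step) and $\tilde M\ge (K-1)^2/\ln 2$ (needed to control the collision ratio), taking the maximum reproduces the claimed bound exactly. The delicate point remains the collision-volume count, since that is where the present argument departs from, yet must recover, the collision-free estimate.
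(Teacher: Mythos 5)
There is a genuine gap in your treatment of the collision cells. You bound each diagonal-touching cell by the triangle inequality, taking the simplex-restricted integral and the retained discrete term $\Delta^k V(m_k\Delta)\cdots V(m_1\Delta)$ \emph{separately}, each of norm at most $\Delta^k\|V\|^k$. This discards precisely the cancellation that makes the with-collisions discretization work: the whole point of retaining the (suitably weighted) diagonal terms is that they cancel the leading-order part of the integral over those cells. The paper's proof writes the diagonal contribution $A_k$ as a telescoping sum in which every summand contains a factor $V(m\Delta+t_i)-V(m\Delta)$, so that the BCH estimate $\bigl\|\int_0^\Delta dt\,(V(t)-V)\bigr\|\le 2\Delta^2\|H_0\|\|V\|$ applies and each collision cell contributes $O(\Delta^{k+1}\|H_0\|\|V\|^k)$ rather than $O(\Delta^k\|V\|^k)$. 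Your crude bound instead yields a collision contribution of order $t^2\|V\|^2 e^{t\|V\|}/\tilde M$, which you then convert to the claimed $t^2\|H_0\|\|V\|$ form only by invoking $\|V\|\le\|H_0\|$ — an assumption that appears nowhere in the lemma, which is stated for arbitrary Hermitian $H_0,V$. Without that extra hypothesis your argument proves a strictly weaker statement (and, e.g., gives no control at all in terms of the stated bound when $\|H_0\|\ll\|V\|$).

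The rest of your structure does mirror the paper: the off-diagonal cells are handled verbatim by Lemma~\ref{lem:discretization_error}'s first term, giving $2t^2\|H_0\|\|V\|e^{t\|V\|}/\tilde M$, and your collision-volume count $\binom{\tilde M+k-1}{k}-\binom{\tilde M}{k}\lesssim \frac{k^2}{\tilde M}\binom{\tilde M}{k}$ under $\tilde M\ge (K-1)^2/\ln 2$ plays the same role as the paper's bound $\sum_{q=1}^{k-1}\binom{k-1}{q}\binom{\tilde M}{k-q}\le 2\tilde M^k/(k-1)!$. To close the gap, replace the per-cell triangle inequality by the telescoping identity $A_1A_2\cdots A_k - B_1B_2\cdots B_k=\sum_i A_1\cdots A_{i-1}(A_i-B_i)B_{i+1}\cdots B_k$ applied on each diagonal cell with $A_i=V(m\Delta+t_i)$ and $B_i=V(m\Delta)$, integrate each difference over $[0,\Delta]$, and then run your combinatorial count; this yields the collision contribution $4t^2\|H_0\|\|V\|e^{t\|V\|}/\tilde M$ and hence the stated constant $6$ with no relation between $\|V\|$ and $\|H_0\|$ assumed.
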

\begin{proof}

For notational ease, we use $M$ instead of $\tilde{M}$ in this proof. 

Following the same method as for the proof of Lemma~\ref{lem:discretization_error}, we want to bound $\|\Xi_k\|$. The first term, $\|\Xi_1\|$ is identical, so let's consider $\Xi_2$ first:

\begin{align}
%\nonumber    
\tilde{\Xi}_2 &= \int^t_{t_2} dt_2  V(t_2) \int^{t_2}_{t_1=0} dt_1 V(t_1) - \frac{\Delta^2}{2} \sum^{M-1}_{m_2=0}\sum^{M-1}_{m_1=0} \mathcal{T}\left (V(m_2 \Delta)  V(m_1 \Delta)\right ) \\
     & = \sum^{M-1}_{m_2=1} \int^\Delta_{0} dt_2  V(m_2 \Delta + t_2)
\sum^{m_2-1}_{m_1=0} \int^\Delta_{0} dt_1  V(m_1 \Delta + t_1) - \Delta^2 \sum^{M-1}_{m_2=1} V(m_2 \Delta) \sum^{m_2-1}_{m_1=0} V(m_1 \Delta)\\
\nonumber & + \frac{1}{2}\sum^{M-1}_{m=0} \int^\Delta_{0} dt_1 \int^\Delta_{0} dt_2 V(m\Delta + t_2)V(m\Delta + t_1) -\frac{\Delta^2}{2}\sum^{M-1}_{m=0}V(m\Delta)^2
\end{align}

\noindent Since the first two terms are exactly what we had previously, let's focus on the two last terms for now:

\begin{align}
    %\nonumber 
    A_2 & = \frac{1}{2}\sum^{M-1}_{m=0} \int^\Delta_{0} dt_1 \int^\Delta_{0} dt_2 \left(V(m\Delta + t_2)V(m\Delta + t_1) -V(m\Delta)^2 \right) \\
    &= \frac{1}{2}\sum^{M-1}_{m=0} \int^\Delta_{0} dt_1 \int^\Delta_{0} dt_2 \Bigg( \bigg(V(m\Delta+t_2)-V(m\Delta)\bigg)V(m\Delta) +V(m\Delta+t_2)\bigg(V(m\Delta+t_1)-V(m\Delta)\bigg) \Bigg).
\end{align}

\noindent Taking the norm of the above expression, applying the triangle inequality and submultiplicative and unitary invariance property of the norm, we get

\begin{align}
\|A_2\| &\leq {M \choose 1} \left\| \int^\Delta_{0} dt (V(t) - V) \right\| \Delta \|V\|.
\end{align}
In general, we have : 

\begin{align}
%\nonumber 
\frac{\|A_k\|}{ \left\| \int^\Delta_{0} dt (V(t) - V) \right\|( \Delta \|V\|)^{k-1}} &\leq \sum^{k-1}_{q=1}{k - 1 \choose q} {M \choose {k-q}} \\
%\nonumber 
& =  \sum^{k-1}_{q=1}\frac{(k-1)!}{(k-1)!} \frac{(k-1)!}{q!(k-1-q)!} {M \choose {k-q}} \\
%\nonumber 
& = \frac{1}{(k-1)!} \sum^{k-1}_{q=1} \frac{(k-1)!^2}{q!(k-1-q)!} \frac{M!}{(k-q)!(M-k+q)!}\\
& \leq \frac{1}{(k-1)!} \sum^{k-1}_{q=1} \frac{(k-1)^{2q}M^{k-q}}{q!} \\
%\nonumber 
& = \frac{M^k}{(k-1)!} \sum^{k-1}_{q=1} \frac{1}{q!} \left(\frac{(k-1)^{2}}{M}\right)^q\\
%\nonumber 
& \leq \frac{M^k}{(k-1)!} \sum^{\infty}_{q=0} \frac{1}{q!} \left(\frac{(k-1)^{2}}{M}\right)^q \\ 
%\nonumber 
& = \frac{M^k}{(k-1)!} e^{(k-1)^2/M}
\end{align}

If we assume that $(K-1)^2/\ln 2 < M$, we obtain:

\begin{align}
\frac{\|A_k\|}{ \left\| \int^\Delta_{0} dt (V(t) - V) \right\| ( \Delta \|V\|)^{k-1}} &\leq  
\frac{2 M^k}{(k-1)!}
\end{align}

When $\Delta \|H_0\| \leq 1/2$, we have

\begin{align}
%\nonumber 
\|\tilde{\Xi}_k\| &\leq  \left(k {M \choose k}+\frac{ 2M^k}{(k-1)!} \right)\left\| \int^\Delta_{0} dt (V(t) - V) \right\| (\Delta \|V\|)^{k-1}\\
&\leq 2\left(k {M \choose k}+\frac{2M^k}{(k-1)!}\right) (\Delta \|V\|)^{k} \Delta \|H_0\|.
\end{align}

Now considering the summation over $k$, the first part can be bounded by 

\begin{align}
%\nonumber 
2\Delta \|H_0\| \sum^K_{k=1} k {M \choose k} (\Delta \|V\|)^{k} & =  2\Delta \|H_0\| \sum^{K}_{k=1} k \frac{M!}{k!(M-k)!}\left(\frac{t \|V\|}{M}\right)^{k} \\
%\nonumber 
& \leq 2\Delta \|H_0\| \sum^{K}_{k=1} k \frac{t^k \|V\|^k}{k!} \\
& = 2\Delta t \|H_0\| \|V\| \sum^{K}_{k=1} \frac{(t \|V\|)^{k-1}}{(k-1)!}\\
%\nonumber 
& \leq \frac{2t^2 \|H_0\| \|V\|}{M} \sum^{\infty}_{k=0} \frac{(t \|V\|)^{k}}{k!}\\
%\nonumber 
& = \frac{2t^2 \|H_0\| \|V\|}{M}e^{t \|V\|}
\end{align}

the second term can be bounded by :
\begin{align}\label{eq:LemmaA3-3}
%\nonumber 
4\Delta \|H_0\| \sum_{k=1}^K\frac{M^k}{(k-1)!}(\Delta \|V\|)^{k} &= 4\Delta \|H_0\| \sum_{k=1}^K\frac{(t \|V\|)^k}{(k-1)!} \\
&= \frac{4t^2 \|H_0\|\|V\|}{M} \sum_{k=1}^K\frac{(t \|V\|)^{k-1}}{(k-1)!} \\
%\nonumber 
&\leq \frac{4t^2 \|H_0\|\|V\|}{M} e^{t \|V\|}.
\end{align}

We then obtain that $\|D_K(t) - \tilde D_{(K,M)}(t)\| \leq \epsilon/2$ is satisfied for all

\begin{align}
M \geq \frac{6t^2 \|H_0\| \|V\|e^{t\|V\|}}{\epsilon}.
\end{align}

Together with the condition $\Delta \|H_0\| \leq 1/2$, the proof is complete.
\end{proof}

\begin{theorem}[Approximating time evolution in the interaction picture with truncated-discretized Dyson series without collisions]\label{thm:ApproximatingInteractionPictureTimeEvolWithTruncatedDiscretizedDysonSeriesWithoutCollision} Let $H= H_0 + V$ be a Hamiltonian of a pair of Hermitian operators $A, B$. Let $ D_{(K,{M})}(t)$ be the $(K,{M})$-truncated-discretized Dyson series defined as in Definition~\ref{def:TruncDiscDysonSeriesWithCol}, where $t \in \mathbb{R}$, $K \in \mathbb{N}^+$ and ${M} \in \mathbb{N}^+$, and let $\epsilon, \epsilon_1, \epsilon_2 \in \mathds{R}^+$ be such that $\epsilon= \epsilon_1 + \epsilon_2$.
Let $\mathcal{T}\left[ e^{-i\int^t_0 V(s) ds} \right]$ be the time evolution in interaction picture defined as inside of Lemma~\ref{lem:SchrodingerInteractionPictureTimeEvolutionRelation}. Then,

\begin{align}\label{thm:TruncDiscDysonApproximationError}
\left\|\mathcal{T}\left[ e^{-i\int^t_0 V(s) ds} \right] -D_{(K,{M})}(t) \right\| \leq \epsilon
\end{align}
for all
\begin{align}\label{thm:TruncDiscDysonApproximationErrorConditionKWithCollisions}
K \geq \max \left\{2 \|V\| t,-1 + \frac{\ln 1/\epsilon_1}{W\left(\frac{\ln 1/\epsilon_1}{te \|V\|}\right)}\right\}, \; \textrm{or more explicitly}, \; K\geq \max \left\{2 \|V\| t, e\|V\|t + \ln (1/\epsilon_1) \right\},
\end{align}
and for all
\begin{align}\label{thm:TruncDiscDysonApproximationWithoutCollisionErrorConditionM}
M \geq \max\left\{2t\|H_0\|, \frac{(K-1)^2}{\ln 2}, \frac{2t^2 \|V\|e^{t\|V\|}( \|H_0\|  + 2\|V\|)}{\epsilon_2} \right\}.
\end{align}

\end{theorem}
\begin{proof}
Follows immediately from Lemma~\ref{lem:truncerror} and Lemma~\ref{lem:discretization_error}.
\end{proof}

\begin{theorem}[Approximating time evolution in interaction picture with truncated-discretized Dyson series with collisions]\label{thm:ApproximatingInteractionPictureTimeEvolWithTruncatedDiscretizedDysonSeriesWithCollision} Let $H= H_0 + V$ be a Hamiltonian of sum of Hermitian operators $H_0, V$. Let $\tilde D_{(K,\tilde{M})}(t)$ be the $(K,\tilde{M})$-truncated-discretized Dyson series defined as in Definition~\ref{def:TruncDiscDysonSeriesWithCol}, where $t \in \mathbb{R}$, $K \in \mathbb{N}^+$ and $\tilde{M} \in \mathbb{N}^+$, and let $\epsilon, \epsilon_1, \epsilon_2 \in \mathds{R}^+$ be such that $\epsilon= \epsilon_1 + \epsilon_2$.
Let $\mathcal{T}\left[ e^{-i\int^t_0 V(s) ds} \right]$ be the time evolution in interaction picture defined as inside of Lemma~\ref{lem:SchrodingerInteractionPictureTimeEvolutionRelation}. Then,

\begin{align}\label{thm:TruncDiscDysonApproximationError2}
\left\|\mathcal{T}\left[ e^{-i\int^t_0 V(s) ds} \right] - \tilde D_{(K,\tilde{M})}(t) \right\| \leq \epsilon
\end{align}
for all
\begin{align}\label{thm:TruncDiscDysonApproximationWithCollisionErrorConditionK}
K \geq \max \left\{2 \|V\| t,-1 + \frac{\ln 1/\epsilon_1}{W\left(\frac{\ln 1/\epsilon_1}{te \|V\|}\right)}\right\}, \; \textrm{or more explicitly}, \; K\geq \max \left\{2 \|V\| t, e\|V\|t + \ln (1/\epsilon_1) \right\},
\end{align}
and for all
\begin{align}\label{thm:TruncDiscDysonApproximationWithCollisionErrorConditionM}
\tilde{M} \geq \max \left\{ 2t\|H_0\|,\frac{(K-1)^2}{\ln 2}, \frac{6t^2 \|H_0\| \|V\|e^{t\|V\|}}{\epsilon_2}\right\}.
\end{align}

\end{theorem}
\begin{proof}
Follows immediately from Lemma~\ref{lem:truncerror} and Lemma~\ref{lem:discretization_error_with_col}.
\end{proof}

We collect these results as a corollary:

\begin{corollary}[Approximating time evolution in Schr\"odinger picture with truncated-discretized Dyson series]\label{cor:TimeEvolutionSchrPicWithTruncDiscDysonSeries}
Let $H= H_0 + V$ be a time-independent Hamiltonian with Hermitian $H_0$ and $V$.
Let $D_{(K,M)}$ and $\tilde{D}_{(K,\tilde{M})}$ be given as in Definition~\ref{def:TruncDiscDysonSeries} and  Definition~\ref{def:TruncDiscDysonSeriesWithCol}, respectively.
Then, in the case without collisions
\begin{align}\label{eq:TimeEvolutionSchrPicWithTruncDiscDysonSeries1}
\left\| e^{-iHt} - e^{-iH_0 t}D_{(K,M)}(t) \right\| \leq \epsilon
\end{align}
for all
\begin{align}
K \geq \max \left\{2 \|V\| t,-1 + \frac{\ln 1/\epsilon_1}{W\left(\frac{\ln 1/\epsilon_1}{te \|V\|}\right)}\right\}\; \textrm{or more explicitly}, \; K\geq \max \left\{2 \|V\| t, e\|V\|t + \ln (1/\epsilon_1) \right\},
\end{align}
and 
\begin{align}
M \geq \max\left\{2t\|H_0\|, \frac{(K-1)^2}{\ln 2}, \frac{2t^2 \|V\|e^{t\|V\|}( \|H_0\|  + 2\|V\|)}{\epsilon_2} \right\}.
\end{align}
In the case with collisions,
\begin{align}\label{eq:TimeEvolutionSchrPicWithTruncDiscDysonSeries2}
\left\| e^{-iHt} - e^{-iH_0 t}\tilde{D}_{(K,\tilde{M})}(t) \right\| \leq \epsilon
\end{align}
is satisfied for all
\begin{align}
K \geq \max \left\{2 \|V\| t,-1 + \frac{\ln 1/\epsilon_1}{W\left(\frac{\ln 1/\epsilon_1}{te \|V\|}\right)}\right\}\; \textrm{or more explicitly}, \; K\geq \max\{ 2\|V\|t, e\|V\|t + \ln (1/\epsilon_1) \},
\end{align}
and 
\begin{align}
\tilde{M} \geq \max \left\{ 2t\|H_0\|,\frac{(K-1)^2}{\ln 2}, \frac{6t^2 \|H_0\| \|V\|e^{t\|V\|}}{\epsilon_2}\right\}.
\end{align}
\end{corollary}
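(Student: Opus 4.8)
The plan is to reduce the corollary to the two ingredients it explicitly names: the exact factorization of Lemma~\ref{lem:SchrodingerInteractionPictureTimeEvolutionRelation} and the approximation guarantees of Theorem~\ref{thm:ApproximatingInteractionPictureTimeEvolWithTruncatedDiscretizedDysonSeries} (and its with-collision counterpart, Theorem~\ref{thm:ApproximatingInteractionPictureTimeEvolWithTruncatedDiscretizedDysonSeriesWithCollision}). First I would invoke Lemma~\ref{lem:SchrodingerInteractionPictureTimeEvolutionRelation} to write the \emph{exact} identity
\begin{align}
e^{-iHt} = e^{-iH_0 t}\, U_I(t), \qquad U_I(t) = \mathcal{T}\left[ e^{-i\int_0^t V(s)\,ds}\right],
\end{align}
which holds with no approximation for the time-independent Hamiltonian $H = H_0 + V$.

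Next I would subtract the candidate approximant and factor out the common prefactor:
\begin{align}
e^{-iHt} - e^{-iH_0 t} D_{(K,M)}(t) = e^{-iH_0 t}\left( U_I(t) - D_{(K,M)}(t) \right).
\end{align}
Since $H_0$ is Hermitian, $e^{-iH_0 t}$ is unitary, so by unitary invariance of the spectral norm
\begin{align}
\left\| e^{-iHt} - e^{-iH_0 t} D_{(K,M)}(t) \right\| = \left\| U_I(t) - D_{(K,M)}(t) \right\|.
\end{align}
The right-hand side is precisely the quantity bounded in Theorem~\ref{thm:ApproximatingInteractionPictureTimeEvolWithTruncatedDiscretizedDysonSeries}, so under the stated conditions on $K$ and $M$ it is at most $\epsilon = \epsilon_1 + \epsilon_2$; this establishes Eq.~\eqref{eq:TimeEvolutionSchrPicWithTruncDiscDysonSeries1}.

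The collision case proceeds identically: replacing $D_{(K,M)}$ by $\tilde{D}_{(K,\tilde{M})}$ and appealing to Theorem~\ref{thm:ApproximatingInteractionPictureTimeEvolWithTruncatedDiscretizedDysonSeriesWithCollision} in place of the collision-free theorem yields Eq.~\eqref{eq:TimeEvolutionSchrPicWithTruncDiscDysonSeries2} with the modified bound on $\tilde{M}$. I do not expect a genuine obstacle here: the analytic effort has already been spent in the truncation bound (Lemma~\ref{lem:truncerror}) and the two discretization bounds (Lemma~\ref{lem:discretization_error} and Lemma~\ref{lem:discretization_error_with_col}), and the present step merely transports their error estimate through the unitary $e^{-iH_0 t}$. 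The one fact requiring explicit mention — and the reason the statement is packaged as a corollary rather than a theorem — is that multiplication by the unitary prefactor leaves the operator norm unchanged, which is immediate.
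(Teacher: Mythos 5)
Your proposal is correct and follows exactly the route the paper takes: the paper's proof is the one-line remark that the corollary "follows immediately" from Lemma~\ref{lem:SchrodingerInteractionPictureTimeEvolutionRelation} and Theorems~\ref{thm:ApproximatingInteractionPictureTimeEvolWithTruncatedDiscretizedDysonSeries}/\ref{thm:ApproximatingInteractionPictureTimeEvolWithTruncatedDiscretizedDysonSeriesWithCollision}, and you simply make explicit the step it leaves implicit, namely that the unitary prefactor $e^{-iH_0 t}$ is absorbed by unitary invariance of the spectral norm.
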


\proof{
Follows immediately from Lemma~\ref{lem:SchrodingerInteractionPictureTimeEvolutionRelation} and Theorem~\ref{thm:ApproximatingInteractionPictureTimeEvolWithTruncatedDiscretizedDysonSeries}.
} 

\subsection{Quantum algorithm for simulating time-evolution in the interaction picture}\label{subsec:QuantumAlgorithmInteractionPicture}

Our Hamiltonian consists of various types of terms, such as those that are diagonal in the basis of electric fields and fermion operators.
Furthermore, the electric field term ($H_E$) commutes with the mass term ($H_M$).
These two terms are collected in the Hamiltonian $H_0$, and the associated time evolution operators are fast-forwardable. In other words, for any time $s$, the Hamiltonian simulation
\begin{align}
    e^{-iH_0 s}=e^{-i(H_E+H_M)s}=e^{-iH_Es}e^{-iH_Ms}
\end{align}
is implementable with polylogarithmic number of gates in $\{N, 1/\epsilon, \eta, \ldots\}$. 
The quantum circuit for these fast-forwardable time evolutions are used while implementing $W_{(K,M)}$ and finally for $e^{-iHt}$. They consist of single-qubit rotations and CNOTs.
The error caused by these and other approximate single-qubit rotations, (excluding Clifford and T-gates), in total $n_{\text{rot}}$-many with an error $\tilde{\epsilon}_3= \epsilon_3/n_{\text{rot}}$, are considered altogether and give a total error at most $\epsilon_3$. 
Then the bit-precision for single qubit rotations are given by $b_{\rm{rot}}= \lceil \log_2 (1/\tilde{\epsilon}_3) \rceil= \lceil\log_2 (n_{\text{rot}}/\epsilon_3)\rceil$. 
Given that each rotation synthesis costs $O(b_{\rm{rot}})$ T-gates~\cite{kliuchnikov2023shorter}, and $b_{\rm{rot}}$ depends on $1/\epsilon_3$ only logarithmically, we typycally choose $\epsilon_3 \approx \epsilon_1/10 = \epsilon_2/10$.

We first give the following intermediate result that lays out the complexity of implementing a unitary that approximates the interaction picture time evolution $U_I(t_0/\alpha)$ for a given time segment $t_0/\alpha$.
In the proof, we construct a quantum circuit that approximates $U_I(t_0/\alpha)$, and give a detailed quantum circuit compilation and its cost in Section~\ref{subsec:CompilationAndResourceCount}. 

\begin{lemma}[Quantum circuit implementation of Dyson series $U_I(t_0/\alpha)$ for a time segment of $t_0/\alpha$]\label{lem:CircuitImplementationOfDysonSeries}
Let $D_{(K,M)}(t)$ and $\tilde{D}_{(K,M)}(t)$ be given as in Definition~\ref{def:TruncDiscDysonSeries} and Definition~\ref{def:TruncDiscDysonSeriesWithCol}, respectively.
Let $K,M$ be the truncation and discretization parameters for time $t_0/\alpha$ and error $\epsilon'$.
Let $\alpha$ be the rescaling factor that arises as a result of block-encoding $V$, and let $t_0 \in \mathbb{R}$ be such that $\beta= \sum^K_{k=0}t_0^k/k!= 2$. Finally, let $\mathcal{M}$ be the number of local terms in $H_0$.
Then, there exist quantum circuits that implement the unitary $W_{(K,M)}(t_0/\alpha, \epsilon)$ and $\tilde{W}_{(K,M)}(t_0/\alpha, \epsilon)$ such that
\begin{align}
\|W_{(K,M)}(t_0/\alpha,\epsilon) - U_I(t_0/\alpha)\| \leq \epsilon, \; \textrm{and} \; \|\tilde{W}_{(K,M)}(t_0/\alpha,\epsilon) - U_I(t_0/\alpha)\| \leq \epsilon.
\end{align}
Both circuits use $\mathcal{O}\left( \mathcal{M} \log(k\mathcal{M}/\epsilon) + K \right)$
single qubit rotations and $\tilde{\mathcal{O}}\left( K \log M \right)$ additional T-gates, where $\tilde{\mathcal{O}}$ hides additional $log$ factors in $K$.
Furthermore, they call the block encodings ($\textrm{BE}_{V/\alpha}$ and $\textrm{BE}^\dag_{V/\alpha}$) $2K$ many times.
\end{lemma}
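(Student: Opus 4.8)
The plan is to construct a block-encoding of the truncated-discretized Dyson series and then promote it to a unitary through a single round of oblivious amplitude amplification (OAA). Since Theorem~\ref{thm:ApproximatingInteractionPictureTimeEvolWithTruncatedDiscretizedDysonSeries} (and, for the collision case, Theorem~\ref{thm:ApproximatingInteractionPictureTimeEvolWithTruncatedDiscretizedDysonSeriesWithCollision}) already fixes $K$ and $M$ so that $\|U_I(t_0/\alpha) - D_{(K,M)}(t_0/\alpha)\| \le \epsilon_1+\epsilon_2$, it suffices to realize $D_{(K,M)}$ (resp.\ $\tilde D_{(K,M)}$) as the top-left block of a unitary, subnormalized by $\beta=\sum_{k=0}^K t_0^k/k!$, and to verify that OAA removes the subnormalization cleanly when $\beta=2$. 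The synthesis error of the elementary rotations then accounts for the remaining $\epsilon_3$, so that $\epsilon=\epsilon_1+\epsilon_2+\epsilon_3$.

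First I would recast $D_{(K,M)}$ as a linear combination of unitaries. Writing $V(s)=e^{iH_0 s}Ve^{-iH_0 s}$, each time-ordered product telescopes into alternating $H_0$ evolutions and bare factors of $V$,
\begin{align}
V(m_k\Delta)\cdots V(m_1\Delta)= e^{iH_0 m_k\Delta}\,V\,e^{-iH_0(m_k-m_{k-1})\Delta}\,V\cdots V\,e^{-iH_0 m_1\Delta},
\end{align}
so the gaps $m_j'=m_j-m_{j-1}$ are precisely the fast-forwarded $H_0$ times that sit between consecutive block-encodings of $V$. This is exactly the structure realized by the $\PREP$--$\SORT$--$\SEL$--$\SORT^\dagger$--$\PREP^\dagger$ circuit of Fig.~\ref{fig:IP}: the $k$-hot register produced by $\PREP_{\sqrt{t_0^k/k!}}$ in Eq.~\eqref{eq:WOutputOfPREP} selects the truncation order and carries the weights $\sqrt{t_0^k/k!}$; the $K$ uniformly-prepared time registers reproduce the nested sums of Definition~\ref{def:TruncDiscDysonSeries} once $\SORT$ has placed their values in increasing order; and $\SEL$ applies the $k$ controlled copies of $\BE_{V/\alpha}$ interleaved with the evolutions $e^{-iH_0 m_j'\Delta}$. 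Absorbing the phases $(-i)^k$ into the block-encoding convention, a term-by-term match of the weights shows that the encoded operator is $D_{(K,M)}(t_0/\alpha)/\beta$.

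Next I would apply OAA. The time step $t_0$ is chosen so that $\beta=2$, which makes the success amplitude $\sin\theta=1/\beta=1/2$, i.e.\ $\theta=\pi/6$, so that a single OAA round rotates it to $\sin 3\theta=1$. Because OAA is exact for a block-encoded \emph{unitary} at $\beta=2$ and $D_{(K,M)}$ is $(\epsilon_1+\epsilon_2)$-close to the genuine unitary $U_I(t_0/\alpha)$, the amplified circuit $W_{(K,M)}(t_0/\alpha,\epsilon)$ is a unitary differing from $U_I$ by at most $\epsilon_1+\epsilon_2$ in operator norm (the amplitude step itself being exact); adding the rotation-synthesis error $\epsilon_3$ gives the claimed $\|W_{(K,M)}-U_I\|\le\epsilon$. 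The collision variant $\tilde W_{(K,M)}$ is built identically, with $\SORT$ and the nested-sum bookkeeping following Definition~\ref{def:TruncDiscDysonSeriesWithCol} and the error controlled by Theorem~\ref{thm:ApproximatingInteractionPictureTimeEvolWithTruncatedDiscretizedDysonSeriesWithCollision}. The gate counts then follow by summing the subroutine costs of Table~\ref{tab:ResourceCostIP}: rotations arise from $\PREP$ ($O(K)$) and from the $O(K)$ fast-forwarded $H_0$ evolutions, each using $O(\mathcal{M})$ rotations synthesized to bit precision $O(\log(K\mathcal{M}/\epsilon))$, giving $\mathcal{O}(\mathcal{M}\log(K\mathcal{M}/\epsilon)+K)$ in total; $\SORT$, the time-register arithmetic, and the reflections contribute $\tilde{\mathcal{O}}(K\log M)$ further T-gates; and $\BE_{V/\alpha}$ together with $\BE^\dagger_{V/\alpha}$ is invoked $2K$ times.

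I expect the crux of the argument to be the bookkeeping of the second step. Verifying that the $k$-hot weighting, the uniform time registers, and $\SORT$ together produce exactly the strictly-ordered (respectively collision-inclusive) nested sums with the correct normalization $\beta$ requires carefully tracking how the $k!$ permutations of a distinct-time tuple are folded onto a single sorted configuration while the $\SORT$ record qubits keep the map reversible; the factors $\binom{M}{k}$, $M^{-k}$, and $1/k!$ must cancel precisely so that the residual coefficient is the Dyson weight $t_0^k/k!$. A secondary subtlety is ensuring that the OAA reflection acts only on the intended flag subspace, so that no spurious branch of the block-encoding leaks into the amplified output.
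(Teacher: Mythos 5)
Your proposal follows essentially the same route as the paper's proof: realize $D_{(K,M)}/\beta$ (resp.\ $\tilde D_{(K,M)}/\beta$) as the top-left block of the $\PREP$--$\SORT$--$\SEL$--$\SORT^\dagger$--$\PREP^\dagger$ circuit of Fig.~\ref{fig:IP}, use $\beta=2$ so that one round of (robust) oblivious amplitude amplification suffices, invoke Theorems~\ref{thm:ApproximatingInteractionPictureTimeEvolWithTruncatedDiscretizedDysonSeries} and~\ref{thm:ApproximatingInteractionPictureTimeEvolWithTruncatedDiscretizedDysonSeriesWithCollision} for closeness to $U_I(t_0/\alpha)$, and read the gate counts off the subroutine table. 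The only discrepancy is a constant-factor bookkeeping point: because $D_{(K,M)}$ is only $\delta$-close to unitary, robust OAA amplifies that deviation (the paper budgets $\epsilon/4$ for the Dyson error and obtains $3\epsilon/4$ after amplification), so your claim that the amplified circuit is still within exactly $\epsilon_1+\epsilon_2$ of $U_I$ slightly understates the error, though this is absorbed by rescaling the budget and does not affect the result.
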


\begin{proof}
We implement the $(K,M)$-truncated-discretized Dyson series with quantum circuits using the method of linear combination of unitaries (LCU). 
Using either $D_{(K,M)}(t_0/\alpha)$ or $\tilde{D}_{(K,M)}(t_0/\alpha)$, we have a linear combination of products of Hermitian operators, i.e., $V$, interleaved with unitary time evolutions, i.e., $e^{-iH_0 s}$ for various times $s$ that depend on the ancillae.
The rest of the proof uses only the particular approximation of the Dyson series that is given in Definition~\ref{def:TruncDiscDysonSeries}, except for one minor detail in the implementation of the bitonic sort, the proof and resource counts are identical.

\noindent The key circuit that implements an approximation to the Dyson series, conditioned on all the ancillae being in $|0\rangle$, is given in Fig.~\ref{fig:IP}.

\begin{sidewaysfigure}
    \centering
    \includegraphics[width=1\linewidth]{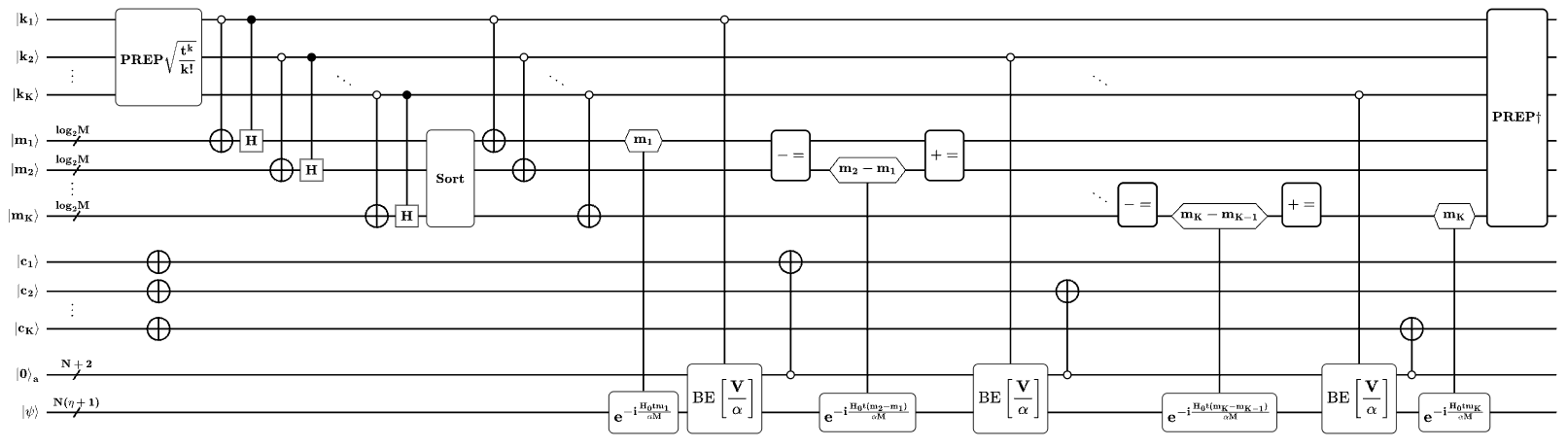}
    \caption{The circuit that implements the unitary $U$ that is given in Eq.~\eqref{eq:TheUnitaryU}}.
    \label{fig:IP}
\end{sidewaysfigure}

The cost of this circuit is: 
\begin{align}
\nonumber 2 & \left[\mathcal{C}  \left( \textrm{PREP}_{\sqrt{t_0^k/k!}}\right) + \right. \left. (K\log_2 M) \mathcal{C}(\textrm{C-Had}) + \mathcal{C}\left(\textrm{SORT}^{(K)}_{\log_2 M}\right)\right]\\ 
& \qquad + (2K-2)\mathcal{C}(\textrm{ADD}_{\log_2 M}) + K \mathcal{C}(\textrm{BE}_{V/\alpha}) + (K+1) \mathcal{C}(e^{-iH_0 s}) + K(N+1) \mathcal{C}(\text{Toffolis}).
\end{align} 
The first line is the cost of the $\PREP$ subroutine, whereas the second line is the cost of the $\SEL$ subroutine, and the construction is explained in Section~\ref{subsec:ResourceCostIP}.
Note that the prefactor $2$ in the first line comes from the uncompute (or $\PREP^\dag$), and the rest can be straightforwardly derived from the circuit in Fig.~\ref{fig:IP}. Moreover, $M$ can always be chosen as the next power of two, such that we can create a uniform state preparation with Hadamards instead of a general method such as in Ref.~\cite{babbush2018encoding}. 
Notice also that this is an upper bound, since some of the uncomputation can have a reduced cost.
Up to infidelity from rotation synthesis, this circuit results in the following unitary $U$ on any initial state $|0\rangle \otimes |\psi\rangle$:

\begin{align}\label{eq:TheUnitaryU}
U |0\rangle \otimes |\psi\rangle=  \frac{1}{\beta} |0\rangle \otimes D_{(K,M)}(t_0/\alpha)|\psi\rangle + (\ldots) |\Phi^{\perp}\rangle
\end{align}
where $(|0\rangle\langle 0| \otimes \mathds{1}) |\Phi^{\perp}\rangle= 0$, and $\beta$ (or implicitly, $t_0$) is chosen such that one needs only (and exactly) one round of amplitude amplification, i.e., $\beta\geq 2$. 
We prove this by following the circuit given in Fig.~\ref{fig:IP}. 
After applying $\textrm{PREP}_{\sqrt{t^k_0/k!}}$ and $\textrm{C-Had}^{\otimes K\log_2 M}$, we have

\begin{align}
\frac{1}{\sqrt{\beta}} \sum^K_{k=0}\sqrt{\frac{t_0^k}{k!}} \ket{1}^{\otimes k} \ket{0}^{\otimes K-k} \otimes \frac{1}{\sqrt{M^k}} \sum_{m_1, m_2, \ldots, m_k = 0}^{M-1} \ket{m_1} \ket{m_2}  \ldots \ket{m_k} \ket{0}^{K-k} \otimes \ket{\psi}.
\end{align}
After SORT, we obtain

\begin{align}\label{eq:AfterSortState}
\frac{1}{\sqrt{\beta}} \sum^K_{k=0}\sqrt{\frac{t_0^k}{M^k k!}} \ket{1}^{\otimes k} \ket{0}^{\otimes K-k} \otimes \Bigg[ \sum_{0 \leq m_{\sigma(1)} < \ldots < m_{\sigma(k)} \leq M-1} \ket{m_{\sigma(1)}}  \ldots \ket{m_{\sigma(k)}} \ket{0}^{K-k} \sum_{\sigma  \in \mathcal{S}_k | m_{\sigma(1)} < \ldots < m_{\sigma(k)}} \ket{\sigma} \\
\nonumber +\sum_{0 \leq m_1, \ldots, m_k \leq M-1 \atop  \exists i\neq j \text{ s.t }. m_j=m_i} \sum_{\sigma | m_{\sigma(1)} \leq ... \leq m_{\sigma(k)} } \ket{m_{\sigma(1)}}  \ldots \ket{m_{\sigma(k)}} \ket{0}^{K-k} \ket{\sigma} \Bigg] \otimes \ket{\psi} ,
\end{align}
where $\mathcal{S}_k$ is the permutation group of $k$ objects and $\sigma$ keeps track of the particular permutation that sorted the numbers $m_1, \ldots, m_k$ to $m_{\sigma_1}, \ldots, m_{\sigma_k}$ in increasing order. 
If one desires to implement $D_{(K,M)}$ rather then  $\tilde D_{(K,M)}$, we will then need to flag the branch without collision (the second term in eq.~(\ref{eq:AfterSortState}):
\begin{align}
\frac{1}{\sqrt{\beta}} \sum^K_{k=0}\sqrt{\frac{t_0^k}{M^k k!}} \ket{1}^{\otimes k} \ket{0}^{\otimes K-k} \otimes \Bigg[ \sum_{0 \leq m_{\sigma(1)} < \ldots < m_{\sigma(k)} \leq M-1} \ket{m_{\sigma(1)}}  \ldots \ket{m_{\sigma(k)}} \ket{0}^{K-k} \sum_{\sigma  \in \mathcal{S}_k | m_{\sigma(1)} < \ldots < m_{\sigma(k)}} \ket{\sigma} \ket{0} \\
\nonumber +\sum_{0 \leq m_1, \ldots, m_k \leq M-1 \atop  \exists i\neq j \text{ s.t }. m_j=m_i} \sum_{\sigma | m_{\sigma(1)} \leq ... \leq m_{\sigma(k)} } \ket{m_{\sigma(1)}}  \ldots \ket{m_{\sigma(k)}} \ket{0}^{K-k} \ket{\sigma} \ket{1} \Bigg] \otimes \ket{\psi} ,
\end{align}
Note that one could amplify over the flag qubit being in state $\ket{0}$. However, the number of states in the no collision branch is $\sum_{k=0}^K {M \choose k}k!$ which, for large $M$, makes the collision branch negligible. In other words, since the collisions contribute only to a slight increase of the induce one-norm, the amplification will be included in the overall OAA of the Dyson series.   

Indeed, this is expected, because Theorem~\ref{thm:ApproximatingInteractionPictureTimeEvolWithTruncatedDiscretizedDysonSeriesWithCollision} implies that $D_{(K,M)}(t)$ is $\epsilon$-close to unitary. 
After applying the sequence of $e^{-iH_0 m_it_0/\alpha M}$ and block encodings of $-i V/\alpha$, we obtain the state

\begin{align}
\nonumber & \frac{1}{\sqrt{\beta}}  \sum^K_{k=0}\sqrt{\frac{t_0^k}{M^k k!}}   \ket{1}^{\otimes k} \ket{0}^{\otimes K-k} \nonumber  \Bigg[  \sum_{0=m_1 < m_2 < \ldots < m_k \leq M-1} \ket{m_1} \ket{m_2}  \ldots \ket{m_k} \ket{0}^{K-k}\sum_{\sigma \in \mathcal{S}_k} \ket{\sigma} \ket{0} \prod_{j=1}^{k} \frac{V(m_j \Delta)}{\alpha} \ket{\psi} |0\rangle_{BE} \\
& +\sum_{0 \leq m_1, m_2, \ldots, m_k \leq M-1 \atop  \exists i\neq j \text{ s.t }. m_j=m_i} \sum_{\sigma | m_{\sigma(1)} \leq ... \leq m_{\sigma(k)} }\ket{m_{\sigma(1)}} \ket{m_{\sigma(2)}}  \ldots \ket{m_{\sigma(k)}} \ket{0}^{K-k} \ket{0}^{K-k} \ket{\sigma} \ket{1}\prod_{j=1}^{k} \frac{V(m_{\sigma(j)} \Delta)}{\alpha}\ket{\psi} |0\rangle_{BE}\Bigg] ,
\end{align}
After uncomputing the SORT, controlled Hadamards, and applying $\textrm{PREP}^\dag$, we obtain, in the $\ket{0}$ and no-collision branch of the state:
\begin{align}
& \nonumber =
\frac{1}{{\beta}} \ket{0}^{K(\log_2M+1)} \otimes \ket{0} \sum^K_{k=0}{\frac{t_0^k}{\alpha^k M^k}}\sum_{0=m_1 < m_2 < \ldots < m_k \leq M-1} \prod_{j=1}^{k} V(m_j \Delta) \ket{\psi} \ket{0}_{BE}  \\
& +\frac{1}{{\beta}} \ket{0}^{K(\log_2M+1)} \otimes \ket{1}\sum^K_{k=0}{\frac{t_0^k}{\alpha^k M^k k!}} \sum_{m_1, m_2, \ldots, m_k \leq M-1 \atop  \exists i\neq j \text{ s.t }. m_j=m_i} \mathcal{T}\left(\prod_{j=1}^{k} V(m_j \Delta)\right) \ket{\psi}\ket{0}_{BE} +  \ldots \\ 
%\nonumber 
& =  \frac{1}{{\beta}} \ket{0}^{K(\log_2M+1)+1} \ket{0}_{BE} \otimes D_{(K,M)}(t_0/\alpha)\ket{\psi} + \ldots
\end{align}

With similar arguments, and discarding the flag register for collision cases (by not creating it in the first place), we get:

\begin{align}
\nonumber = & 
\frac{1}{{\beta}} \ket{0}^{K(\log_2M+1)} \sum^K_{k=0}{\frac{t_0^k}{\alpha^k M^k}}\sum_{0=m_1 < m_2 < \ldots < m_k \leq M-1} \prod_{j=1}^{k} V(m_j \Delta) \ket{\psi}\ket{0}_{BE}  \\
& + \frac{1}{{\beta}} \ket{0}^{K(\log_2M+1)} \sum^K_{k=0}{\frac{t_0^k}{\alpha^k M^k k!}} \sum_{m_1, m_2, \ldots, m_k \leq M-1 \atop  \exists i\neq j \text{ s.t }. m_j=m_i} \mathcal{T}\left(\prod_{j=1}^{k} V(m_j \Delta)\right) \ket{\psi}\ket{0}_{BE} + \ldots\\
%\nonumber 
= & \frac{1}{{\beta}} \ket{0}^{K(\log_2M+1)}\ket{0}_{BE} \otimes \tilde D_{(K,M)}(t_0/\alpha)\ket{\psi} + \ldots
\end{align}

\noindent We further choose $K$ and $M$ such that $D_{(K,M)}(t_0/\alpha)$ approximates the Dyson series $U_{I}(t_0/\alpha)$ up to error $\epsilon/4 = (\epsilon_1 + \epsilon_2)  \lceil \frac{t \alpha}{t_0}\rceil \frac{1}{4}$, then we know that $D_{(K,M)}(t_0/\alpha)$ is a unitary up to error $\epsilon/4$. 
After a single round of robust oblivious amplitude amplification~\cite{berry2015simulating}, we ensure that we implement a unitary that is $3\epsilon/4$ close to $D_{(K,M)}(t/\alpha)$, which is also $\epsilon$ close to $U_{I}(t_0/\alpha)$ by the choice of $K$ and $M$. This requires using two of the unitary circuits $U$ and one $U^\dag$. The resources estimate reported assume that the cost of $U$ and $U^\dagger$ are identical.
The remaining unbudgeted error is spared for other components in the rest of the circuit, e.g., error from cutoff and single-qubit rotations, see Table~\ref{tab:ListOfTheErrors}.
Hence the main claim follows.

The rest of the proof is devoted to the computational resource counts parts of the circuit except the cost of the block encoding $\textrm{BE}_{V/\alpha}$, since the circuit implementation of this highly depends on the structure of $V$ and hence can vary case by case. For more details on these resource costs, see. Appendix~\ref{subsec:CompilationAndResourceCount}.
Expressing these costs in big-O notation completes the proof. 
\end{proof}

We finally put all the pieces together, to find the total cost of implementing the quantum circuit $W(t, \epsilon)$ that approximates the time-evolution operator up to an error at most $\epsilon$.

\begin{corollary}[Quantum circuit implementation of the time evolution based on the interaction picture]\label{thm:Main}
Let $H= H_0 + V$ be a Hamiltonian such that $H_0$ can be expressed as linear combination of $\mathcal{M}$ commuting $k$-local unitaries (or more generally exponentially fast-forwardable), and let $\alpha>1$ be the minimal rescaling factor of $V$ such that an efficient unitary block encoding $\textrm{BE}_{V/\alpha}$ of $V/\alpha$ is possible.
Let $t \in \mathbb{R}$.
Then, there exists a quantum circuit implementing the unitary $W(t, \epsilon')$ that approximates the time evolution $e^{-iHt}$ such that  
\begin{align}
\|W(t,\epsilon') - e^{-iHt}\| \leq \epsilon'.
\end{align}
The circuit uses
\begin{align}
\mathcal{O} \left(\alpha t \mathcal{M} \log(k\mathcal{M} \alpha t/\epsilon') + \log(\alpha t /\epsilon') \right)
\end{align}
single qubit rotations and
\begin{align}
\mathcal{O}\left(\alpha t \log M + \alpha t \log M \log(\alpha t/\epsilon') \right)
\end{align}
additional Clifford+T gates.
Furthermore, the circuit calls the block encoding
$\textrm{BE}_{V/\alpha}$
\begin{align}
\mathcal{O}(\alpha t \log(\alpha t/\epsilon'))
\end{align}
many times.
\end{corollary}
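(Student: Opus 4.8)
The plan is to assemble the full-time circuit $W(t,\epsilon')$ from the single-segment circuit of Lemma~\ref{lem:CircuitImplementationOfDysonSeries} and to control both the accumulated error and the gate count by a triangle-inequality argument. First I would invoke Lemma~\ref{lem:SchrodingerInteractionPictureTimeEvolutionRelation} to write $e^{-iHt} = e^{-iH_0 t}\,U_I(t)$, reducing the task to implementing the interaction-picture propagator $U_I(t)$. I then partition $[0,t]$ into $\mathsf{r} = \lceil \alpha t/t_0\rceil$ segments of equal rescaled length $\tau = t_0/\alpha$ and use the shift property of the time-ordered propagator, $U_I(j\tau,(j-1)\tau) = e^{iH_0(j-1)\tau} U_I(\tau) e^{-iH_0(j-1)\tau}$, which holds because conjugation by a fixed unitary commutes with time-ordering. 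Substituting $e^{-iH\tau} = e^{-iH_0\tau}U_I(\tau)$ into $e^{-iHt} = (e^{-iH\tau})^{\mathsf{r}}$ and pulling every $e^{-iH_0\tau}$ factor to the left then produces exactly these conjugated segments, so that the interleaved $H_0$-evolutions telescope and leave only the single overall factor $e^{-iH_0 t}$ applied at the end.

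Next I would allocate the error budget. Reserving $\epsilon_{\textrm{cutoff}}$ for the electric-field truncation and $\epsilon_3$ for the $n_{\text{rot}}$ rotation syntheses (as in Table~\ref{tab:ListOfTheErrors}), I assign each segment a Dyson-series error of $\epsilon'/\mathsf{r}$, split into $\epsilon_1 + \epsilon_2$. Lemma~\ref{lem:CircuitImplementationOfDysonSeries}, together with Corollary~\ref{cor:TimeEvolutionSchrPicWithTruncDiscDysonSeries}, then supplies for each segment a circuit with $\|W_{(K,M)}(t_0/\alpha,\epsilon'/\mathsf{r}) - U_I(t_0/\alpha)\| \leq \epsilon'/\mathsf{r}$, realized by a single round of robust oblivious amplitude amplification since $t_0$ is fixed by $\beta = \sum_{k=0}^K t_0^k/k! = 2$. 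Because the spectral-norm error of a product of unitaries is subadditive in the factor errors, composing the $\mathsf{r}$ segment circuits with the classically known $e^{-iH_0 t}$ yields $\|W(t,\epsilon') - e^{-iHt}\| \leq \mathsf{r}\cdot(\epsilon'/\mathsf{r}) = \epsilon'$, which establishes correctness.

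The remaining step is the resource count, obtained by multiplying the per-segment costs of Lemma~\ref{lem:CircuitImplementationOfDysonSeries} by $\mathsf{r} = \mathcal{O}(\alpha t)$. Here I would use the standard block-encoding normalization $\|V\|\leq \alpha$, so that $\|V\|\,\tau \leq t_0 = \mathcal{O}(1)$; through Corollary~\ref{cor:TimeEvolutionSchrPicWithTruncDiscDysonSeries} this forces $K = \mathcal{O}(\log(\mathsf{r}/\epsilon')) = \mathcal{O}(\log(\alpha t/\epsilon'))$ and controls $\log M$. The $2K$ block-encoding calls per segment then total $\mathcal{O}(\alpha t\log(\alpha t/\epsilon'))$, the single-qubit rotations total $\mathcal{O}(\alpha t\,\mathcal{M}\log(k\mathcal{M}\alpha t/\epsilon') + \log(\alpha t/\epsilon'))$, and the additional Clifford$+$T gates total $\mathcal{O}(\alpha t\log M + \alpha t\log M\log(\alpha t/\epsilon'))$, matching the claim.

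I expect the segment-composition step to be the main obstacle: one must carefully justify the shift identity for the time-ordered exponential and verify that the interleaved $H_0$-evolutions really telescope to a single $e^{-iH_0 t}$, while simultaneously accounting for the fact that each $D_{(K,M)}$ is only approximately unitary. The error introduced by robust OAA must therefore be folded into the per-segment $\epsilon'/\mathsf{r}$ budget before the subadditive composition bound can be applied, and one must check that the combined allocation of $\epsilon_1,\epsilon_2,\epsilon_3$ and $\epsilon_{\textrm{cutoff}}$ stays consistent with the stated logarithmic dependence of $K$, $M$, and $b_{\rm rot}$ on $1/\epsilon'$.
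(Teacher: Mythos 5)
Your proposal is correct and follows essentially the same route as the paper: split the evolution into $\mathsf{r}=\lceil \alpha t/t_0\rceil$ segments, apply Lemma~\ref{lem:CircuitImplementationOfDysonSeries} to each with per-segment budget $\epsilon'/\mathsf{r}$, compose with the fast-forwarded $e^{-iH_0}$ evolutions, and multiply the per-segment costs (with $K=\mathcal{O}(\log(\alpha t/\epsilon'))$ from $\|V\|\le\alpha$, $\|V\|t_0/\alpha=\mathcal{O}(1)$) by $\mathsf{r}$. The only cosmetic difference is that you telescope the interleaved $e^{-iH_0 t_0/\alpha}$ factors into a single final $e^{-iH_0 t}$, whereas the paper's proof writes the interleaved product directly; the two circuits are equivalent, as your shift-identity argument shows.
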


\begin{proof}
The circuit $W(t,\epsilon')$ we implement repeats the product of unitaries $W_0$ and $W_{(K,M)}(t_0/\alpha)$, $\mathsf{r}= \lceil t\alpha/t_0 \rceil$ many times.
More precisely,

\begin{align}
W(t, \epsilon')= e^{-iH_0 (t - (\mathsf{r}-1) t_0/t\alpha)} W_{(K,M)}(t - (\mathsf{r}-1) t_0/t\alpha , \epsilon'/\mathsf{r}) \left[e^{-iH_0 t_0/\alpha} W_{(K,M)}(t_0/\alpha, \epsilon'/\mathsf{r}) \right]^{\mathsf{r}-1}.
\end{align}
Note that this circuit approximates the unitary operator $e^{-iHt}$ up to an error at most $\epsilon'$, which accounts for the truncation, discretization and rotation synthesis errors.
The truncation and discretization parameters $K$ and $M$ are determined in Theorem~\ref{thm:ApproximatingInteractionPictureTimeEvolWithTruncatedDiscretizedDysonSeries} and Theorem~\ref{thm:ApproximatingInteractionPictureTimeEvolWithTruncatedDiscretizedDysonSeriesWithCollision}, both for approximating the Dyson series without or with collisions.
The complexity of $W(t,\epsilon')$ is then upper bounded by
\begin{align}\label{eq:CostEquation}
\mathsf{r} \left( \mathcal{C} (e^{-iH_0 t_0/\alpha}) + \mathcal{C} \left(W_{(K,M)}\left(\frac{t_0}{\alpha}, \frac{\epsilon'}{\mathsf{r}}\right)\right) \right).
\end{align}

Then, the result follows by combining the costs $\mathcal{C} (e^{-iH_0 t_0/\alpha})$ and $\mathcal{C} (W_{(K,M)}(t_0/\alpha, \epsilon'/\mathsf{r}))$, given in Section~\ref{subsec:CompilationAndResourceCount}.
Note that the choice of $t_0$ affects the total number times one calls the block encoding of $V$ in a few ways. 
Making the choice for $2= \beta= \sum^{K-1}_{k=0} t^k_0 /k!$, we find $t_0 = \ln 2 \approx 0.7$.
Then, each circuit $W_{(K,M)}(t_0/\alpha, \epsilon'/\mathsf{r})$ calls a number of $2K$ times the operator $BE_{V/\alpha}$  and  $K$ times its dagger $BE_{V/\alpha}^\dagger$.
In total the number of calls to the block encodings is bounded by
\begin{align}
\left\lceil \frac{3K \alpha t}{\ln2} \right\rceil ,
\end{align}
where $K=K\left(\frac{\ln2}{\alpha}, \frac{\ln2\epsilon'}{3t \alpha}\right)$, whose value is
\begin{align}
K= \mathcal{O}(\log(\alpha t/\epsilon')),
\end{align}
asymptotically.
\end{proof}

\subsection{Compilation and resource count for the Schwinger model}\label{subsec:CompilationAndResourceCount}

In this section, we give quantum circuit compilation details of the subroutines that appears in the implementation depicted in Fig.~\ref{fig:IP}.
In particular, these subroutines are elements that realize $W_{(K,M)}$, and the resource cost analysis is used for calculating the second term $\mathcal{C}(W_{(K,M)})$ in Eq.\eqref{eq:CostEquation}.
Note that the first term in Eq.\eqref{eq:CostEquation} is subdominant and just a special case of implementing $e^{-i(H_E + H_M)(\cdot)}$ for time $t_0/\alpha$ whose cost can be found below.

\paragraph{\bf{$\PREP$.}} Our implementation uses a first register that encodes a weighted history state of the form:
\begin{align}\label{eq:A1}
\PREP_{\sqrt{t_0/k!}} \ket{00\ldots0}=    \frac{1}{\sqrt{\beta}}\left(\left|00...0\right>+\sqrt{t_0}\left|10...0\right>+{\frac{t_0}{\sqrt{2}}}\left|110...0\right>+...+\sqrt{\frac{t^K_0}{K!}}\left|11...1\right>\right),
\end{align}
where $\beta$ is the normalization factor:
\begin{align}
    \beta = \sum_{k=0}^K \frac{t^k_0}{k!}.
\end{align}
As mentioned previously, in order to have exactly one step of AA, we choose $t_0$ such that $\beta$ is arbitrarily close to $2$.
This step uses one rotation and $K-1$ controlled rotations which can each be implemented with Clifford gates and 2 rotations, i.e., $2K-1$ rotations of arbitrary angles. An example for $K=6$ can be found in Fig.~\ref{fig:prep_k_hot} .

\begin{figure}
    \centering
    \includegraphics[width=0.2\linewidth]{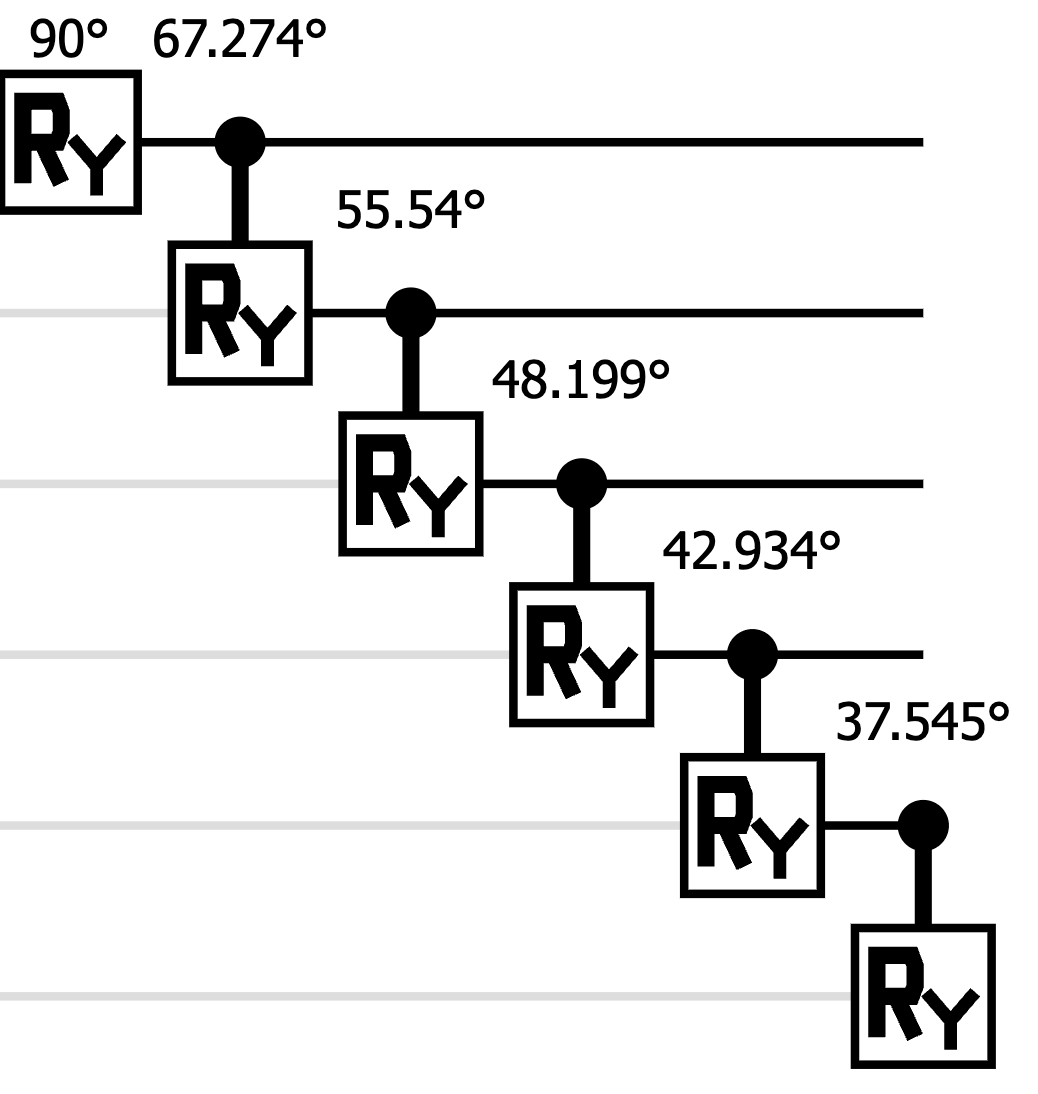}
    \caption{Implementation of PREP$_{\sqrt{t_0^k/k!}}$ for $K = 6$ and $t_0 = \ln2$}
    \label{fig:prep_k_hot}
\end{figure}

Then, we need $K$ registers of size $\lceil \log_2M \rceil$ that encodes the time steps.
We choose $M$ to be an integer power of $2$, for the sake of ease in compilation/resource cost.
Controlled on the $k$th qubit in the first register given in Eq.~\eqref{eq:A1}, we create an equal weight superposition $\frac{1}{\sqrt{M}} \sum^{M-1}_{m_k=0} \ket{m_k}$ state in the $k$th register of this set of time parameters.
This can be done with $K\log_2M$ controlled-Hadamard gates.
Each can be implemented with $2$ T-gates, hence it has a total cost of $2K\log_2M$ T-gates. 
Controlled on the $k$th qubit being in $\ket{0}$ state, the $k$th time register is set to $\ket{M-1}$, performed via C-NOTs this operation only requires Cliffords.
These last $K-k$ time registers will be set back to zero after the sort, importantly before calling $\BE_V$. 
The result at this stage is:

\begin{align}
   \propto \sum_{k=0}^{K} c_k \ket{1}^{\otimes k}\ket{0}^{\otimes K - k} \sum^{M-1}_{m_1, \ldots, m_k=0} \ket{m_1} \ldots \ket{m_k} \ket{M-1}  \ldots \ket{M-1}.
\end{align}

Then, the time registers are sorted in increasing order, i.e., they are put in order $m_1 \leq m_2 \leq \ldots \leq m_k$.
This is achieved by a bitonic sort, as in Ref.~\cite{batcher1968sorting}.
The number of comparators in the bitonic sort is upper bounded by
\begin{align}
\sum^{\lceil \log_2 K \rceil}_{k=1} k \lfloor K/2 \rfloor = \frac{\lceil \log_2 K \rceil + 1}{2} \left\lfloor \frac{K}{2} \right\rfloor \left\lceil \log_2K\right\rceil .
\end{align}
Each comparator uses $\log_2 M$ Toffolis and an additional ancilla that stores the result of the comparator.
The uncomputation of each Toffoli can be performed via phase-fixup measurement-based uncomputation similar to what is suggested in Ref.~\cite{gidney2018halving}.
Additionally, $\log_2M$ temporary ancillae are shared across all comparators.
Controlled on the comparator's result, the two $\log_2 M$ qubit-size registers are swapped, with an additional $\log_2M$ Toffolis.
Hence, this stage costs
\begin{align}
\log_2 M (\lceil \log_2 K \rceil + 1) \left\lfloor \frac{K}{2} \right\rfloor \left\lceil \log_2K\right\rceil
\end{align}
Toffolis and
\begin{align}
\frac{\lceil \log_2 K \rceil + 1}{2} \left\lfloor \frac{K}{2} \right\rfloor\left\lceil \log_2K\right\rceil  + \log_2 M
\end{align}
ancillae.

If we choose to not include the collisions, additional ancillae will be added to flag out the collisions. 
We do not need to apply AA since the probability of success of having zero collision is almost $1$ for large $M$. 
To flag the no-collision branch, it requires $K-1$ controlled-comparators, i.e., $(K-1)\log_2M$ Toffolis, and $K-1$ additional ancillas to record the results of these additional comparators. 
Finally, we AND the results of the individual comparators into a single flag qubit, which requires $(K-2)$ Toffolis and $K-2$ reusable/temporary additional ancilla.

This is the cost of one application of $\PREP$.
In the quantum circuit $U$, given in Fig.~\ref{fig:IP} and acting as in Eq.~\eqref{eq:TheUnitaryU}, $\PREP$ and $\PREP^\dagger$ are each called once. To construct $W_{(K,M)}(t_0/\alpha, \epsilon'/\mathsf{r})$, $U$ itself is called twice and $U^\dagger$ is called once.
Naively, the cost a subroutine $S$ has the same cost as $S^\dagger$ (= uncomputation of $S$), and hence Table~\ref{tab:ResourceCostIP} refers to the number of calls to the subroutine $S$ as the total number of calls to $S$ and $S^\dagger$.
However, one can implement the uncomputations of certain subroutines more efficiently.
For instance, in bitonic sort, the comparators can be uncomputed via phase fixup measurement-based uncomputation, which has been studied in detail in Ref.~\cite{luongo2024measurement}, which costs on average half of the Toffolis of the original subroutine itself.

\paragraph{\bf{SELECT.}}
The select operators consist of the time evolution operator for the mass and electric terms and a block encoding of the interaction terms.
The block encoding is controlled on the history state register and the time evolution only needs to be controlled on the time steps registers. 
Note that since the operators are of the form $e^{-it_0m_j/(\alpha M)}$BE$_{V/\alpha} e^{it_0 m_j/(\alpha M)}$, we will compute the difference between $m_j$ and $m_{j+1}$ and reduce by close to half the number of rotations required for the mass and electric terms. This costs $\log_2M - 1$ Toffolis and $\log_2M - 1$ reusable ancillas per addition or subtraction. In total, there are $2(K-1)$ additions and subtractions. 
Moreover, in order to multiply the block-encodings with minimal additional ancillae, a compression gadget is used.
Specifically, we use the unary ($k$-hot to be specific) version of the compression gadget given in Ref.~\cite{fang2023time}.
See Fig.~\ref{fig:IP}. 
The compression gadget requires $K$ many $(N+2)$-controlled Toffolis, where the controls are on the block-encoding ancillae (consisting of $N+2$ qubits). 
This results in a total of $K(N+1)$ Toffolis and $N+1$ reusable ancillae in addition to the $K$ ancillae used as the ``counter" of the compression gadget.
Hence, in addition to the subroutines we study below, $K(N+1) + 2(K-1)(\lceil \log_2M \rceil-1)$ additional Toffolis and $\max(\lceil\log_2M\rceil - 1, N+1)$ temporary ancillae are used.

The subroutine $\BE_{V/\alpha}$ is implemented as a standard $\PREP_{\BE}^\dagger-\SEL_{\BE}-\PREP_{\BE}$ method.
For the purpose of efficient implementation, we put the control on the $\PREP_W$ (and not on $\SEL_\BE$), due to the use of the unary encoding in the $\PREP_W$ registers.
This register creates a $W$-state that indexes the link $r$ on which $V$ acts.
This takes $N-1$ qubits.
For the number of links ($N-1$) that is an integer power of $2$, the quantum circuit for the controlled-$\PREP_W$ is given in Fig.~\ref{fig:c-prep_w}. 
This circuit costs $N-2$ c-Had gates, which can be implemented with $2$ T-gates each.
There is an additional $\PREP'$ which creates an equal superposition over $3$ qubits for block-encoding the interaction term, consisting of $8$ terms, for a fixed link $r$.
The complete subroutine $\PREP_{\BE}= \PREP_W \otimes \PREP'$ costs $2 (N-2)$ many T-gates.
$\SEL_{\BE}$ consists of gates that select between the linear combination of unitaries given in Eq.~\eqref{eq:LCUHint}.
We implement the circuit given in Ref.~\cite{rajput2022Hybridized} (see Section~5.2, Fig.~4), with a slight modification: instead of controlling $\PREP'$, we control the phase gates, i.e., $S$ and $CZ$ gates acting on the last qubit. 
The subroutines $U$ are incrementers of size $\eta$ and the subroutines $Q$ are NOT gates, and we have an additional control-$S$ and a $CCZ$ gate.
Then, $\SEL_{\BE}$ costs $(N-1)(\eta-1) + 1$ Toffolis, $4N + 3$ T-gates, and $\eta - 1$ temporary ancillae.
Hence, in total, $\BE_{V/\alpha}$ costs $8N + 4(N-1)(\eta-1) - 1$ T-gates, derived as twice the cost of $\PREP_\BE$ and once the cost of $\SEL$, and additional $\eta - 1$ ancillae.\\

\begin{figure}
    \centering
    \includegraphics[width=0.4\linewidth]{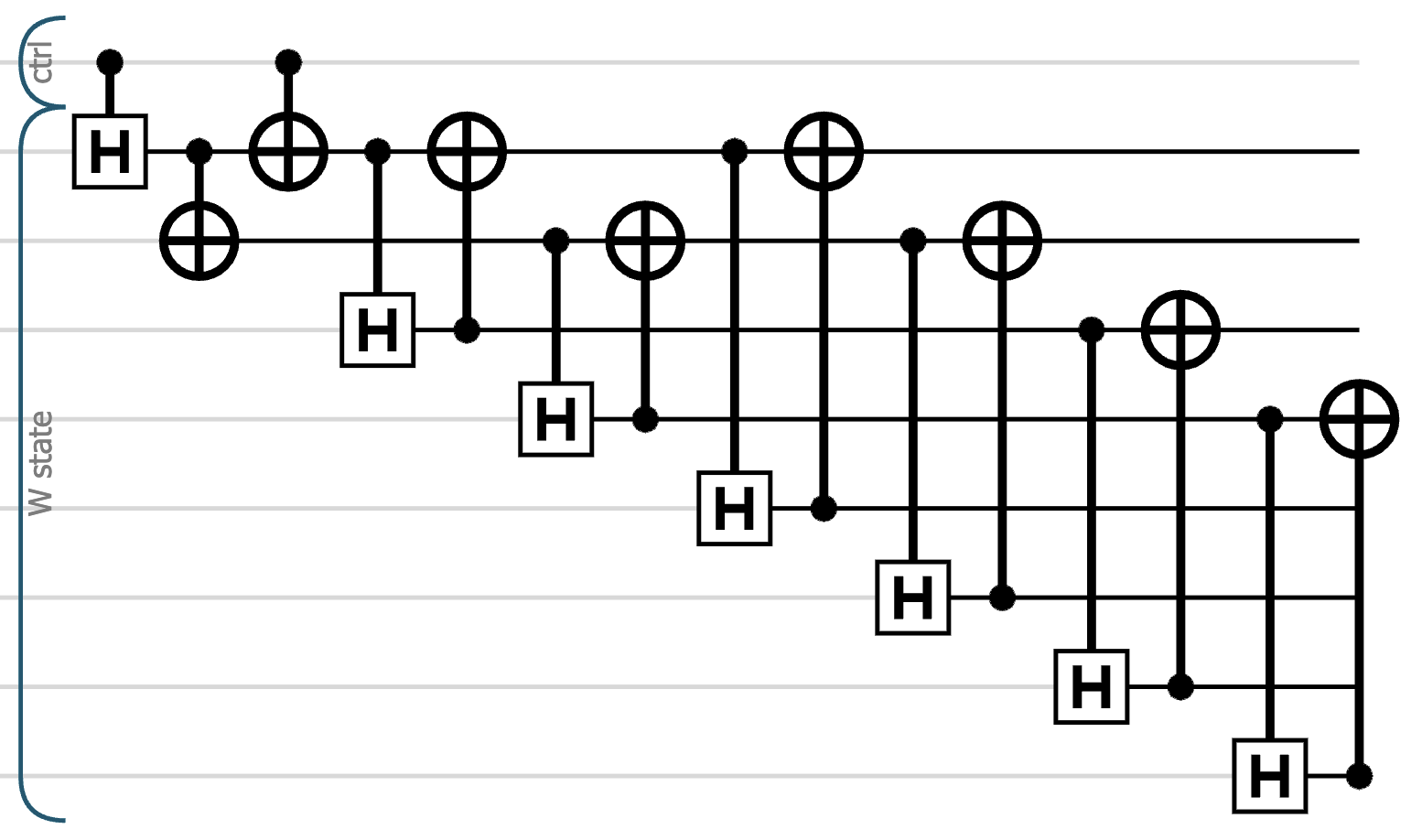}
    \caption{Controlled-$\PREP_W$ when the number of links, i.e., $N-1$, is a power of two}
    \label{fig:c-prep_w}
\end{figure}

We implement the fast-forwardable $e^{-i(H_E + H_M)s}$ for a given $s$, by implementing $e^{-i(H_M)s}$ and $e^{-i(H_E)s}$ in parallel, given that they act on different part of the system.
Note that the duration of the time evolution $s$ is stored in a quantum register, hence both $e^{-i(H_M)s}$ and $e^{-i(H_E)s}$ are implemented controlled on the value of that register.
The mass term as described in Eq.~\eqref{eq:JW_HM} can be re-write as follows:
\begin{align}
    e^{- i H_M s} = e^{-i s \mu / 2 \sum_{r=1}^{N}(-1)^r\mathds{1}}e^{i s \mu / 2 \sum_{r=1}^{N}(-1)^rZ}.
\end{align}
Since the block-encoding is conjugated by the forward and backward time evolution operator, the terms proportional to the identity cancel, leaving us to implement only $e^{i s \mu / 2 \sum_{r=1}^{N}(-1)^rZ}$. 
To do so, we first compute the number of fermions, $n_f$.
This can simply be done by first applying $\otimes_{r: \textrm{odd}}X_r$ on positron registers, computing the Hamming weight of the fermion registers~\cite{gidney2018halving}, and then applying $\otimes_{r: \textrm{odd}}X_r$ on positron registers. The cost of computing $n_f$ is $N - \omega(N)$ Toffolis, where $\omega(x)$ is the Hamming weight of $x$. 

We then make use of phase gradient addition~\cite{sanders2020compilation, nam2019low} in order to realize $e^{-i m_k \mu n_f t_0/ (2 \alpha M)}$ coherently depending on $m_k$ and $n_f$.
We present two different strategies to do so. The first one, which we will call the PGA approach, effectively multiplies $m_k$ and $n_f$ in the phase by performing $\log_2 M$ phase gradient additions with the catalyst state :
\begin{align}
    \ket{\frac{\mu t_0}{2 \alpha}} = \frac{1}{\sqrt{M}} \sum_{k=0}^{M-1} e^{- i \pi k \mu t_0/(\alpha M)} \ket{k}.
\end{align}

\noindent The controls are on the discretized-time registers and the targets are on the resulting Hamming weight of the fermions.
The phase gradient addition is called $\log_2 M$ times, at a Toffolis cost of $\lfloor \log_2 (N) \rfloor + 1$ each time, and a single qubit rotations each time as well. 
In total, considering the uncomputation of the temporary registers,  the PGA approach has the following resource requirements: the number of Toffolis is at most $N - 1 + \lceil \log_2 M \rceil(\lfloor \log_2 (N) \rfloor + 1)$, the number of rotations is $\lceil \log_2 M \rceil$, and the number of temporary ancillae is $N + \lfloor \log_2 (N) \rfloor + 1$.

The second approach, which we call the Mult approach, performs the multiplication of $m_k$ and $n_f$ in the computational basis, and then performs one big phase gradient addition with the catalyst state:
\begin{align}
    \ket{\frac{\mu t_0}{2 \alpha}} = \frac{1}{\sqrt{M + N}} \sum_{k=0}^{N + M -1} e^{- i \pi k \mu t_0/(\alpha M)} \ket{k}.
\end{align}

\noindent The cost of multiplying a register of size $\log_2M$ with the register of size $\lfloor\log_2 N\rfloor + 1$ containing $n_f$ is $\log_2M (\lfloor\log_2 N\rfloor + 1) + 2 (\log_2M + \lfloor\log_2 N\rfloor + 1)$ Toffolis, and $\log_2M + \lfloor\log_2 N\rfloor + 1 + \max(\log_2M,\lfloor\log_2 N\rfloor + 1)$ ancillas~\cite{litinski2024quantum}.
The cost of the phase gradient addition is $\log_2M + \lfloor\log_2 N\rfloor + 1$ Toffolis, $1$ rotations and $\log_2M + \lfloor\log_2 N\rfloor + 1$ ancillae. 
In total, considering the uncomputation of the temporary registers, the Mult approach has the resource cost as follows: $N + 2\log_2M \lfloor\log_2 N\rfloor + 7\log_2M + 5\lfloor\log_2 N\rfloor + 4$ Toffolis, $1$ rotation and $N+ 2\log_2M + 2\lfloor\log_2 N\rfloor + 1$ ancillae.

Similarly, we can implement the evolution with the electric term in two ways.
The first approach, which we call PGA again, follows the implementation in Ref.~\cite{shaw2020quantum} (see Fig.~4 in Section~3.1), modified with the use of Hamming weight phasing to implement the parallel rotations.
This makes use of another catalyst state:
\begin{align}
    \ket{\frac{t_0}{\alpha}} = \frac{1}{\sqrt{\gamma}} \sum_{k=0}^{\gamma - 1} e^{- 2 i \pi k t_0/(\alpha \gamma)} \ket{k},
\end{align}
where $\gamma = 2^{\log_2M + \eta - 2} = 2 M \Lambda/ 4$.
There are in total $N-1$ links each of $\eta (= \log_2 (2 \Lambda))$ qubits. 
This means, that we repeat the phase gradient addition, in a similar fashion to the mass term, for the $N-1$ links and the $\eta$ many parallel rotations. 
These $\eta$ rotation ``towers'' are of decreasing size, from $j=\eta$ up to $j= 1$.
Excluding the final tower ($j=1$) consisting of only a single rotation, we perform a phase gradient addition of size $j$. 
Hence, the cost of the electric term is

\begin{align}
    (N-1)\lceil \log_2M \rceil \sum_{j=2}^{\eta} j = (N-1) \lceil \log_2M \rceil (\eta^2 + \eta - 2)/2
\end{align}
Toffolis, $(N-1)\lceil \log_2M \rceil \eta$ single-qubit rotations, and $\eta$ temporary ancillae.

The second approach, which we call Mult again, performs a square in the computational basis of the electric field, and then multiplies the result by the time register. 
Finally, a phase-gradient addition is performed.
The cost of squaring is $\eta(\eta - 1)$ Toffolis~\cite{su2021fault} and $\eta$ temporary ancillae and $2\eta$ qubits to store the result. 
We add the $(N-1)$-many (for each link) squared electric field values, to find $\sum_{r} E^2_r$.
This takes 
\begin{align}
\sum_{i=0} ^{\log_2N} (2\eta+i)(N-1)/(2^{i+1}) &= \frac{((-1 + N) (-2\eta \ln(4) + N \ln(4) + 2\eta N \ln(16) - \ln(4 N)))}{(N \ln(4))} \\
& \leq 4 \eta N
\end{align}
Toffolis.
The final register size is $2\eta + \lceil \log_2 N \rceil$.
Then, we multiply this with the time register of size $\log_2 M $.
This costs $\log_2 M (2\eta + \lceil \log_2 N \rceil) + 2 (\log_2M+2\eta + \lceil \log_2 N \rceil)$ Toffolis and results in a register of size $2\eta + \lceil \log_2 N \rceil + \log_2 M$ and $\max(\log_2M, 2\eta + \lceil \log_2 N \rceil)$ temporary ancillae. Finally, the phase gradient addition requires $2\eta + \lceil \log_2 N \rceil + \log_2 M$ Toffolis, $1$ rotation and $2\eta + \lceil \log_2 N \rceil + \log_2 M$ ancillae.
Hence, in total, considering the uncomputation of the temporary registers, the Mult approach has the resource cost as follows:
$N[4\eta^2 + 4\eta ]+ \log_2 M [4 \eta + 5 + 2 \lceil \log_2 N \rceil] + 
5 \lceil \log_2 N \rceil - 2\eta^2 + 12\eta
$ Toffolis, $1$ rotation, and $8 \eta  + 3\lceil \log_2N \rceil + 2\log_2M$ ancilla.

In the implementation of {\bf{SELECT}} of the Dyson series, the block-encoding of the interaction term is called $K$ times, and the fast-forwardable term $e^{-i(H_E + H_M)s}$ is called $K + 1$ times. {\bf{SELECT}} and {\bf{SELECT$^\dagger$}} are called in total 3 times in oblivious amplitude amplification. 
Hence, $\BE_{V/\alpha}$ is called $3K$ times, and  $e^{-iH_E (\cdot)}$ and $e^{-iH_M (\cdot)}$ are each called $3(K+1)$ times.

\paragraph{\bf{REFLECTION.}} In order to perform OAA, we must construct a reflection around the starting state and the success state.
\begin{align}
    W_{(K,M)}\left( \frac{t_0}{\alpha}, \frac{\epsilon'}{\mathsf{r}}\right) &= U R_0 U^\dagger R_0U 
\end{align}
where $U$ is given in Eq.~\eqref{eq:TheUnitaryU} and the cost is described by the subroutine above. 
The reflection $R_0 = \mathds{1} - 2P_0$ add a phase when all the ancillae (for the truncation, discretization and the counter registers) are in the $0$ state. 
This is perform by applying a multi-controlled-$Z$. 
Hence, the cost of reflection is $K(2+\lceil \log_2M \rceil) - 1$ Toffolis and  $K(2+\log_2M) - 1$ temporary ancillae, and it is called twice.

This concludes the resource estimation for the subroutines to implement the time-evolution of the Schwinger model in the interaction picture. 

\section{Trotter implementation and resource estimation}
\label{app:trotter}

In this section, we report our implementation for the Trotter-based simulation of the Schwinger model, and the associated resource estimation. We combine elements from~\cite{shaw2020quantum} and~\cite{kan2022simulating} along with our own optimization.

First, we consider the interaction term, which, upon Jordan-Wigner transformation, can be written as~\cite{kan2022simulating}
\begin{gather}
    H_{I} = x \sum_{r} (\sigma^+_{r} U_r \sigma^-_{r+1} + h.c.),
\end{gather}
where $\sigma^+ = \ketbra{1}{0}$ and $\sigma^- = \ketbra{0}{1}$ are the Pauli raising and lowering operators, respectively. 
Note that this decomposition does not lead to a linear combination of Pauli operators, which is more amenable to a block-encoding approach as given in Eq.~\eqref{eq:LCUHint}.
However, this decomposition is favorable for a Trotter implementation, as it halves the number of terms that arise from $H_I$~\cite{kan2022simulating}, compared to the scenario where we decompose into $H_I$ Pauli operators that are evolved individually. Next, we decompose $U$ as follows~\cite{kan2022simulating}:
\begin{equation}
    U_r = \sigma^+_{0} + U_r \sigma^+_{0} U^\dag_r,
\end{equation}
where $\sigma^+_0$ is the raising operator acting on the least significant bit of the bosonic register. Intuitively, this decomposition is correct because the first term increments even numbers by flipping the least significant bit from 0 to 1; the second term decreases odd numbers by one with $U^\dag$, increases the now even number by one with $\sigma^+_0$, and then adds one again by $U$. Inserting this into the interaction term, we get
\begin{equation}
    H_{I} = x \sum_{r} [\sigma^+_{r} \sigma^+_{0} \sigma^-_{r+1} +\sigma^+_{r} (U\sigma^+_{0} U^\dag) \sigma^-_{r+1} + h.c.].
\end{equation}
This decomposition allows us to decompose $H_I$ into two terms in the Trotter scheme, which we describe below. First, we recall that the simulation time is $t$ and $\eta = \lceil \log_2(2\Lambda)\rceil$ is the size of a bosonic register with a cutoff $\Lambda$. We write the second-order Trotter formula as
\begin{equation}
    S(t) = \left( \prod_{j=1}^l e^{-i H_j t/2\mathsf{r}} \prod_{j=l}^1 e^{-i H_j t/2\mathsf{r}} \right)^{\mathsf{r}},
\end{equation}
where $H = \sum_{j=1}^l H_j$ is an ordered decomposition of the Hamiltonian into $l$ terms.

We decompose the Hamiltonian into $l=6$ terms, where
\begin{gather}
    H_1 = H_E, \: H_2 = H_M, \\
    H_3= H_{1,e} = x \sum_{\text{even } r} [\sigma^+_{r} \sigma^+_{0} \sigma^-_{r+1}  + h.c.], \\
    H_4= H_{2,e} = x \sum_{\text{even } r} [\sigma^+_{r} (U\sigma^+_{,0} U^\dag) \sigma^-_{r+1} + h.c.], \\
    H_5= H_{1,o} = x \sum_{\text{odd } r} [\sigma^+_{r} \sigma^+_{0} \sigma^-_{r+1}  + h.c.], \\
    H_6= H_{2,o} = x \sum_{\text{odd } r} [\sigma^+_{r} (U\sigma^+_{0} U^\dag) \sigma^-_{r+1} + h.c.].
\end{gather}
For $R$ Trotter steps, $H_1$ and $H_2$ are evolved $\mathsf{r}+1$ times, $H_{3-5}$ are evolved $2\mathsf{r}$ times, and $H_6$ is evolved $\mathsf{r}$ times because contiguous evolutions of the same Hamiltonian can be combined into a single evolution. 

The circuit implementation of the Trotter formula $W(t, \epsilon_{\rm{rot}})$ implements an $\epsilon_{\rm{rot}}$ approximation of $S(t)$, i.e., 
\begin{align}
\|W(t, \epsilon_{\rm{rot}}) - S(t)\| \leq \epsilon_{\rm{rot}},
\end{align}
where each rotation is synthesized with accuracy $\epsilon_{\rm{rot}}/n_{\rm{rot}}$.
The Trotter error is given by
\begin{equation}
    \|e^{-iHt} - S(t)\| \leq  \epsilon_t = \frac{t^3 \rho}{\mathsf{r}^2},
\end{equation}
where $\rho$ is the Trotter commutator error bound~\cite{childs2021theory}.
The above adds up to a total error of at most $\epsilon= \epsilon_{rot} + \epsilon_t$.

We simply apply the bounds for U(1) lattice gauge theories at arbitrary spatial dimensions and with periodic boundary conditions from~\cite{kan2022simulating} to the Schwinger model with open boundary conditions (by removing irrelevant terms) to arrive at
\begin{align}
    \rho^{\textrm{nat}} &= \frac{1}{12} \left( {4Nm^2\sqrt{x} g} + \frac{Ng^4(4\Lambda^2-1)}{4\sqrt{x}g} + {10(N-1)(\sqrt{x}g)^3}  \right) \nonumber \\
    & + \frac{1}{24} \left(\frac{mg^2N(2\Lambda - 1)}{2} + {8mNxg^2} + {2Ng^3\sqrt{x}(2\Lambda + 1)} + {9(N-1)(\sqrt{x}g)^3} \right).
\end{align}

By removing the units, we equivalently have
\begin{align}
    \rho(x, \mu) &= \frac{1}{12} \left( {8Nx\mu^2} + {2N{x}(4\Lambda^2-1)} + {80(N-1)x^3}  \right) \nonumber \\
    & + \frac{1}{24} \left({2x\mu N(2\Lambda - 1)} + {32Nx^2\mu} + {16Nx^2(2\Lambda + 1)} + {72(N-1)x^3} \right),
\end{align}
when written in terms of only $x$ and $\mu$.
Asymptotically, this is given as
\begin{align}\label{eq:rho_asymptotic}
    \rho(x, \mu) &= Nx \mathcal{O}\left(\mu^2 + \Lambda^2 + x + \mu\Lambda + x \mu + x \Lambda \right) ,
\end{align}
where $N, m, x$, and $g$ are the number of lattice sites, base fermionic mass, lattice spacing, and bare gauge coupling. Then, the number of Trotter steps $\mathsf{r}$ is given by $\mathsf{r} = \lceil t^{3/2}\rho^{1/2} / \epsilon_t^{1/2} \rceil$. Now we proceed to describe our circuit implementation of the evolutions of $H_{1-6}$. 

\paragraph{\bf{ELECTRIC TERM.}}
We begin with $H_1 = H_E$, the electric term. Following~\cite{shaw2020quantum}, by noting that
\begin{equation}
    E^2 = \left(E+\frac{I}{2}\right)^2 - \left(E+\frac{I}{2}\right) + \frac{I}{4},
\end{equation}
$e^{-i t H_E}$ can be implemented, up to a global phase, as a product of $e^{-i(E+I/2)^2t}$, where
\begin{equation}
    (E+I/2)^2 = \frac{1}{12}(4^\eta-1)+\sum_{j=0}^{\eta-2}\sum_{k>j}^{\eta-1}2^{j+k-1}Z_j Z_k.
\end{equation}
As a result, each $e^{-i(E+I/2)^2t}e^{i(E+I/2)t}$, which is equivalent to $e^{-iE^2t}$ up to a global phase, can be implemented using the circuit shown in Fig.4 of Ref.~\cite{shaw2020quantum}. Instead of synthesizing the $R_z$ gates one by one, as done in~\cite{shaw2020quantum}, we implement them one layer at a time using a phase catalysis circuit in Fig.~(168) from Ref~\cite{Wang2024optionpricingunder}. Note the last layer is a single rotation gate and is implemented without using the phase catalysis circuit. Briefly, assuming access to a reusable catalyst state of the form $\otimes_{i=0}^{n-1}[R_z(2^i \theta)\ket{+}]$, this circuit applies $\otimes_{i=0}^{n-1}[R_z(2^i \theta)]$ using $n$ Toffoli gates and one $R_z$ gate. We synthesize the $R_z$ gates using the mixed fallback method in~\cite{kliuchnikov2023shorter}, the costs of which are amortized over the number of times the catalyst is used which scales linearly in the number of Trotter steps here.
The cost of this implementation is $(\mathsf{r}+1)(N-1)(\eta-1)(2\eta + 4)$ T-gates, $(2\eta-3)+(\mathsf{r}+1)(N-1)\eta$ $R_z$ gates, and $3\eta-3$ ancilla qubits, out of which $2\eta-3$ qubits are occupied by the catalyst state; we have used the fact that each Toffoli can be synthesized using four T-gates~\cite{PhysRevA.87.022328}, and that $\eta$ qubits are used (and reused whenever possible) for the applications of the phase catalysis circuit.

\paragraph{\bf{MASS TERM.}}
Note $e^{-iH_M t}$ can be implemented as $R_z$ gates, one per site, with angles of the same magnitude but of opposing signs depending on the parity of the site number. We conjugate every negative-angle $R_z$ gate with a pair of NOT gates, transforming $e^{-iH_M t}$ into a layer of $N$ same-angle $R_z$ gates, which we implement using catalyzed Hamming-weight phasing~\cite{gidney2018halving,kan2024resourceoptimized}: (i) compute the Hamming weight of the to-be-rotated, $N$-qubit register; (ii) use the phase catalysis circuit in Fig.~(168) from Ref~\cite{Wang2024optionpricingunder} to implement a phase kick-back operation; (iii) uncompute (i). Catalyzed Hamming weight phasing costs $N - w(N) + \lfloor \log_2(N) \rfloor + 1$ Toffoli gates and clean ancilla qubits, where $w(N)$ is the Hamming weight of $N$, and 1 $R_z$ gate per application; it further requires $\lfloor \log_2(N) \rfloor + 1$ ancilla qubits to store a catalyst state of the form $\otimes_{i=0}^{\lfloor \log_2(N)\rfloor+1}[R_z(2^i \theta) \ket{+}]$.

\paragraph{\bf{INTERACTION TERMS.}}
Finally we consider the interaction terms $H_{3-6}$. Each term consists of a layer of $N/2$~\footnote{We assume $N$ is even for brevity, but our analysis can be simply adapted to the case of odd $N$ by appropriate use of ceiling and floor functions.} unitaries either of the form $e^{i\theta (\sigma^+ \sigma^+ \sigma^- + h.c.)}$ or that but conjugated by a pair of adders. $e^{i\theta (\sigma^+ \sigma^+ \sigma^- + h.c.)}$ can be constructed as~\cite{Wang2021resourceoptimized,kan2022simulating}
\begin{equation}
    \includegraphics[height=2.5cm]{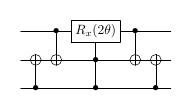}
\end{equation}
and the doubly-controlled $R_x(2\theta)$ can be compiled into
\begin{equation}
    \includegraphics[height=2.5cm]{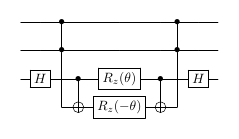}
\end{equation}
where the third bit from the top is the target. The circuit is largely taken from~\cite{Wang2021resourceoptimized}, except we uncompute the Toffoli using a Toffoli-free, measurement-feedforward circuit. Once we negate the negative angles via NOT gates, we get a layer of $N$ same-angle $R_z$ gates, which we implement using catalyzed Hamming weight phasing~\cite{kan2024resourceoptimized}, as in the mass term. Catalyzed Hamming weight phasing costs $N - w(N) + \lfloor \log_2(N) \rfloor + 1$ Toffoli gates and 1 $R_z$ gate per application. 

In summary, evolving $H_3$ or $H_{5}$ costs $4(3N/2 - w(N) + \lfloor \log_2(N) \rfloor + 1)$ T-gates and 1 $R_z$ gate, and evolving $H_4$ or $H_6$ costs an extra $4(2\eta - 2)$ T-gates for the adders. Furthermore, $H_3$ or $H_5$ require each $3N/2-w(N)+\lfloor \log_2(N) \rfloor + 1$ clean ancilla qubits and $\lfloor \log_2(N) \rfloor + 1$ to store the catalyst state. For $H_4$ or $H_6$, the clean ancilla qubit count is $\max \{3N/2-w(N)+\lfloor \log_2(N) \rfloor + 1, \eta-1\}$ to account for the extra adders. Note that since $H_6$ is the final term and is evolved for twice as long, its catalyst states have twice the angle compared to those of $H_{3-5}$, i.e., $\otimes_{i=0}^{\lfloor \log_2(N) \rfloor} [R_z(2^{i+1}\theta) \ket{+}]$ vs $\otimes_{i=0}^{\lfloor \log_2(N) \rfloor} [R_z(2^i \theta) \ket{+}]$. Instead of preparing two $(\lfloor \log_2(N) \rfloor + 1)$-qubit states, we can just prepare a $(\lfloor \log_2(N) \rfloor + 2)$-qubit state, i.e., $\otimes_{i=0}^{\lfloor \log_2(N) \rfloor + 1} [R_z(2^i \theta) \ket{+}]$, and use the first and last $\lfloor \log_2(N) \rfloor +1$ qubits for $H_{3-5}$ and $H_6$, respectively.

\end{document}